\newcommand{\dstirling}[2]{\genfrac{[}{]}{0pt}{0}{#1}{#2}}
\numberwithin{equation}{section}
\theoremstyle{definition}
\newtheorem{theorem}{Theorem}[section]
\newtheorem{corollary}[theorem]{Corollary}
\newtheorem{proposition}[theorem]{Proposition}
\newtheorem{definition}[theorem]{Definition}
\newtheorem{lemma}[theorem]{Lemma}
\theoremstyle{remark}
\newtheorem{remark}{Remark}
\newtheorem{algorithm}{Algorithm}
\DeclareMathAlphabet{\pazocal}{OMS}{zplm}{m}{n}
\newcommand\qbin[3]{\left[\begin{matrix} #1 \\ #2 \end{matrix}\right]_{#3}}
\newcommand\bbql[1]{\textbf{v}_{m}^{\textnormal{L}}(#1)}
\newcommand\bbqlp[1]{\textbf{v}_{p}^{\textnormal{L}}(#1)}
\newcommand\ballL[1]{\mathbf{B}_{m}^{\textnormal{L}}(#1)}
\newcommand{\numberset}{\mathbb}
\newcommand{\Z}{\numberset{Z}}
\newcommand{\F}{\numberset{F}}
\newcommand{\mV}{\mathcal{V}}
\newcommand{\mC}{\mathcal{C}}
\newcommand{\mF}{\mathcal{F}}
\newcommand{\mW}{\mathcal{W}}
\newcommand{\mB}{\mathcal{B}}
\newcommand{\mE}{\mathcal{E}}
\renewcommand{\longrightarrow}{\to}
\newcommand{\wL}{w^{\textnormal{L}}}
\newcommand{\Zm}{\Z_{m}}
\newcommand{\Zps}{\Z_{p^{s}}}
\newcommand{\Zp}{\Z_{p}}
\newcommand*{\myproofname}{Proof of Lemma 3.7}
\renewcommand*\env@matrix[1][*\c@MaxMatrixCols c]{%
  \hskip -\arraycolsep
  \let\@ifnextchar\new@ifnextchar
  \array{#1}}
\title{
\textbf{On the Number of $t$-Lee-Error-Correcting Codes}}
\author[1]{Nadja Willenborg}
\author[1]{Anna-Lena Horlemann}
\affil[1]{University of St.Gallen, Switzerland}
\author[2]{Violetta Weger}
\affil[2]{Technical University of Munich, Germany}
\date{}
\begin{document}

\maketitle

\thispagestyle{empty}
	
\begin{abstract}
 We consider $t$-Lee-error-correcting codes of length $n$ over the residue ring $\Zm := \mathbb{Z}/m\mathbb{Z}$ and determine upper and lower bounds on the number of $t$-Lee-error-correcting codes. We use two different methods, namely estimating isolated nodes on bipartite graphs and the graph container method. The former gives density results for codes of fixed size and the latter for any size. This confirms some recent density results for linear Lee metric codes and provides new density results for nonlinear codes. To apply a variant of the graph container algorithm we also investigate some geometrical properties of the balls in the Lee metric.
\end{abstract}

\section{Introduction}
A classical discipline in coding theory is the study of the behaviour of random codes and computing the density of optimal codes with respect to some bounds. 
%In the most classical setting  one endows a vector space with the Hamming metric. 
For Hamming metric codes over $\mathbb{F}_{q}$ we have several famous results, such as \cite{GVhigh, GVhigh2}, where the authors show that if we let the length $n$ grow, random codes attain with high probability the Gilbert-Varshamov bound \cite{gilbert1952comparison, varshamov1957estimate}. On the other hand, it is well known, that if we let $q$ grow, random codes are with high probability Maximum Distance Separable (MDS) codes, i.e., they attain the Singleton bound. Similar results have been obtained in the rank metric, namely for $n$ growing random rank-metric codes attain the rank-metric Gilbert-Varshamov bound \cite{loidreau} and we know that the density of Maximum Rank Distance (MRD) codes
depends on whether they are $\mathbb{F}_{q^m}$-linear or $\mathbb{F}_q$-linear.
For the former it was shown in \cite{neri2018genericity} that MRD codes are dense, while for the latter case 
it was shown that MRD codes are neither dense nor sparse \cite{antrobus2019maximal,byrne, gluesing, gruica2022common}. In this paper, we investigate such questions in a new setting: modules over integer residue rings endowed with the Lee metric.
The Lee metric has been introduced to cope with phase modulation, error correction in constrained memories, or in cryptographic applications (see for example \cite{fuleeca}).
Density questions for such Lee metric codes in $\mathbb{Z}_{p^s}^n$ have already been studied in \cite{byrne2022density,byrne2023bounds}. In particular, it has been shown that for $n$ growing, random codes attain the respective Gilbert-Varshamov bound with high probability and that optimal linear codes with respect to Lee metric Singleton bounds \cite{shiromoto, alderson} are sparse, for $n,s$ or $p$ growing. Sparseness has also been shown for optimal linear codes with respect to the Lee metric Plotkin bound. In this paper we confirm these results and use techniques from graph theory to obtain new results for nonlinear codes.

We consider two different techniques: bipartite graphs and the container method. 
The former technique is based on bounding the number of some non-isolated nodes in certain bipartite graphs whose right node set consists of codes in $\Zm^{n}$ of cardinality $S$ and whose left node set contains elements from $\Zm^{n}$ whose minimum weight is upper bounded. From this approach we first derive upper and lower bounds on the density of Lee metric codes. These bounds then give us asymptotic results in the nonlinear case, which was previously unknown.

The second technique uses a graph $G=(V,E)$ with node set $V=\Zm^{n}$ and edge set $E=\{(x,y) \in \Zm^{n} \times \Zm^{n}: x \neq y,  d^{\textnormal{L}}(x,y)
 \leq 2t \}$, where $ d^{\textnormal{L}}(x,y)$ is the Lee distance between $x$ and $y$. Note that $G$ is also the $2t$ graph power of the Cartesian product $C_{m} \Box C_{m} \Box \cdots \Box C_{m}$, where $C_{m}$ is the cycle of length $m$. The node set of this Cartesian product is $\Zm^n$ and there is an edge between $x,y \in \Zm^n$ if and only if their Lee distance is $1$ from each other, see also \cite{abiad2022independence, kim20112}. We show that when fixing $m$, the number of $t$-Lee-error-correcting codes of length $n$ is at most $2^{(1+o(1))H_{m}(n,t)}$, for all $t \leq 10\sqrt{\theta_{m} n}$. This problem has already been considered in \cite{dong2022number} and \cite{balogh2016applications} for Hamming metric codes, and also in \cite{balogh2016applications} for codes endowed with the Enomoto-Katona distance (also called transportation distance). The key tool for estimating $t$-error-correcting codes are so-called `container results' which state that the independent sets of a graph lie only in a small number of subsets of the node set which are `almost independent sets'. The estimate of $t$-error-correcting codes with distance $2t+1$ is equivalent to the estimate of independent sets in our graph. Although our graph does not have the properties desired by Sapozhenko, i.e., the graph is almost regular and an expander, we can transfer the saturation estimates from \cite{dong2022number} to the Lee metric and then apply the container result from \cite{balogh2016applications}. To do this, we first derive estimates of the volumes of balls in the Lee metric with different radii. We also characterize the intersection of two balls with equal Lee radii with distance $\ell$ between their respective centers. In particular we show that the size of these intersections is monotonically decreasing in $\ell$. We then give an upper bound on these quantities, including only the Lee volume for which explicit formulae are known, dating back to 1982 \cite{astola1982lee}. All these results are new to the best of our knowledge. The consideration of the intersection of balls is interesting for several fundamental aspects of coding theory. Our intersection results determine the packing and covering properties of Lee metric codes which are important for decoding and performance analysis of these codes. As recently addressed in \cite{ott2023geometrical}, finding upper and lower bounds on the size of the intersection of two balls that can
be computed fast would be helpful for
considering covering properties of codes. We believe that our techniques could therefore be of interest in this context.

The outline of the article is as follows. We start with the preliminaries in Section \ref{sec:pre}, including the Lee weight, the Lee volume and some useful bounds or asymptotic relations. In the second part of the preliminaries we examine the ratios of certain Lee volumes and also study the intersections of Lee spheres. Section \ref{sec:numberof} begins with the supersaturation estimates, which we then use in the container algorithm. We also show how we can use this method to estimate $t$-error-correcting codes of a fixed cardinality. This issue has already been studied in \cite{gruica2022densities} for codes in the Hamming, rank and sum-rank metric. There, the counting problem was modelled in a bipartite graph satisfying certain regularity conditions and the number of codes was estimated by the number of isolated nodes in the graph. We  modify this method slightly for the Lee metric and present it again in Section \ref{sec:bipartite} Finally, we conclude this paper in Section \ref{sec:comp}.

\section{Preliminaries} \label{sec:pre}
Throughout this article let $\mathbb{N}$ be the set of positive integers, $\Zm :=  \{0,1,\dots,m-1\}$ denotes the ring of integers modulo $m$ where, by abuse of notation, each residue class $i+m\mathbb{Z}$ is denoted by its representative $0\leq i \leq m-1$. By $P$ we denote the set of primes and by $[n]$ the set of  positive integers up to $n$. We omit ``$p \in P$'' when writing $p \to +\infty$. 
For nonnegative quantities depending on $n$, we write $f(n) \lesssim g(n)$ if there is some positive constant $C$ such that $f(n) \leq Cg(n)$ for all sufficiently large $n$.
We write $[n]$ to denote the set of positive integers up to $n$.
Given a finite set $X$, we write $\binom{X}{t}$ for the collection of $t$-element subsets of $X$, and $\binom{X}{\leq t}$ for the family of subsets of $X$ with at most $t$ elements. 
The following identities, respectively upper bound, involving the binomial coefficient are well known 
and will be useful later on.
\begin{lemma}\label{lem:binomprelim} (cf. \cite[Chapter 5]{graham1989concrete}) 
    Let $n \in \mathbb{N}$, $k \geq 0$ be an integer. Then
    \begin{enumerate}
        \item $\binom{n}{k}= \frac{n}{k}\binom{n-1}{k-1}$,
        \item $\binom{n-1}{k}= \frac{n-k}{n}\binom{n}{k}$,
        \item $\binom{n}{k}= \frac{n-k+1}{k}\binom{n}{k-1}$,
        \item $ \left( \frac{n}{k}\right)^{k} \leq \binom{n}{k} \leq \left( \frac{en}{k}\right)^{k}.$
    \end{enumerate}
\end{lemma}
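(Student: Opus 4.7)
The plan is to verify each of the four items directly from the closed-form $\binom{n}{k} = \frac{n!}{k!(n-k)!}$, using the standard conventions $\binom{n}{k}=0$ for $k<0$ or $k>n$ and $0!=1$. Since the lemma is cited from Graham--Knuth--Patashnik and is purely a compendium of standard facts, I do not expect any genuine obstacle; the purpose of writing out a proof is simply to make the paper self-contained and to pin down the edge-case conventions used later.

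For (1), I would factor a single $n$ out of $n!$ and a single $k$ out of $k!$ to rewrite $\binom{n}{k} = \frac{n\cdot (n-1)!}{k\cdot(k-1)!\,(n-k)!} = \frac{n}{k}\binom{n-1}{k-1}$. Items (2) and (3) are the same bookkeeping from different angles: in (2), multiply and divide $\binom{n-1}{k}$ by $n$ and recognize the factor $(n-k)$ hidden in $(n-k)!/(n-1-k)!$; in (3), shift $k\mapsto k-1$ in the denominator and absorb the leftover factor $(n-k+1)$ from $(n-k+1)!/(n-k)!$ into the numerator. Each case requires a brief separate check at $k=0$ (where $\binom{n-1}{k-1}$ is zero by convention) and at $k=n$.

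For item (4), the lower bound follows from the product representation
\[
\binom{n}{k} \;=\; \prod_{i=0}^{k-1} \frac{n-i}{k-i},
\]
combined with the elementary inequality $\frac{n-i}{k-i}\geq \frac{n}{k}$ valid for $0 \leq i \leq k-1 \leq n-1$; cross-multiplying reduces this to $i(n-k)\geq 0$, which is immediate. For the upper bound, I would combine the crude estimate $\binom{n}{k}\leq n^{k}/k!$ with the familiar inequality $k!\geq (k/e)^{k}$ (which itself follows from isolating the $k^{k}/k!$ term in the Taylor series of $e^{k}$). Multiplying these gives $\binom{n}{k}\leq (en/k)^{k}$.

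The only points worth flagging in the write-up are the convention $0^{0}:=1$ needed to make (4) meaningful at $k=0$, and the usual $k\leq n$ hypothesis in the lower bound of (4); beyond these, the whole lemma is a few lines of factorial manipulation.
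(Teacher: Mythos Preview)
Your proposal is correct and handles all the edge cases cleanly. The paper itself does not supply a proof of this lemma at all: it is stated with a ``cf.\ \cite[Chapter 5]{graham1989concrete}'' and used as a black box throughout, so your write-up simply fills in what the authors left to the reference.
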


\subsection{Lee Metric Codes}

We briefly recall the Lee metric and the volume of its balls. 

\begin{definition} %(see~\cite{golomb1970perfect})
  For $x  \in\Zm$ we define the \emph{Lee weight} to be
  \begin{align*}
      \wL : \Zm \rightarrow \left\{0, \dots , \left\lfloor \frac{m}{2} \right\rfloor \right\}, \quad x \mapsto \min\{x, m-x\}.
  \end{align*}
  We denote the average Lee weight over $\Zm$ by $\theta_{m}$ and extend the weight additively to $\Zm^{n}$, so that the Lee weight for $x \in \Zm^{n}$ is 
  \begin{align*}
      \wL(x):= \sum_{i=1}^{n}\wL(x_{i}).
  \end{align*}
  We define the \emph{Lee distance} of $x,y \in\Zm^{n}$ as 
  \begin{align*}
      d^{\textnormal{L}}(x,y):= \wL (x-y).
  \end{align*}
  A \emph{(Lee metric) code} is a subset $\mC \subseteq \Zm^n$. When $|\mC| \geq 2$ its \emph{minimum Lee distance} is$$d^{\textnormal{L}}(\mC) := \min\{ d^{\textnormal{L}}(x,y) \colon x,y \in \mC, \, x \ne y\}.$$
  A code $\mC$ is \emph{linear} if it is a finitely generated submodule of $\Zm^{n}$. The smallest number of elements in $\mC$ which generate $\mC$ as a $\Zm$-module is called the \emph{rank} of $\mC$. Since we restrict to $\Zp^{n}$ in the linear case all linear codes we consider are \emph{free}, i.e., the size of any generating set of $\mC$ determines the rank of $\mC$.
  \end{definition}
  Note that for $m=2$ and $m=3$ the Hamming distance and the Lee distance are identical by definition.
\begin{definition}
        The \emph{Lee metric ball} in $\Zm^{n}$ of radius $0 \leq r \leq \lfloor m/2 \rfloor$ with center $c \in \Zm^{n}$ is $$\ballL{n,r,c} := \{ y \in \Zm^{n}: d^{\textnormal{L}}(x,y) \leq r \}.$$ Its volume is independent of its center and we denote it by $ \bbql{n,r}$. 
\end{definition}
\begin{definition}
    For any vector $z \in \Zm^{n}$ we define its  \emph{Lee composition} $\ell (z)$ as $$\ell  (z):=(\ell_{0}(z), \ell_{1}(z), \dots,  \ell_{\lfloor m/2\rfloor}(z)) ,$$ where %the $k_{i}(z)$ are given by 
$$\ell_{i}(z):= |\{j \in [n] :  \wL (z_{j})=i \} |. $$ 
Clearly, $\sum_{i=0}^{\lfloor m/2 \rfloor} \ell_i(z)  = n$ and $\sum_{i=0}^{\lfloor m/2\rfloor} i \ell_i(z) = w^\textnormal{L}(z).$
We denote by $\mathcal{L}_{m}^{n}:=\{\ell_{\mathbf{0}}, \dots, \ell_{\mathbf{\alpha}}\}= \{ \ell (z) : z \in \Zm^{n} \}$ the set of all Lee compositions, where $\alpha +1$ is the number of different Lee compositions in $\mathbb{Z}_m^n$ and $\ell_{\mathbf{0}}=\ell (0)=(n,0, \dots, 0).$ 
We then define the \emph{Lee association scheme} such that $R=(R_{\ell_{\mathbf{0}}},\dots, R_{\mathbf{\ell_{\alpha}}})$ forms a partition on $\Zm^{n} \times \Zm^{n}$ by  $$(x,y) \in R_{\mathbf{\ell_{i}}} \Leftrightarrow \ell (x-y)=\ell_{\mathbf{i}},$$
while satisfying the axioms $1.-4.$, see \cite[Section 2.1]{sole1991parameters}:
\begin{enumerate}
    \item $R_{\ell_{\mathbf{0}}}= \{(x,x) \in \Zm^{n} \times \Zm^{n} : x \in \Zm^{n}\}$.
    \item $R_{\ell_{\mathbf{i}}}^{-1} = \{(y,x) \in \Zm^{n} \times \Zm^{n} : (x,y) \in R_{\ell_{\mathbf{i}}}\} = R_{\ell_{\mathbf{j}}}$ for some $\mathbf{j} \in \{\mathbf{0}, \dots, \mathbf{\alpha}\}$.
    \item The size of $\{z \in \Zm^{n} : (x,z) \in R_{\ell_{\mathbf{i}}} \, \text{and} \, (z,y) \in R_{\ell_{\mathbf{j}}}\}$ is described by a number $p_{i,j}^{k}$ which depends on $k$, but not on $(x,y) \in R_{\ell_{\mathbf{k}}}.$
    \item For all $i,j,k \in \{0, \dots, \alpha\}$ we have $ p_{i,j}^{k}= p_{j,i}^{k}.$
\end{enumerate}
\end{definition}
\begin{remark}
    If $(x,y)\in R_{\ell_{\mathbf{i}}}$, then the Lee distance between $x$ and $y$ is 
$$d^{\textnormal{L}}(x,y) = \ell_{\mathbf{i},1}(x-y)+2\ell_{\mathbf{i},2}(x-y)+ \dots +\left\lfloor \frac{m}{2} \right\rfloor \ell_{\mathbf{i},\lfloor m/2 \rfloor}(x-y).$$
This directly implies that the Lee metric is constant on the classes of the scheme. Let the classes of pairs $(x,y)$ with  $d^{\textnormal{L}}(x,y)=j$ be denoted by $R_{\ell_{\mathbf{j_1}}}, \dots , R_{\ell_{\mathbf{j_s(j)}}},$ where $\mathbf{s(j)}$ is the number of different compositions having weight $j$.
This number $\mathbf{s(j)}$ corresponds to the number of weak compositions of $j$ into $n$ parts such that each part is at most $\lfloor m/2 \rfloor.$ 
This number can be computed as
$$ \mathbf{s(j)} = \sum_{\substack{(r,s) \in \mathbb{N}^{2} \\ r+s\lfloor m/2 \rfloor=j}}(-1)^s \binom{n}{s} \binom{n+r-1}{r},$$
see, e.g., \cite[Chapter 1]{stanley2011enumerative}.
Thus, 
$$\mid \mathcal{L}_m^n\mid = \alpha+1 =\sum_{j=0}^{n \lfloor m/2 \rfloor} \mathbf{s(j)} .$$
\end{remark}

By axiom $3.$ we obtain: 
\begin{lemma} \label{lem:scheme}
    Let $(x,y) \in \Zm^{n} \times \Zm^{n}$ such that $d^{\textnormal{L}}(x,y)=k$. Then the number of vectors $z \in \Zm^{n}$ with $d^{\textnormal{L}}(x,z) \leq r$ and $d^{\textnormal{L}}(y,z) \leq r$ is equal to 
    $$\sum_{i=0}^{r}\sum_{j=1}^{\mathbf{s(i)}}\sum_{\ell=1}^{\mathbf{s(i)}}p_{i_{j},i_{\ell}}^{k},$$
    where $p_{i,j}^{k}:=|\{z \in \Zm^{n}: (x,z) \in R_{\ell_{\mathbf{i}}} \wedge (y,z) \in R_{\ell_{\mathbf{j}}}\}|$ which depends on $k$, but not on the pair $(x,y) \in R_{\ell_{\mathbf{k}}}$.
\end{lemma}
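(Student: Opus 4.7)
The plan is to derive this counting formula by decomposing the set of admissible $z$ according to the pair of composition classes to which $(x,z)$ and $(y,z)$ belong in the Lee association scheme, and then invoking axiom~3. Concretely, define
\begin{equation*}
S(x,y) := \{z \in \Zm^{n} : d^{\textnormal{L}}(x,z) \leq r \text{ and } d^{\textnormal{L}}(y,z) \leq r\}.
\end{equation*}

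The first key observation, coming directly from the preceding remark, is that the Lee distance is constant on each class $R_{\ell_{\mathbf{a}}}$ and equals the weight of the associated composition $\ell_{\mathbf{a}}$. Hence $d^{\textnormal{L}}(x,z)\leq r$ is equivalent to $(x,z) \in R_{\ell_{\mathbf{a}}}$ for some composition $\ell_{\mathbf{a}}$ of weight at most $r$, and analogously for $(y,z)$. This already turns the distance conditions into membership conditions for classes of the scheme.

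I would then stratify the compositions of weight at most $r$ by their weight. For each $i \in \{0, 1, \dots, r\}$ there are, by the formula in the remark, exactly $\mathbf{s}(i)$ Lee compositions of weight $i$; enumerate them by an index $j = 1, \dots, \mathbf{s}(i)$, writing $\ell_{\mathbf{i_j}}$ for the $j$-th composition of weight $i$. The set $S(x,y)$ then decomposes as a disjoint union over admissible pairs of class indices on the two sides, and by axiom~3 the block corresponding to $(\mathbf{i_j}, \mathbf{i_{\ell}})$ has cardinality exactly $p_{i_j, i_\ell}^{k}$, where $k$ is the class index of $(x,y)$. Summing these cardinalities over the partition yields the stated formula.

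The hard part is essentially non-existent: axiom~3 produces the correct summand by definition, and its built-in independence on the particular representative $(x,y) \in R_{\ell_{\mathbf{k}}}$ is exactly what makes the lemma well-posed. The only point requiring care is the bookkeeping, namely verifying that the iterated sum indexes each pair of admissible composition classes exactly once; this is a routine consequence of the weight stratification together with the definition of $\mathbf{s}(i)$, and no further calculation is needed.
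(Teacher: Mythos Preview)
Your approach is exactly the paper's: the lemma is introduced with the single line ``By axiom~3.\ we obtain,'' and no further proof is given. Decomposing $S(x,y)$ according to the pair of scheme classes of $(x,z)$ and $(y,z)$ and invoking axiom~3 is the intended argument.

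However, the ``routine bookkeeping'' you wave away does not in fact reproduce the displayed formula. The triple sum carries a single weight index $i$, with both inner indices $j$ and $\ell$ ranging over the $\mathbf{s(i)}$ compositions of weight $i$; this enumerates only those $z$ for which $d^{\textnormal{L}}(x,z)=d^{\textnormal{L}}(y,z)$. Your own decomposition (correctly) runs over all pairs of compositions each of weight at most $r$, which requires two independent weight indices:
\[
\sum_{i=0}^{r}\sum_{i'=0}^{r}\sum_{j=1}^{\mathbf{s(i)}}\sum_{\ell=1}^{\mathbf{s(i')}} p_{i_j,\,i'_{\ell}}^{k}.
\]
This is a slip in the paper's statement rather than in your method (the lemma is only ever used downstream, in Lemma~\ref{lem:assoclee}, for the qualitative fact that the count is determined by the scheme class of $(x,y)$, so the explicit formula is never invoked), but since you singled out the bookkeeping as the one point requiring care, you should have caught the mismatch rather than asserted it checks out.
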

\begin{definition}
    A \emph{$t$-Lee-error-correcting code} of length $n$ is a code in $\Zm^{n}$ with minimum Lee distance at least $2t+1$. Its maximum size is denoted by $A_{m}(n, 2t+1)$.
    %, i.e., a code where the Lee balls $\ballL{n,t,c}$ for all codewords $ c$ are disjoint. 
\end{definition}

A simple counting argument shows the following:
\begin{theorem}
Let $A_{m}(n, 2t+1)$ denote the maximum size of a $t$-Lee-error-correcting code $\mC \subseteq \Zm^{n}$ and 
\begin{equation} \label{eq:Hamming}
    H_{m}(n,t):=\frac{m^{n}}{\bbql{n,t}},
\end{equation}
then 
\begin{align*}
    A_{m}(n, 2t+1) \leq H_{m}(n,t).
\end{align*}
\end{theorem}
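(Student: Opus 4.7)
The plan is to carry out the standard sphere-packing argument, adapted to the Lee metric. Let $\mC \subseteq \Zm^{n}$ be a $t$-Lee-error-correcting code, so by definition its minimum Lee distance satisfies $d^{\textnormal{L}}(\mC) \geq 2t+1$. The goal is to produce a disjoint family of subsets of $\Zm^{n}$ indexed by the codewords, each of the same cardinality, and then compare the total with $|\Zm^{n}| = m^{n}$.

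First I would verify that the Lee balls of radius $t$ centered at distinct codewords are pairwise disjoint. Suppose, towards a contradiction, that there exist distinct $x,y \in \mC$ and some $z \in \Zm^{n}$ with $z \in \ballL{n,t,x} \cap \ballL{n,t,y}$. Then $d^{\textnormal{L}}(x,z) \leq t$ and $d^{\textnormal{L}}(y,z) \leq t$. Since $\wL$ is defined on $\Zm$ by $\min\{x, m-x\}$ (the graph distance on the cycle $C_{m}$) and is extended additively to $\Zm^{n}$, it satisfies the triangle inequality, so
$$d^{\textnormal{L}}(x,y) = \wL(x-y) \leq \wL(x-z) + \wL(z-y) = d^{\textnormal{L}}(x,z) + d^{\textnormal{L}}(z,y) \leq 2t,$$
contradicting $d^{\textnormal{L}}(\mC) \geq 2t+1$.

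Next, using that the Lee volume $\bbql{n,t}$ is independent of the center, disjointness gives
$$\bigg| \bigcup_{c \in \mC} \ballL{n,t,c} \bigg| = \sum_{c \in \mC} \bbql{n,t} = |\mC| \cdot \bbql{n,t} \leq m^{n}.$$
Rearranging yields $|\mC| \leq m^{n} / \bbql{n,t} = H_{m}(n,t)$, and taking the supremum over all such $\mC$ delivers $A_{m}(n,2t+1) \leq H_{m}(n,t)$.

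There is essentially no obstacle here; the argument is the classical sphere-packing (Hamming) bound transcribed to the Lee setting, which is why the paper advertises it as ``a simple counting argument''. The only points requiring care are the triangle inequality for $\wL$ on $\Zm$ (immediate from its interpretation as cycle distance) and the translation invariance of Lee ball volumes (already noted in the definition of $\ballL{n,r,c}$), both of which are built into the preliminaries.
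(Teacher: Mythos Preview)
Your proof is correct and is exactly the ``simple counting argument'' the paper alludes to: disjointness of radius-$t$ Lee balls around distinct codewords (via the triangle inequality) together with translation invariance of $\bbql{n,t}$ immediately gives $|\mC|\cdot\bbql{n,t}\le m^{n}$. There is nothing to add.
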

  This upper bound is known as the \emph{sphere packing (or Hamming) bound}. Codes that achieve equality for this upper bound are called \emph{perfect codes}.

 Furthermore, many Singleton-like bounds are known for Lee metric codes (see \cite{byrne2023bounds} for an overview on these bounds). 
 However, the smallest known bound on the minimum Lee distance of  free linear codes over $\Zps$ can be achieved using a Plotkin-like argument, where one averages the weight over $\Zps$. 
\begin{theorem}(Plotkin-like bound, see \cite[Theorem 26]{byrne2023bounds}) \label{thm:Plotkin}
Let $\mC \subseteq \Zps^n$ be a free linear $t$-Lee-error-correcting code. For $p \geq 3$, we have
\begin{align*} 
    |\mC| \leq p^{s(n-\frac{8t+4}{p^s+1}+1)}.
\end{align*}
\end{theorem}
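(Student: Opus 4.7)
The plan is to reduce the statement to proving $d^{\textnormal{L}}(\mC) \leq (n-k+1)(p^s+1)/4$, where $k$ denotes the rank of the free code $\mC$, so that $|\mC| = p^{sk}$. Substituting $d^{\textnormal{L}}(\mC) \geq 2t+1$ and rearranging gives $k \leq n+1 - (8t+4)/(p^s+1)$, which directly yields the stated bound on $|\mC|$. I would prove the reformulated inequality by a classical Plotkin-type averaging argument adapted to the Lee metric over $\Zps$.

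Since $\Zps$ is a local ring with residue field $\F_p$ and $\mC$ is free of rank $k$, reducing a generator matrix of $\mC$ modulo $p$ gives a rank-$k$ matrix over $\F_p$; hence some $k$ columns form an invertible $k \times k$ block over $\Zps$. After permuting coordinates (an isometry for $\wL$) and applying row operations, I may assume a generator matrix in systematic form $G = [I_k \mid A]$ with $A \in \Zps^{k\times (n-k)}$. Fix any row $g_i$ of $G$: its $i$-th entry equals $1$, so $g_i$ has additive order $p^s$ and $\lambda g_i$ is a nonzero codeword for each nonzero $\lambda \in \Zps$. This gives the lower bound
\[
\sum_{\lambda \in \Zps} \wL(\lambda g_i) \;\geq\; (p^s - 1)\, d^{\textnormal{L}}(\mC),
\]
while coordinate-wise expansion rewrites the left side as $\sum_\lambda \wL(\lambda) + \sum_{j=1}^{n-k}\sum_\lambda \wL(\lambda A_{ij})$.

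The main technical step, and the place where I expect the bulk of the work, is the column bound
\[
\sum_{\lambda \in \Zps} \wL(\lambda a) \;\leq\; \sum_{\lambda \in \Zps} \wL(\lambda) \;=\; \frac{p^{2s} - 1}{4} \qquad \text{for every } a \in \Zps,
\]
where the last equality uses that $p^s$ is odd. I would prove it by cases on the $p$-adic valuation $b$ of $a$: writing $a = p^b u$ with $u$ a unit, the map $\lambda \mapsto \lambda a$ covers $p^b\Zps$ with each element hit exactly $p^b$ times, and the identity $\wL(p^b j) = p^b \min(j, p^{s-b}-j)$ for $0 \leq j < p^{s-b}$ evaluates the sum explicitly to $(p^{2s} - p^{2b})/4$, maximal at $b = 0$. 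The hypothesis $p \geq 3$ is essential here: for $p=2$ the maximal Lee weight $p^s/2$ is actually attained on $\Zps$, so $\sum_\lambda \wL(\lambda) = p^{2s}/4$ instead and the multiplicative factor in the resulting bound changes.

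Applying the column bound termwise yields $\sum_\lambda \wL(\lambda g_i) \leq (n-k+1)(p^{2s}-1)/4 = (n-k+1)(p^s-1)(p^s+1)/4$. Combining with the lower bound and cancelling $p^s - 1 > 0$ gives exactly $d^{\textnormal{L}}(\mC) \leq (n-k+1)(p^s+1)/4$, which concludes the proof via the initial reformulation. The only nontrivial ingredient beyond routine bookkeeping is the column bound, whose verification is short once the case analysis on the $p$-adic valuation is set up.
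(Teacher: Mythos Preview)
The paper does not itself prove this theorem; it is quoted from \cite[Theorem~26]{byrne2023bounds} and only described informally as ``a Plotkin-like argument, where one averages the weight over $\Zps$''. Your proposal is correct and matches that description: averaging $\wL$ over the scalar multiples of a systematic generator row, together with the column bound $\sum_{\lambda}\wL(\lambda a)\le (p^{2s}-1)/4$ established via the $p$-adic valuation case split, gives exactly $d^{\textnormal{L}}(\mC)\le (n-k+1)(p^s+1)/4$, from which the stated inequality follows after the rearrangement you indicate. Your choice to sum only over a rank-one subcode $\{\lambda g_i:\lambda\in\Zps\}$ rather than all of $\mC$ is a mild streamlining of the classical Plotkin argument, but it yields the same bound and is entirely legitimate here since freeness guarantees a row whose first entry is a unit.
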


\subsection{Intersections and Estimates of the Lee Metric Volume}

The volumes of the Lee metric balls are known:
\begin{lemma}\cite[Chapter 6.3]{astola2015algebraic}\label{lem:Leevolumes}
      \begin{align*} 
    \bbql{n,r} = \begin{cases*}
\displaystyle\sum_{i=0}^{\lfloor 2r/m\rfloor}(-1)^{i}\binom{n}{i}\mathlarger{\mathlarger{\sum}}_{j=0}^{r-(m/2)i}2^{j}\binom{n}{j}\binom{r-(m/2)i}{j} &  \textnormal{if $m$ is even},\\
\displaystyle\sum_{i=0}^{r}\binom{n+1}{i}\displaystyle\sum_{j=0}^{\lfloor 2r/m\rfloor}(-2)^{j}\binom{n}{j}\binom{n-j}{r-j(\lfloor m/2\rfloor+1)-i} &  \textnormal{if $m$ is odd}. \\
\end{cases*}
\end{align*}
\end{lemma}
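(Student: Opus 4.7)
The plan is to prove the formula via the method of generating functions, treating $\bbql{n,r}$ as the sum of the first $r+1$ coefficients of a one-variable Lee weight enumerator. Since the Lee weight is additive across coordinates, the enumerator of $\Zm^n$ factors as $f(z)^n$, where $f(z):=\sum_{x\in\Zm} z^{\wL(x)}$ is the per-coordinate enumerator. Then $\bbql{n,r}=\sum_{s=0}^{r}[z^s]f(z)^n=[z^r]f(z)^n/(1-z)$, and it suffices to read off the correct coefficient from an explicit rational function.

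The first step would be to put $f(z)$ in closed form by summing a geometric series. For $m$ odd, $f(z)=1+2(z+z^2+\cdots+z^{(m-1)/2})$ simplifies to $(1+z-2z^{\lfloor m/2\rfloor+1})/(1-z)$; for $m$ even, $f(z)=1+2(z+\cdots+z^{m/2-1})+z^{m/2}$ simplifies to $(1+z)(1-z^{m/2})/(1-z)$, which can be verified by checking that $(1+z)(1-z^{m/2})$ matches the polynomial $(1-z)f(z)$. Substituting yields
\[
\bbql{n,r}=[z^r]\frac{(1+z-2z^{\lfloor m/2\rfloor+1})^n}{(1-z)^{n+1}}\quad(m\text{ odd}),\qquad \bbql{n,r}=[z^r]\frac{(1+z)^n(1-z^{m/2})^n}{(1-z)^{n+1}}\quad(m\text{ even}).
\]

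To finish, in the even case I would expand $(1-z^{m/2})^n=\sum_i(-1)^i\binom{n}{i}z^{im/2}$ (giving the outer sum in $i$, with the range $i\leq\lfloor 2r/m\rfloor$ enforced by $im/2\leq r$), and then use the identity
\[
\frac{(1+z)^n}{(1-z)^{n+1}}=\sum_{s\geq 0}\Bigl(\sum_{j=0}^{s}2^j\binom{n}{j}\binom{s}{j}\Bigr)z^s,
\]
which itself comes from the rewrite $(1+z)^n=((1-z)+2z)^n$ followed by $(1-z)^{-(j+1)}=\sum_k\binom{j+k}{k}z^k$; extracting $[z^{r-im/2}]$ reproduces the inner sum. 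In the odd case I would expand $(1+z-2z^{\lfloor m/2\rfloor+1})^n$ binomially by grouping $1+z$ against $-2z^{\lfloor m/2\rfloor+1}$, then expand $(1+z)^{n-j}$ and $(1-z)^{-(n+1)}$ as power series and take the coefficient of $z^r$.

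The main obstacle is spotting the nonstandard rewrite $(1+z)^n=((1-z)+2z)^n$ in the even case, which is the only step that produces the binomial $\binom{r-(m/2)i}{j}$ with the unusual upper index $r-(m/2)i$; a direct expansion of $(1+z)^n$ instead gives a differently shaped formula. Everything else reduces to routine bookkeeping, with the range restrictions in the lemma ($i\leq\lfloor 2r/m\rfloor$ and $j\leq r-(m/2)i$) matching exactly the positivity conditions on the occurring binomials.
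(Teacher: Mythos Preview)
The paper does not prove this lemma; it is quoted from Astola's book and stated without argument. Your generating-function approach is the standard one and is correct: the factorisation $\bbql{n,r}=[z^r]\,f(z)^n/(1-z)$ with the closed forms you give for $f(z)$ is exactly how such formulae are derived, and your treatment of the even case (including the rewrite $(1+z)^n=((1-z)+2z)^n$) reproduces the stated expression verbatim.

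There is, however, a genuine mismatch in the odd case that your sketch glosses over. Carrying out your plan---expanding $(1+z-2z^{\lfloor m/2\rfloor+1})^n$ binomially in $j$, then $(1+z)^{n-j}=\sum_k\binom{n-j}{k}z^k$ and $(1-z)^{-(n+1)}=\sum_{i\ge 0}\binom{n+i}{i}z^i$---yields
\[
\bbql{n,r}=\sum_{i\ge 0}\binom{n+i}{i}\sum_{j}(-2)^{j}\binom{n}{j}\binom{n-j}{\,r-j(\lfloor m/2\rfloor+1)-i\,},
\]
with $\binom{n+i}{i}$ rather than the printed $\binom{n+1}{i}$. The printed version is in fact wrong: for $m=5$, $n=2$, $r=2$ one has $\bbql{2,2}=13$, while the formula with $\binom{n+1}{i}$ gives $10$; replacing $\binom{n+1}{i}$ by $\binom{n+i}{i}$ restores $13$ (and similarly $m=3$, $n=2$, $r=2$ gives $9$ versus the incorrect $6$). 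So your method is sound, but you should flag that it proves the corrected odd-case formula, not the one printed in the lemma. The subsequent uses of the odd formula in the paper (e.g.\ in the proof of Lemma~\ref{lem:volumeestimate}) inherit this typo.
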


This gives rise to the following bounds:
\begin{lemma}\label{lem:leebounds}
    Let $n \in \mathbb{N}$ and $1 \leq r \leq n\lfloor \frac{m}{2} \rfloor$. Then 
    \begin{enumerate} %\label{eq:MLD}
        \item  $\bbql{n,r} \leq \mathlarger{\mathlarger{\sum}}_{i=0}^{n}2^{i}\binom{n}{i}\binom{r}{i},$
        \item $\bbql{n,r} \geq \begin{cases*}
\binom{n}{r}2^{r} &  \textnormal{if $r <n$,} \\
2^{n} &  \textnormal{if $r \geq n$.}
\end{cases*}$

    \end{enumerate}

\end{lemma}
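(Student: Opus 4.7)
The plan is to bound $\bbql{n,r}$ by partitioning the ball $\ballL{n,r,0}$ according to the size of the support. Since the volume is translation invariant, it suffices to count vectors $x\in\Zm^n$ with $\wL(x)\le r$.

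For the upper bound I would stratify by the support size $i$, ranging over $0\le i\le\min(n,r)$. Fixing a support in $\binom{n}{i}$ ways, I describe each nonzero coordinate by a positive integer $w_j\ge 1$, its unsigned Lee weight, together with a sign choice indicating whether the entry equals $w_j$ or $-w_j\pmod m$. The resulting vector belongs to the ball precisely when $w:=\sum_j w_j\le r$. The number of positive compositions of $w$ into $i$ ordered parts is $\binom{w-1}{i-1}$, and the hockey-stick identity gives
\[
\sum_{w=i}^{r}\binom{w-1}{i-1}=\binom{r}{i}.
\]
Multiplying by the $2^i$ sign choices and the $\binom{n}{i}$ support choices, then summing over $i$, yields the advertised bound. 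Two sources of slack turn this into a strict upper bound rather than an equality: the cap $w_j\le\lfloor m/2\rfloor$ imposed by the definition of the Lee weight is ignored, and when $m$ is even a coordinate with $w_j=m/2$ has only one realisation, so the factor $2^i$ double-counts such cases. Both effects inflate the count, which preserves the inequality.

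For the lower bound I would exhibit explicit subfamilies of the ball. When $r<n$, pick any support $S\subseteq[n]$ of size $r$ and fill each coordinate in $S$ with one of the two elements $\{1,m-1\}\subseteq\Zm$; every such vector has Lee weight exactly $r$, and there are $\binom{n}{r}\,2^r$ of them (distinctness of $1$ and $m-1$ requires $m\ge 3$, which is the regime of interest). When $r\ge n$ the same construction with $S=[n]$ produces the $2^n$ vectors of $\{1,m-1\}^n$, each of Lee weight $n\le r$. I do not expect a serious obstacle: the only nontrivial step is the hockey-stick summation in the upper bound, while the lower bound amounts to a direct display of elements; the one subtlety worth flagging is the degenerate case $m=2$, where $1=m-1$ collapses the construction, but in that case the Lee metric coincides with the Hamming metric and the classical Hamming-ball count takes over.
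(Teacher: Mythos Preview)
Your proof is correct and matches the paper's approach: for the lower bound you exhibit exactly the same family of vectors with entries in $\{1,m-1\}$ that the paper (following \cite{weger2020np-complete}) uses, and for the upper bound your support-stratification argument with the hockey-stick identity is the standard proof behind the reference \cite{roth2006introduction} that the paper cites without details. Your remark on the degenerate case $m=2$ is a valid observation; the paper tacitly assumes $m\ge 3$ here.
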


\begin{proof}
    The upper bound is given in \cite[Section 10]{roth2006introduction}. The lower bound is known from \cite{weger2020np-complete}. Indeed, if $r <n$, $\bbql{n,r}$ contains all vectors in $\Zm^{n}$ with $r$ entries in $\{1, m-1\}$ and zero elsewhere. If $r \geq n$, $\bbql{n,r}$ contains all vectors in $\{1, m-1\}^{n}$. 
\end{proof}

We obtain the following asymptotic behavior of the volumes:
\begin{lemma} \label{lem:volumeLandau}
    Let $n \in \mathbb{N}, C>0$ be a constant and $1 \leq r \leq n\lfloor \frac{m}{2} \rfloor$. Then 
    \begin{enumerate} 
        \item $\bbql{n,r} \in \omega(1)$ as $n \to + \infty.$
        \item $\bbql{n,r} \in O(1)$ as $m \to + \infty.$
        \item If $r \lesssim n^{a/b}$, where $a <b$, then $\bbql{n,r} \in o\left(\frac{C^{n}}{n^{(b+1)/b}}\right)$ as $n \to + \infty.$
    \end{enumerate}  
    \begin{proof}
        The first two statements follow directly from Lemma \ref{lem:leebounds}. For $r \leq n^{a/b}$ we have
        $$\bbql{n,r} \leq \mathlarger{\mathlarger{\sum}}_{i=0}^{n^{a/b}}2^{i}\binom{n}{i}\binom{n^{a/b}}{i} \sim (2e^{2}n^{a/b})^{n^{a/b}}$$
        for $n\rightarrow +\infty$ (from Lemmata  \ref{lem:binomprelim} and \ref{lem:leebounds}). 
        Moreover
        \begin{align*}
            \frac{n^{(b+1)/b}(2e^{2}n^{a/b})^{n^{a/b}}}{C^n}= \frac{\text{exp}\left( \frac{b+1}{b}\log (n)+ n^{a/b}(\log(2)+2+\log(n^{a/b}))   \right)}{\text{exp}(n\log(C))} \stackrel{n\rightarrow +\infty}{\rightarrow} 0,
        \end{align*}
        since the exponent of the denominator grows linearly in $n$, while the exponent of the nominator only grows by $n^{a/b}.$
    \end{proof}
    \end{lemma}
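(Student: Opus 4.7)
The plan is to read off all three asymptotic claims from the explicit two-sided volume bounds in Lemma \ref{lem:leebounds} together with the binomial estimate $\binom{n}{i} \leq (en/i)^i$ from Lemma \ref{lem:binomprelim}. Parts (1) and (2) are quick. For (1), the lower bound gives $\bbql{n,r} \geq \binom{n}{r}2^r$ when $r<n$ and $\bbql{n,r} \geq 2^n$ when $r \geq n$; in either case the right-hand side is at least $2n+1$ as soon as $r \geq 1$ (already the unit ball contains the zero vector together with all $\pm e_i$, $i \in [n]$), so $\bbql{n,r} \to \infty$ as $n \to \infty$. For (2), the upper bound $\bbql{n,r} \leq \sum_{i=0}^n 2^i \binom{n}{i}\binom{r}{i}$ does not involve $m$, so with $n$ and $r$ fixed it is constant in $m$; more geometrically, once $m > 2r$ only residues in $\{-r,\dots,r\} \pmod m$ can appear in each coordinate, forcing $\bbql{n,r} \leq (2r+1)^n$ and hence $\bbql{n,r} \in O(1)$.

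For (3) I would use the same upper bound and truncate the sum at $i=r$ (since $\binom{r}{i}=0$ otherwise). Because $r \lesssim n^{a/b}$ with $a<b$, we have $r \leq n/2$ for large $n$, so $2^i$ and $\binom{n}{i}$ are both monotonically increasing on $[0,r]$ and
\begin{equation*}
    \bbql{n,r} \leq \sum_{i=0}^{r} 2^i \binom{n}{i}\binom{r}{i} \leq 2^r \binom{n}{r} \sum_{i=0}^r \binom{r}{i} = 4^r \binom{n}{r}.
\end{equation*}
Applying Lemma \ref{lem:binomprelim}(4) bounds this by $(4en/r)^r$, so $\log \bbql{n,r} \leq r \log(4en) = O(n^{a/b}\log n)$. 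Taking logs in the target ratio then yields
\begin{equation*}
    \log\!\left(\frac{n^{(b+1)/b}\,\bbql{n,r}}{C^n}\right) = O(n^{a/b} \log n) + \tfrac{b+1}{b}\log n - n \log C,
\end{equation*}
and since $a/b < 1$ the first two terms are $o(n)$, so the whole expression tends to $-\infty$ whenever $C > 1$ (the only regime in which the statement is nontrivial), forcing the ratio to $0$. The main obstacle is really only the bookkeeping in this tail estimate; the critical structural input is the strict inequality $a/b < 1$, which immediately turns a term like $\exp(n^{a/b}\log n)$ into something subexponential and therefore dominated by the exponential denominator $C^n$.
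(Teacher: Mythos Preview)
Your proof is correct and follows essentially the same route as the paper: parts (1) and (2) are read off from the two-sided bounds of Lemma~\ref{lem:leebounds}, and part (3) bounds the sum $\sum_i 2^i\binom{n}{i}\binom{r}{i}$ by something of the form $\exp\bigl(O(n^{a/b}\log n)\bigr)$ and then compares exponents. Your intermediate estimate $4^r\binom{n}{r}\le (4en/r)^r$ is a bit cleaner than the paper's $(2e^2 n^{a/b})^{n^{a/b}}$, and you make explicit the implicit assumption $C>1$ (needed so that $n\log C$ actually dominates), but otherwise the arguments coincide.
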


Before studying the intersection of two Lee balls we derive estimates of the volumes of balls of different radii. The proofs can be found in the appendix.

\begin{lemma} \label{lem:volumeestimate}
    Let $1 \leq t \leq n\lfloor \frac{m}{2} \rfloor$ and $C_{m}$ be a sufficiently large constant. Then for any $1 \leq i$
    $$ \bbql{n, t} \leq \begin{cases*}
 \left( \frac{t+i}{2(n+i)}\right)^{i}\bbql{n+i,t+i} &  \textnormal{if $m$ is even,}\\
 \left( \frac{C_{m}}{n+i}\right)^{i}\bbql{n+i,t+i} &  \textnormal{if $m$ is odd and $t=\lceil \frac{m}{2}\rceil$,} \\
 \left( \frac{t+i}{n+i}\right)^{i}\bbql{n+i,t+i} &  \textnormal{else,}
\end{cases*} $$
and for $1\leq i \leq n\lfloor \frac{m}{2} \rfloor-t$
    $$ \bbql{n, t} \leq \begin{cases*}
 \left( \frac{t+i}{2n}\right)^{i} \left( \frac{n-i +1}{n-i +1-t}\right)^{i}\bbql{n,t+i} &  \textnormal{if $m$ is even,}\\
 \left( \frac{C_{m}}{n}\right)^{i}\left(\frac{n-i+1}{n-i+1-t} \right)^{i}
\bbql{n,t+i} &  \textnormal{if $m$ is odd and $t=\lceil \frac{m}{2}\rceil$,} \\
\left( \frac{t+i}{n}\right)^{i}\left(\frac{n-i+1}{n-i+1-t} \right)^{i}
\bbql{n,t+i} &  \textnormal{else.} \\
\end{cases*} $$
\end{lemma}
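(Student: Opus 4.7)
The plan is to establish both chains of bounds by induction on $i$, with the $i=1$ base case handled via a combinatorial double count and the induction step reducing, for each family, to a monotonicity check or an elementary algebraic comparison. Throughout I fix a set $A \subseteq \{1, m-1\} \subseteq \Zm$ of Lee-weight-$1$ elements, taking $|A|=2$ in the even case and $|A|=1$ in the else case.

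For the $i=1$ case of the first bound, I would consider the insertion map
\[
\Phi : \bbql{n,t} \times [n+1] \times A \longrightarrow \bbql{n+1, t+1}, \qquad (v, j, a) \mapsto (v_1, \dots, v_{j-1}, a, v_j, \dots, v_n),
\]
which lands in $\bbql{n+1, t+1}$ because the output has Lee weight $\wL(v) + 1 \leq t+1$. For any $w$ in the codomain the fiber has size $|\{j : w_j \in A\}|$, which is at most $\wL(w) \leq t+1$, so double counting produces
\[
\bbql{n,t} \cdot (n+1) \cdot |A| \;\leq\; (t+1)\, \bbql{n+1,t+1}.
\]
For the $i=1$ case of the second bound, the analogous in-place map $\Psi$ sends $(v, j, a)$ with $v_j = 0$ and $a \in A$ to the vector that differs from $v$ only at coordinate $j$, where the value becomes $a$. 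Every $v \in \bbql{n,t}$ has at least $n-t$ zero coordinates (each nonzero one contributes at least $1$ to the Lee weight), so the same fiber bound yields
\[
\bbql{n,t} \cdot (n-t) \cdot |A| \;\leq\; (t+1)\, \bbql{n, t+1}.
\]
Inserting $|A| \in \{1,2\}$ reproduces the even and else sub-cases of the lemma at $i=1$.

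The general-$i$ statements then follow by iterating these $i=1$ bounds at the shifted parameters $(n+j, t+j)$ and $(n, t+j)$ respectively, producing the products
\[
\prod_{j=1}^{i}\tfrac{t+j}{|A|(n+j)} \qquad \text{and} \qquad \prod_{j=1}^{i}\tfrac{t+j}{|A|(n-t-j+1)}.
\]
The first product is immediately bounded by $\bigl(\tfrac{t+i}{|A|(n+i)}\bigr)^i$, because each factor $\tfrac{t+j}{n+j}$ is monotonically non-decreasing in $j$ (for $t \leq n$) and is therefore dominated by its value at $j=i$. For the second product, obtaining the stated factored form $\bigl(\tfrac{t+i}{|A| n}\bigr)^i\bigl(\tfrac{n-i+1}{n-i+1-t}\bigr)^i$ is a bit more delicate and is best handled by a secondary induction on $i$ that tracks how the per-step ratio $\tfrac{t+i}{n-i+1-t}$ interacts with the correction factor $\tfrac{n-i+1}{n}$; the technical inequality that emerges is an elementary polynomial rearrangement once the induction hypothesis is in hand.

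The main obstacle is the degenerate case of odd $m$ with $t = \lceil m/2 \rceil$. Here the insertion of weight-$1$ elements may push a coordinate past the maximum single-coordinate Lee weight $\lfloor m/2 \rfloor$, so the linear-in-$t$ fiber estimate $|\{j : w_j \in A\}| \leq \wL(w)$ no longer captures the correct dependence and the clean form of the double count breaks. For this case I would instead apply the explicit formula of Lemma~\ref{lem:Leevolumes} directly to the ratio $\bbql{n+i, t+i}/\bbql{n,t}$, absorbing all $m$-dependence into the constant $C_m$, which depends only on $m$ and not on $n$, $t$, or $i$.
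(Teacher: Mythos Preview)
Your insertion/replacement double count is a genuinely different and cleaner route than the paper's, which grinds through the explicit volume formulas of Lemma~\ref{lem:Leevolumes} term by term. For the first family it works completely: the fiber estimate $|\{j:w_j\in A\}|\le w^{\textnormal{L}}(w)\le t+1$ holds uniformly in $m$, the $i=1$ case is immediate, and the iteration closes because $(t+j)/(n+j)$ is nondecreasing in $j$ when $t\le n$ (the same monotonicity the paper's iteration uses). Incidentally, your diagnosis of the odd-$m$, $t=\lceil m/2\rceil$ case is off: inserting a fresh coordinate of weight $1$ never alters any existing coordinate, and the fiber bound is unaffected. In your framework that case is not exceptional at all---one may simply take $C_m\ge t+1$. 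The paper isolates it only because certain binomial terms in its formula manipulation collapse when $2t=m+1$; this is an artifact of that method, not of the combinatorics.

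The real gap is the second family for $i\ge 2$. Your iterated product $\prod_{j=1}^{i}\frac{t+j}{|A|(n-t-j+1)}$ is a correct bound on $\bbql{n,t}/\bbql{n,t+i}$, but passing to the stated factored form $\bigl(\tfrac{t+i}{|A|\,n}\bigr)^{i}\bigl(\tfrac{n-i+1}{n-i+1-t}\bigr)^{i}$ is not an ``elementary polynomial rearrangement'': the per-step factor $\frac{t+j}{n-t-j+1}$ is increasing in $j$ and at $j=i$ already exceeds the target geometric mean $\frac{(t+i)(n-i+1)}{n(n-i-t+1)}$ whenever $i\ge 2$, so neither term-by-term comparison nor the obvious induction closes. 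The paper sidesteps this by a different decomposition: it first bounds the $n$-only shift $\bbql{n,t}/\bbql{n-i,t}\le\bigl(\tfrac{n-i+1}{n-i+1-t}\bigr)^{i}$ directly from the explicit sums, and then composes with the first family applied at $(n-i,t)\mapsto(n,t+i)$. That $n$-only ratio does not fall out of your in-place map, so to complete your argument you would need either a separate combinatorial bound for it or an actual proof of the product inequality you defer.
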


Next we investigate the intersection of two balls with equal Lee radii. 
\begin{lemma} \label{lem:assoclee}
    Let $0 \leq t \leq n\lfloor \frac{m}{2} \rfloor$ and $c_1, c_2 \in \Zm^{n}$, then the size of $\ballL{n,r,c_1} \cap \ballL{n,r,c_2}$ depends on $c_1, c_2$ only through $d^{\textnormal{L}}(c_{1},c_{2}).$
 \end{lemma}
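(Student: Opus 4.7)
The plan is to combine the translation invariance of the Lee metric with the association-scheme structure set up above. First, since $d^{\textnormal{L}}(x+a,y+a) = \wL(x-y) = d^{\textnormal{L}}(x,y)$ for every $a \in \Zm^{n}$, the shift $y \mapsto y - c_{1}$ is a Lee isometry of $\Zm^{n}$ that sends $\ballL{n,r,c_{1}} \cap \ballL{n,r,c_{2}}$ bijectively onto $\ballL{n,r,0} \cap \ballL{n,r,c_{2}-c_{1}}$. Thus the intersection depends on $(c_{1},c_{2})$ only through $c_{2}-c_{1}$, and it suffices to prove that $|\ballL{n,r,0}\cap \ballL{n,r,c}|$ depends on $c \in \Zm^{n}$ only through $\wL(c)$.

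The second ingredient is the natural Lee-weight-preserving symmetry group of $\Zm^{n}$, generated by coordinate permutations $\sigma \in S_{n}$ together with coordinate-wise sign flips $(x_{1},\dots,x_{n}) \mapsto (\varepsilon_{1}x_{1},\dots,\varepsilon_{n}x_{n})$ with $\varepsilon_{i}\in\{\pm 1\}$. Each such map is a bijection of $\Zm^{n}$ that preserves $\wL$ coordinate-wise, and hence fixes $\ballL{n,r,0}$ setwise; consequently $|\ballL{n,r,0}\cap \ballL{n,r,c}|$ is invariant under this group acting on $c$, and is therefore constant on its orbits.

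The final and most delicate step is to descend from ``depends on the orbit of $c$'' to ``depends on $\wL(c)$ alone''. Here I would invoke Lemma~\ref{lem:scheme} applied with $(x,y)=(0,c)$, which expresses the intersection size as
\[
\sum_{i=0}^{r}\sum_{j=1}^{\mathbf{s(i)}}\sum_{\ell=1}^{\mathbf{s(i)}} p^{k}_{i_{j},i_{\ell}},
\]
where $k$ denotes the association-scheme class of $c$, so that by axiom~3 the cardinality depends on $c$ only through this class. The orbits of the symmetry group above coincide with these classes and are indexed by Lee compositions, which in general refine the partition of $\Zm^{n}$ by Lee weight. The crux is therefore a combinatorial identity: for each fixed total weight $i$, the partial sums $\sum_{j,\ell} p^{k}_{i_{j},i_{\ell}}$ must agree across all classes $k$ of the same total Lee weight. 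I expect establishing this collapse from composition-level to weight-level dependence, either directly from the scheme parameters or via a bijective argument between contributions of different compositions with matching weight, to be the main technical obstacle of the proof.
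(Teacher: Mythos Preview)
The paper's own proof is the single sentence ``This follows directly from Lemma~\ref{lem:scheme}.'' You are right to be more careful: Lemma~\ref{lem:scheme} (via axiom~3 of the Lee association scheme) only shows that the intersection size is determined by the scheme class of $(c_{1},c_{2})$, i.e.\ by the Lee \emph{composition} of $c_{2}-c_{1}$, and you correctly isolate as the remaining crux the ``collapse'' from composition-level dependence to dependence on $d^{\textnormal{L}}(c_{1},c_{2})$ alone.

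That collapse cannot be established, because the lemma as stated is false. Take $m=5$, $n=2$, $r=1$. With $c_{1}=(0,0)$ and $c_{2}=(2,0)$ one has $d^{\textnormal{L}}(c_{1},c_{2})=2$ and $\ballL{2,1,(0,0)}\cap\ballL{2,1,(2,0)}=\{(1,0)\}$, of size~$1$; with $c_{1}=(0,0)$ and $c_{2}=(1,1)$, again $d^{\textnormal{L}}(c_{1},c_{2})=2$ but $\ballL{2,1,(0,0)}\cap\ballL{2,1,(1,1)}=\{(1,0),(0,1)\}$, of size~$2$. Thus two center pairs at equal Lee distance but with different Lee compositions give intersections of different sizes, and the combinatorial identity you hoped to prove does not hold. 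What Lemma~\ref{lem:scheme} genuinely yields is that the intersection depends only on the Lee composition of $c_{2}-c_{1}$; the paper's subsequent computations in Lemmas~\ref{lem:mon} and~\ref{lem:interesti} all use the particular representative $c_{2}-c_{1}=(1^{\ell}0^{n-\ell})$, which is the sense in which $W(t,\ell)$ should be read.
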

   \begin{proof}
This follows directly from Lemma \ref{lem:scheme}.
    \end{proof} 
Hence for $0 \leq t \leq n\lfloor m/2 \rfloor$, we denote the cardinality of the intersection of two balls in $\Zm^{n}$ of radius $t$ with centers being Lee distance $\ell$ apart as $W(m, n,t,\ell)$. When there is no ambiguity about the ambient space considered we simply denote it as $W(t,\ell)$. 
Obviously $W(t,\ell)=0$ when $\ell > 2t$.  Thus, we will only focus on $1 \leq \ell \leq 2t \leq nm,$ as $t \leq n \lfloor m/2 \rfloor.$

In the following we show that the numbers $W(t,\ell)$ are decreasing for increasing $\ell$.
\begin{lemma} \label{lem:mon}
   Let $1 \leq \ell \leq \text{min}\{ 2t, n\lfloor m/2\rfloor -1 \},$ then $$W(t,\ell+1) \leq W(t,\ell).$$
   \begin{proof}
   We define the notation
   $$(1^{\ell}0^{n-\ell}):= (\underbrace{1,1,\dots,1}_{\ell},\underbrace{0,0,\dots,0}_{n-\ell}) .$$
      By Lemma \ref{lem:assoclee} it is sufficient to determine the intersection of the radius $t$ Lee balls centered at $c_1 = (0^{n})$ and $ c_2= (1^{\ell}0^{n-\ell})$ for $W(t,\ell)$, respectively the balls centered at $c_1$ and $c_3= (1^{\ell+1}0^{n-\ell-1})$ for $W(t,\ell+1)$. 
      Define
       $$X:= \ballL{n,t,c_1} \cap \ballL{n,t,c_3}\setminus \ballL{n,t,c_2} \quad \textnormal{and} \quad Y:=  \ballL{n,t,c_1} \cap \ballL{n,t,c_2}\setminus \ballL{n,t,c_3}.$$ Let $x \in X$. Since $x \in \ballL{n,t,c_{1}} \cap \ballL{n,t,c_{3}}$ we have $d^{\textnormal{L}}(c_{1},x) \leq t$ and $d^{\textnormal{L}}(c_{3},x) \leq t$. As $x \notin \ballL{n,t,c_{2}}$ we also have $d^{\textnormal{L}}(c_{2},x) >t$. We obtain 
              \begin{align} \label{eq:smaller}
              w^{\textnormal{L}}(x_{1}-1, \dots , x_{\ell+1}-1, x_{\ell+2}, \dots , x_{n})\leq t\\\label{eq:larger}
              w^{\textnormal{L}}(x_{1}-1, \dots , x_{\ell}-1, x_{\ell+1}, \dots , x_{n})> t.
        \end{align}
       As $ (x_{1}-1, \dots , x_{\ell+1}-1, x_{\ell+2}, \dots , x_{n})$ and $(x_{1}-1, \dots , x_{\ell}-1, x_{\ell+1}, \dots ,x_{n})$ differ only in the $(\ell+1)^{th}$ coordinate by exactly one, we have from \eqref{eq:smaller} and \eqref{eq:larger} that \begin{align}
           \label{eq:subs1}
            w^{\textnormal{L}}(x_{1}-1, \dots , x_{\ell+1}-1, x_{\ell+2}, \dots , x_{n})&=t\\
           w^{\textnormal{L}}(x_{1}-1, \dots , x_{\ell}-1, x_{\ell+1}, \dots ,x_{n})&=t+1. \label{eq:subs2}
       \end{align} 
       Hence, $d^{\textnormal{L}}(x,c_{2}) = t+1$ and $d^{\textnormal{L}}(x,c_{3})=t.$ Moreover, subtracting \eqref{eq:subs1} from \eqref{eq:subs2} gives
       $w^{\textnormal{L}}(x_{\ell+1}) - w^{\textnormal{L}}(x_{\ell+1}-1) =1$, from which we deduce that $x_{\ell+1}\in \{1, \dots ,\lfloor m/2 \rfloor\},$ as otherwise $w^{\textnormal{L}}(x_{\ell+1})-w^{\textnormal{L}}(x_{\ell+1}-1) = -1$ and in particular, $d^\textnormal{L}(x,c_2) < d^\textnormal{L}(x,c_3),$ which is not possible. 
       
       To show that $|X| \leq |Y|$ we define an injective map $\phi: X \rightarrow Y$ by $$\phi(x)_{i} := \begin{cases}
           x_i & \text{ if } \, i \in [n]\setminus \{\ell+1\}, \\
           m-x_{i}+1 &  \text{if } \, i = \ell+1.
       \end{cases}$$ 
       In the following we show that this map is well-defined. For this we will show that for every $x \in X$ we have \begin{align*}
           d^{\textnormal{L}}(\phi (x), c_{1}) <d^{\textnormal{L}}(x,c_1) \leq t,\quad d^{\textnormal{L}}(\phi (x), c_{3}) =t+1, \quad  d^{\textnormal{L}}(\phi (x), c_{2})=t.
       \end{align*}
       The first observation follows from %\textcolor{purple}{[I would rather write:]
       the fact that $x_{\ell+1}\in \{1, \dots ,\lfloor m/2 \rfloor\}$ and hence
       $$w^{\textnormal{L}}(\phi (x)_{\ell+1}) = x_{\ell+1}-1 < x_{\ell+1} = w^{\textnormal{L}}(x_{\ell+1}).$$
       %$$w^{\textnormal{L}}(\phi (x)_{\ell+1}) = \min\{m-(x_{\ell+1}-1), x_{\ell+1}-1\} < \min \{m-x_{\ell+1}, x_{\ell+1}\} = w^{\textnormal{L}}(x_{\ell+1}).$$ 
       For the second observation recall that $d^{\textnormal{L}}(c_{3},x) =t$. %Considering $(\phi(x)-c_{3})_{\ell+1}$ and its Lee weight 
       We can deduce that
       \begin{align*}
           w^{\textnormal{L}}(\phi(x)_{\ell+1}-1) = w^{\textnormal{L}}(m-x_{\ell+1})=x_{\ell+1}&= (x_{\ell+1}-1)+1= w^{\textnormal{L}}(x_{\ell+1}-1)+1, 
       \end{align*}
       which gives $d^{\textnormal{L}}(\phi(x),c_{3})=t+1.$ The last observation follows similarly
       and hence $\phi(x) \in Y$. As no two distinct elements are mapped onto the same image we have $|X| \leq |Y|$ and thus $$| \ballL{n,t,c_1} \cap \ballL{n,t,c_3}| = |X \sqcup Z| \leq  |Y \sqcup Z| =| \ballL{n,t,c_1} \cap \ballL{n,t,c_2}|$$
       where $Z=\ballL{n,t,c_1} \cap\ballL{n,t,c_2} \cap \ballL{n,t,c_3}$.
   \end{proof}
\end{lemma}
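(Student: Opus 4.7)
The plan is to use Lemma \ref{lem:assoclee} to reduce to a fixed triple of centers and then construct an explicit injection. Fix $c_1 = (0, \dots, 0)$ and pick $c_2, c_3 \in \Zm^n$ at Lee distances $\ell$ and $\ell+1$ from $c_1$, with $c_2, c_3$ differing in exactly one coordinate (call it the $(\ell+1)$-th) whose Lee weight is raised from $0$ to $1$. The hypothesis $\ell + 1 \leq n\lfloor m/2 \rfloor$ guarantees that such a pair exists, for instance $c_2 = (1^{\ell}, 0^{n-\ell})$ and $c_3 = (1^{\ell+1}, 0^{n-\ell-1})$. Setting $Z := \ballL{n,t,c_1} \cap \ballL{n,t,c_2} \cap \ballL{n,t,c_3}$, $X := \ballL{n,t,c_1} \cap \ballL{n,t,c_3} \setminus \ballL{n,t,c_2}$, and $Y := \ballL{n,t,c_1} \cap \ballL{n,t,c_2} \setminus \ballL{n,t,c_3}$, one gets $W(t,\ell+1) = |X| + |Z|$ and $W(t,\ell) = |Y| + |Z|$, reducing the claim to $|X| \leq |Y|$.

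Next, I would construct an injection $\phi : X \to Y$ that alters only the $(\ell+1)$-th coordinate. For $x \in X$, since $c_2$ and $c_3$ agree outside coordinate $\ell+1$, the distances $d^{\textnormal{L}}(c_2, x)$ and $d^{\textnormal{L}}(c_3, x)$ differ by $w^{\textnormal{L}}(x_{\ell+1}) - w^{\textnormal{L}}(x_{\ell+1} - 1) \in \{-1, +1\}$. The defining inequalities $d^{\textnormal{L}}(c_3, x) \leq t < d^{\textnormal{L}}(c_2, x)$ then force $d^{\textnormal{L}}(c_3, x) = t$, $d^{\textnormal{L}}(c_2, x) = t+1$, and $x_{\ell+1} \in \{1, \dots, \lfloor m/2 \rfloor\}$. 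Define $\phi(x)_j = x_j$ for $j \neq \ell+1$ and $\phi(x)_{\ell+1} = m - x_{\ell+1} + 1$. A short Lee-weight check on the altered coordinate yields $d^{\textnormal{L}}(\phi(x), c_1) \leq t$, $d^{\textnormal{L}}(\phi(x), c_2) = t$ and $d^{\textnormal{L}}(\phi(x), c_3) = t+1$, so $\phi(x) \in Y$; injectivity is immediate since $j \mapsto m - j + 1$ is injective on $\{1, \dots, \lfloor m/2 \rfloor\}$.

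The genuinely non-routine part is identifying the right $\phi$. One must find an operation on a single coordinate that flips which of $c_2, c_3$ is "closer" in the Lee sense, while simultaneously not pushing $\phi(x)$ outside $\ballL{n,t,c_1}$. The cyclic reflection $j \mapsto m - j + 1$ does both because on $\{1, \dots, \lfloor m/2 \rfloor\}$ it exchanges the Lee weights of $x_{\ell+1}$ and $x_{\ell+1} - 1$ inside $\Zm$, while strictly decreasing the Lee weight toward the $(\ell+1)$-th coordinate of $c_1$, which is $0$. The restriction to this subset of $\{0, \dots, m-1\}$ is exactly what the condition $d^{\textnormal{L}}(c_2, x) > d^{\textnormal{L}}(c_3, x)$ pins down, so the construction matches the constraints tightly; everything else reduces to bookkeeping of Lee weights in one coordinate of $\Zm$.
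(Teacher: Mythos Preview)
Your proposal is correct and follows essentially the same approach as the paper's proof: the same choice of centers $c_1,c_2,c_3$, the same decomposition into $X$, $Y$, $Z$, the same analysis pinning down $x_{\ell+1}\in\{1,\dots,\lfloor m/2\rfloor\}$, and the same injection $\phi(x)_{\ell+1}=m-x_{\ell+1}+1$. The only cosmetic difference is that you phrase the verification more compactly, whereas the paper spells out the three distance computations $d^{\textnormal{L}}(\phi(x),c_1)$, $d^{\textnormal{L}}(\phi(x),c_2)$, $d^{\textnormal{L}}(\phi(x),c_3)$ explicitly.
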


In order to obtain relations between the numbers $W(t,\ell)$ (which will be important for our saturation estimates given in Section \ref{sec:numberof}) we start by giving an explicit formula for $W(t,1)$ and an upper bound for $W(t,\ell)$ for general $\ell \in \{1, \ldots, n \lfloor m/2 \rfloor \}$.
\begin{lemma} \label{lem:interesti}
    Let $1 \leq \ell \leq \text{min}\{ 2t, n\lfloor m/2\rfloor -1 \}.$
    \begin{enumerate}
        \item For $\ell =1$ we have $$
            W(t,1) = \begin{cases}
           2 \displaystyle\sum_{i=1}^{m/2}\bbql{n-1,t-i} & \text{if $m$ is even},\\
        2 \displaystyle\sum_{i=1}^{(m-1)/2}\bbql{n-1,t-i} +\bbql{n-1,t-\frac{m-1}{2} } & \text{if $m$ is odd.}
       \end{cases}$$
         \item 
         For $\ell \geq 1$ we have $W(t,\ell) \leq m^{\ell} \bbql{n-\lceil \ell /2\rceil,t-\lceil \ell /2\rceil}.$
    \end{enumerate}
\begin{proof}
     Let $\ell =1$, suppose that the two balls are centered at $c_{1}:=(0^{n})$ and $c_{2}:=(10^{n-1})$ and let $x \in \ballL{n,t,c_{1}} \cap \ballL{n,t,c_{2}}$. This gives $w^{\textnormal{L}}(x_{1}) \leq t$ and because of translation invariance the remaining $n-1$ coordinates have Lee weight at most $$t-\max\{w^{\textnormal{L}}(x_1), w^{\textnormal{L}}(x_{1}-1)\}.$$ Observe that 
     \begin{align*}
       \max\{w^{\textnormal{L}}(x_1),w^{\textnormal{L}}(x_{1}-1)\} =\begin{cases}
          1 & \text{if} \, x_1 = 0, \\
           x_{1} & \text{if} \, 1\leq x_1 \leq \lfloor m/2\rfloor, \\
            m-x_{1} & \text{if} \, x_1 = \lfloor m/2\rfloor+1 \, \textnormal{and $m$ is odd},\\
           m-x_{1}+1 & \text{in all other cases}.
       \end{cases}
     \end{align*}
Hence, for a fixed
     \begin{itemize}
     \item $x_1=0$ there are $\bbql{n-1,t-1}$ many choices for the remaining $n-1$ coordinates of $x$, 
     \item $x_1\in \{1,2,\dots,\lfloor m/2\rfloor\}$ there are $\bbql{n-1,t-x_1}$ many choices for the remaining $n-1$ coordinates, 
     \item $x_1\in \{\lfloor(m+1)/2\rfloor+1,\dots,m-1\}$ there are $\bbql{n-1,t-m+x_1-1}$ many choices for the remaining $n-1$ coordinates, 
    \item $x_1=(m+1)/2$ and $m$ odd there are $\bbql{n-1,t-(m-1)/2}$ many choices for the remaining $n-1$ coordinates.
     \end{itemize}
     For the second statement suppose that the two balls are centered at $c_{1}=(0^{n})$ and $c_{2}=(1^{\ell}0^{n-\ell})$. For $x \in \ballL{n,t,c_{1}} \cap \ballL{n,t,c_{2}}$ we have that the last $n-\ell$ coordinates have Lee weight at most $t-\max\{\sum_{i=1}^{\ell}w^{\textnormal{L}}(x_i), \sum_{i=1}^{\ell}w^{\textnormal{L}}(x_{i}-1)\}.$
     Moreover, we observe that 
     \begin{align} \label{eq:max_k}
         \max\left\{\sum_{i=1}^{\ell}w^{\textnormal{L}}(x_i), \sum_{i=1}^{\ell}w^{\textnormal{L}}(x_{i}-1)\right\} \geq \left\lceil \ell/2 \right\rceil.
     \end{align}
    Let $s$ be the support size of the first $\ell$ coordinates of $x$. If $s \geq \lceil \ell/2 \rceil$ \eqref{eq:max_k} can be immediately derived from the fact that the Lee weight of the first $\ell$ coordinates is at least as large as the support size. If $s < \lceil \ell/2 \rceil$ at least $\lceil \ell/2 \rceil$ coordinates of $x$ are zero and therefore $\sum_{i=1}^{\ell}w^{\textnormal{L}}(x_{i}-1)  \geq \lceil \ell/2 \rceil$.
     Now the upper bound can be derived using the monotonicity of the Lee volume in both parameters. 
\end{proof}
\end{lemma}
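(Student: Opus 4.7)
My plan for both parts is to invoke the translation invariance from Lemma \ref{lem:assoclee} and fix the centers $c_1 = (0^n)$ and $c_2 = (1^\ell 0^{n-\ell})$. Splitting any $x \in \ballL{n,t,c_1} \cap \ballL{n,t,c_2}$ into its first $\ell$ and last $n - \ell$ coordinates reduces the problem to a weight-budget condition on the tail: once $(x_1, \ldots, x_\ell)$ is chosen, the tail must have Lee weight at most $t - M$, where $M := \max\{\sum_{i=1}^{\ell} w^\textnormal{L}(x_i), \sum_{i=1}^{\ell} w^\textnormal{L}(x_i - 1)\}$. The number of admissible tails is then exactly $\bbql{n - \ell, t - M}$.

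For part 1, where $\ell = 1$, I would carry out a direct case analysis on $x_1 \in \Zm$ to evaluate $M = \max\{w^\textnormal{L}(x_1), w^\textnormal{L}(x_1 - 1)\}$: it equals $1$ when $x_1 = 0$, equals $x_1$ for $1 \leq x_1 \leq \lfloor m/2 \rfloor$, equals $(m-1)/2$ for the unique middle value $x_1 = (m+1)/2$ when $m$ is odd, and equals $m - x_1 + 1$ otherwise. Summing $\bbql{n-1, t - M}$ over all $x_1 \in \Zm$ and exploiting the symmetry $x_1 \leftrightarrow m - x_1 + 1$, which pairs every value and produces the global factor of $2$, yields the two claimed closed forms; the single unpaired contribution in the odd case explains the extra summand $\bbql{n-1, t - (m-1)/2}$.

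For part 2 the crucial step is to establish $M \geq \lceil \ell/2 \rceil$, which I would prove by a support-size dichotomy. Letting $s$ denote the support size of $(x_1, \ldots, x_\ell)$: if $s \geq \lceil \ell/2 \rceil$, then $\sum w^\textnormal{L}(x_i) \geq s \geq \lceil \ell/2 \rceil$; otherwise more than $\lfloor \ell/2 \rfloor$ of the $x_i$ vanish, forcing $\sum w^\textnormal{L}(x_i - 1) \geq \lceil \ell/2 \rceil$. Combining $m^\ell$ trivial choices for $(x_1, \ldots, x_\ell)$ with at most $\bbql{n - \ell, t - \lceil \ell/2 \rceil}$ choices for the tail, and then enlarging the first argument from $n - \ell$ to $n - \lceil \ell/2 \rceil$ via monotonicity of the Lee volume in its first argument (padding by additional zero coordinates only adds choices), gives the stated bound.

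I expect the support-size dichotomy in part 2 to be the only genuinely non-mechanical step, since it is what forces the clean $\lceil \ell/2 \rceil$ threshold; once available, the rest is monotonicity. In part 1 the main work is the parity bookkeeping needed to correctly identify the middle element when $m$ is odd and to verify that its contribution is counted with the right multiplicity, but this is routine after enumerating the cases above.
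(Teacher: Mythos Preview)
Your proposal is correct and follows essentially the same route as the paper: the same choice of centers $c_1=(0^n)$, $c_2=(1^\ell 0^{n-\ell})$, the same tail-budget reduction governed by $M=\max\{\sum w^{\textnormal{L}}(x_i),\sum w^{\textnormal{L}}(x_i-1)\}$, the identical case analysis on $x_1$ for part~1, and the same support-size dichotomy to obtain $M\ge\lceil\ell/2\rceil$ for part~2. Your explicit mention of the involution $x_1\leftrightarrow 1-x_1$ to explain the factor~$2$ is a slight elaboration on the paper's presentation, but the argument is otherwise the same.
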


Using Lemma \ref{lem:interesti} we can see that the intersection of two balls with centers being distance $\ell$ apart is significantly smaller than the intersection of two balls with centers being distance $1$ apart. Moreover, by applying Lemma \ref{lem:volumeestimate} we can give an upper bound on $W(t,1)$ which then contains only the volume $\bbql{n,t}$. 
\begin{lemma} \label{lem:estimateinter}
    Let $0 \leq \ell < \min\{t, n \lfloor m/4 \rfloor\},$ and $C_{m}$ be a sufficiently large constant, then
\begin{align*}
  W(t,2\ell+1) &
    \leq \begin{cases}
          \left( \frac{mt}{2n}\right)^{\ell} W(t,1) \leq \left( \frac{mt}{2n}\right)^{\ell+1} \bbql{n,t} & \text{if } m \text{ is even,} \\
           \left(\frac{mC_{m}}{n}\right)^{\ell}W(t,1)\leq\left(\frac{mC_{m}}{n}\right)^{\ell+1} \bbql{n,t}  &
     \text{if } m \text{ is odd and } t=\lceil \frac{m}{2}\rceil,
   \\  \left( \frac{mt}{n}\right)^{\ell} W(t,1) \leq
             \left( \frac{mt}{n}\right)^{\ell+1}\bbql{n,t} & \text{in all other cases.}
    \end{cases}
\end{align*}
    \begin{proof}
     We only show the case where $m$ is even, the other cases follow from similar arguments. 
     
     Combining both statements from Lemma \ref{lem:interesti}, together with Lemma \ref{lem:volumeestimate} we can deduce the first statement, i.e.,
     \begin{align*}
              \frac{W(t,2\ell+1)}{W(t,1)} \leq \frac{m^{\ell}\bbql{n-\ell-1, t-\ell-1}}{\bbql{n-1,t-1}} \leq m^{\ell} \left( \frac{t}{2n}\right)^{\ell},
     \end{align*}
    which proves the first inequality. To prove the second inequality, apply again Lemma \ref{lem:volumeestimate} to yield
\begin{align*}
    \frac{W(t,1)}{\bbql{n,t}} \leq \frac{m\bbql{n-1, t-1}}{\bbql{n,t}} \leq  m \left(  \frac{t}{2n}\right).
\end{align*}
    \end{proof}
    \end{lemma}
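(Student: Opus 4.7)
The approach is to derive both chained inequalities by combining Lemma \ref{lem:interesti}, which controls the intersection numbers $W(t,\ell)$ in terms of single Lee volumes, with Lemma \ref{lem:volumeestimate}, which compares Lee volumes of different radii and dimensions. The first inequality (relating $W(t,2\ell+1)$ to $W(t,1)$) comes from forming a ratio; the second (relating $W(t,1)$ to $\bbql{n,t}$) from one more application of the same comparison.

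For the first inequality, I would upper bound $W(t,2\ell+1)$ using Lemma \ref{lem:interesti}(2) to obtain a factor of the form $m^{c(\ell)}\bbql{n-\ell-1,\,t-\ell-1}$, and lower bound $W(t,1)$ using the $i=1$ summand in the explicit formula of Lemma \ref{lem:interesti}(1), which contributes a positive constant multiple of $\bbql{n-1,\,t-1}$. Dividing gives
\[
\frac{W(t,2\ell+1)}{W(t,1)} \lesssim m^{c(\ell)}\cdot\frac{\bbql{n-\ell-1,\,t-\ell-1}}{\bbql{n-1,\,t-1}},
\]
and then applying Lemma \ref{lem:volumeestimate} with $n \leftarrow n-1$, $t \leftarrow t-1$, $i \leftarrow \ell$ bounds the volume ratio by $\bigl(\tfrac{t-1}{2(n-1)}\bigr)^\ell$, $\bigl(\tfrac{C_m}{n-1}\bigr)^\ell$, or $\bigl(\tfrac{t-1}{n-1}\bigr)^\ell$ in the three cases, producing the claimed $\bigl(\tfrac{mt}{2n}\bigr)^\ell$-type factor (with excess powers of $m$ absorbed into $C_m$ in the odd case $t=\lceil m/2\rceil$).

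For the second inequality, one bounds $W(t,1)$ above directly by applying Lemma \ref{lem:interesti}(2) at $\ell'=1$ to get $W(t,1) \leq m\,\bbql{n-1,\,t-1}$, and then uses Lemma \ref{lem:volumeestimate} once more with $i=1$ to compare $\bbql{n-1,\,t-1}$ to $\bbql{n,t}$. This contributes exactly one more factor of $\tfrac{mt}{2n}$ (or its analogue in the other cases), raising the exponent from $\ell$ to $\ell+1$ and replacing $W(t,1)$ by $\bbql{n,t}$ on the right-hand side.

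The main technical obstacle is the odd regime with $t=\lceil m/2\rceil$, where Lemma \ref{lem:volumeestimate} produces only $\tfrac{C_m}{n+i}$ rather than $\tfrac{t+i}{n+i}$: one must verify that a single constant $C_m$, independent of $n$ and $\ell$, suffices to absorb all powers of $m$ picked up across both steps. The hypothesis $\ell<\min\{t,\,n\lfloor m/4\rfloor\}$ is there precisely to keep $t-\ell-1$ non-negative and to place every intermediate Lee volume in the regime where Lemma \ref{lem:volumeestimate} is valid. Once this bookkeeping is pinned down, the even case is a direct routine computation and the odd cases follow by the same template, so only the even case truly needs to be written out.
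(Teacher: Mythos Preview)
Your approach is essentially the same as the paper's: bound $W(t,2\ell+1)$ from above via Lemma~\ref{lem:interesti}(2), bound $W(t,1)$ from below via the $i=1$ term in the explicit formula of Lemma~\ref{lem:interesti}(1), and control the resulting volume ratio $\bbql{n-\ell-1,t-\ell-1}/\bbql{n-1,t-1}$ with Lemma~\ref{lem:volumeestimate}; then repeat once with $\ell'=1$ for the second inequality. The paper carries this out only for even $m$ and declares the other cases analogous, just as you propose.
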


\section{The Number of $t$-Lee-Error-Correcting Codes}\label{sec:numberof}

In this section we give an upper bound on the number of $t$-Lee-error-correcting codes, where we do not prescribe the cardinality of the codes. Throughout this section, unless otherwise stated, we allow $t$ to be a function in $n$. More precisely, we let $\lim\limits_{n \to +\infty} t(n)/(10 \sqrt{\theta_{m}n})=T,$ where $T \in [0,1]$ and $\theta_{m}$ denotes the average Lee weight over $\Zm$. However, we will often use $t$ as a shorthand for $t(n)$.

 To obtain an upper bound on the number of $t$-error-correcting codes we use a similar container algorithm as in \cite{balogh2016applications}. In order to apply this algorithm, we will first derive the necessary supersaturation estimates. 

Throughout this section $G=(V,E)$ is a simple, undirected graph without self-loops, nodes $V=\Zm^{n}$ and edges $E= \{(x,y) \in V^{2}: d^{\textnormal{L}}(x,y) \leq 2t\}.$ Any set $I \subseteq V$ that does not contain an edge is called an \emph{independent set} and corresponds to a $t$-Lee-error-correcting code. 
Moreover, we write $\Delta(G)$ for the \emph{maximum degree} of $G$ and $i(G)$ for the \emph{number of independent sets} in $G$. By $N_{G}(u)$ we denote the neighbourhood of a node $u$, i.e., the subset of nodes that are adjacent to $u$. Hence,
$$\Delta(G) = \max \{ \mid N_G(u) \mid : u \in V\}.$$
The container algorithm is based on a dynamic graph $(G_{i})_{i \geq 0}$ (initialized as $G_{0}=G$), iteratively reducing its node set $V(G_{i})$ by at least one node. More precisely, in each iteration of the algorithm the node with maximum degree (and possibly its neighbourhood) is removed from the graph.

Let $\mC \subseteq \Zm^{n}$ be a code and $G[\mC]=(\mC,E[\mC])$ be the corresponding induced subgraph of $G$.
For each $1 \leq r \leq 2t$, we define
$$E_{r} := \{(x,y) \in E[\mC] : d^{\textnormal{L}}(x,y) =r\}$$
and for $v \in \mC$ the number of edges in $E_{r}$ incident to $v$ by
$$\deg_{r}(v) := |\{u \in \mC: d^{\textnormal{L}}(u,v) =r\}|.$$ 

\begin{lemma}\label{lem:graphdegree}
Let $v \in \mC$ and $r\leq n/2$. Then
    $$\deg_{r}(v) \leq 3^{r}\binom{n}{r}$$ and therefore
    $$|E_{r}| \leq \frac{3^{r}}{2}\binom{n}{r}|\mC|.$$
\end{lemma}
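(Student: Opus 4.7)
The plan is to bound $\deg_r(v)$ by the total number of vectors $z \in \Zm^n$ with $w^{\textnormal{L}}(z) = r$, stratified by the Hamming support size of $z$. First I would observe that since each nonzero coordinate contributes Lee weight at least $1$, the support size $s$ of such a $z$ must satisfy $1 \leq s \leq r$.

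For a fixed support size $s$, I would count in three independent stages. The number of ways to place the support among $n$ coordinates is $\binom{n}{s}$. The number of ways to distribute Lee weights $w_1, \ldots, w_s \geq 1$ summing to $r$ is bounded by the number of compositions of $r$ into $s$ positive parts, namely $\binom{r-1}{s-1}$ (ignoring the extra constraint $w_i \leq \lfloor m/2 \rfloor$, which only tightens the bound). Finally, for each prescribed weight $w_i$, there are at most two residues in $\Zm$ of that Lee weight (the pair $w_i$ and $m - w_i$, which coincide only when $m$ is even and $w_i = m/2$), giving at most $2^s$ sign choices. Combining these yields
\begin{equation*}
\deg_r(v) \leq \sum_{s=1}^{r} \binom{n}{s}\binom{r-1}{s-1}2^s.
\end{equation*}

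The next step is to exploit $r \leq n/2$: by unimodality of binomial coefficients, $\binom{n}{s} \leq \binom{n}{r}$ for every $1 \leq s \leq r$. Factoring $\binom{n}{r}$ out of the sum and re-indexing via $k = s-1$ turns the remaining sum into $2\sum_{k=0}^{r-1}\binom{r-1}{k}2^k = 2 \cdot 3^{r-1}$ by the binomial theorem, which is at most $3^r$. This gives the first claim. The bound on $|E_r|$ then follows immediately by double counting: $|E_r| = \tfrac{1}{2}\sum_{v \in \mC}\deg_r(v) \leq \tfrac{3^r}{2}\binom{n}{r}|\mC|$.

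I do not anticipate serious obstacles here; the argument is essentially a union bound over support sizes combined with a clean binomial identity. The most delicate point is justifying $\binom{n}{s} \leq \binom{n}{r}$, which relies crucially on the hypothesis $r \leq n/2$ (without it, one would need a different estimate, e.g. absorbing the sum via $\binom{n}{s} \leq \binom{n}{r}(r/(n-r+1))^{r-s}$ or similar, which would change the constants).
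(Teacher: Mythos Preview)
Your proof is correct and follows essentially the same approach as the paper. The paper bounds $\deg_r(v)$ by the Lee-ball volume estimate $\bbql{n,r}\le\sum_{i=0}^{r}2^{i}\binom{n}{i}\binom{r}{i}$ (quoted from Lemma~\ref{lem:leebounds}), whereas you bound it by the Lee-sphere count $\sum_{s=1}^{r}2^{s}\binom{n}{s}\binom{r-1}{s-1}$ derived from scratch; both then use the unimodality step $\binom{n}{i}\le\binom{n}{r}$ for $i\le r\le n/2$ together with the binomial theorem, and the double-counting argument for $|E_r|$ is identical.
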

\begin{proof}
With the binomial theorem we get $\deg_{r}(v)\leq \bbql{n,r} \leq \sum_{i=0}^{n}2^{i}\binom{n}{i}\binom{r}{i}=\sum_{i=0}^{r}2^{i}\binom{n}{i}\binom{r}{i}$
$\leq \binom{n}{r}\sum_{i=0}^{r}2^{i}\binom{r}{i} =3^r \binom{n}{r}$, which proves the first statement. The second follows from summing over all pairs of codewords in $\mC$.
\end{proof}
We have the following  estimates for the Lee metric. 
\begin{lemma} \label{lem:supersat01}
    Let $\mC \subseteq \Zm^n$ be a $t$-Lee-error-correcting code and $C_{m}$ be a sufficiently large constant. If $|\mC| \geq 2H_{m}(n,t)$, 
    then
    \begin{align*}
    \sum_{r=1}^{2t}W(t,r)|E_{r}| \geq \frac{|\mC|^{2}\bbql{n,t}^{2}}{10m^{n}}\quad \textnormal{and} \quad
\quad|E[\mC]| \geq \begin{cases}
\frac{n|\mC|^{2}}{5mtH_{m}(n,t)} & \,\textnormal{if $m$ is even}, \\
\frac{n|\mC|^{2}}{10mC_{m} H_{m}(n,t)} & \, \textnormal{if $m$ is odd and $t= \lceil \frac{m}{2}\rceil$},\\
\frac{n|\mC|^{2}}{10mt H_{m}(n,t)} & \, \textnormal{in all other cases}.\\
\end{cases}   \end{align*}
\begin{proof}
    Similar to \cite[Lemma 5.3]{balogh2016applications}, we use Lemma \ref{lem:mon} and Lemma \ref{lem:estimateinter} to obtain the result. For completeness we include the proof in the appendix. 
\end{proof}
\end{lemma}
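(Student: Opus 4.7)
The plan is to derive both inequalities from a single double-counting identity applied to
$$S \,:=\, \sum_{z \in \Zm^n} |\ballL{n,t,z} \cap \mC|^2.$$
First I would bound $S$ from below by Cauchy--Schwarz (equivalently, Jensen's inequality applied to $x \mapsto x^2$). Since every codeword lies in exactly $\bbql{n,t}$ of the balls $\ballL{n,t,z}$, we have $\sum_z |\ballL{n,t,z} \cap \mC| = |\mC|\,\bbql{n,t}$, which gives
$$S \,\geq\, \frac{|\mC|^2\, \bbql{n,t}^2}{m^n}.$$

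Next I would expand $S$ by swapping the order of summation and applying Lemma~\ref{lem:assoclee}, which guarantees that $|\ballL{n,t,x} \cap \ballL{n,t,y}|$ depends on $x,y$ only through $d^{\textnormal{L}}(x,y)$. Splitting off the diagonal $x=y$ and grouping ordered pairs by their Lee distance $r$ yields
$$S \,=\, |\mC|\,\bbql{n,t} \,+\, 2\sum_{r=1}^{2t} W(t,r)\,|E_r|,$$
where the factor $2$ accounts for the two orderings of each unordered edge in $E_r$. The hypothesis $|\mC| \geq 2 H_m(n,t) = 2 m^n/\bbql{n,t}$ ensures $|\mC|\,\bbql{n,t} \leq \tfrac{1}{2}\cdot |\mC|^2\bbql{n,t}^2/m^n$, so the diagonal term can be absorbed into half of the lower bound on $S$. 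Rearranging yields $\sum_r W(t,r)|E_r| \geq |\mC|^2 \bbql{n,t}^2 /(4m^n)$, which is in fact stronger than the stated constant $1/10$.

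For the second inequality I would use the monotonicity $W(t,r)\leq W(t,1)$ for every $r\geq 1$ from Lemma~\ref{lem:mon} to factor $W(t,1)$ out of the sum, giving
$$W(t,1)\,|E[\mC]| \,\geq\, \sum_{r=1}^{2t} W(t,r)\,|E_r| \,\geq\, \frac{|\mC|^2\bbql{n,t}^2}{10\,m^n}.$$
Dividing through and plugging in the three upper bounds on $W(t,1)$ supplied by Lemma~\ref{lem:estimateinter} with $\ell=0$ — namely $W(t,1)$ is at most $\tfrac{mt}{2n}\bbql{n,t}$, $\tfrac{mC_m}{n}\bbql{n,t}$, or $\tfrac{mt}{n}\bbql{n,t}$ according to the parity of $m$ and whether $t=\lceil m/2\rceil$ — and recognising that $m^n/\bbql{n,t}=H_m(n,t)$ produces the three-case bound exactly as stated.

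The main technical friction I anticipate is bookkeeping of multiplicative constants through the absorption step and the factor-of-$2$ coming from ordered versus unordered pairs; the structural content is essentially one application of Cauchy--Schwarz combined with the monotonicity and volume-ratio estimates that have already been established.
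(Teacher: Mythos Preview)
Your proposal is correct and essentially identical to the paper's proof. The only cosmetic difference is that the paper double-counts with $\sum_{x}\binom{|K_x|}{2}$ (where $K_x=\ballL{n,t,x}\cap\mC$) instead of $\sum_x|K_x|^2$, so the identity $\sum_x\binom{|K_x|}{2}=\sum_r W(t,r)|E_r|$ holds on the nose and no diagonal-absorption step is needed; otherwise the convexity lower bound, the monotonicity $W(t,r)\le W(t,1)$ from Lemma~\ref{lem:mon}, and the bound on $W(t,1)$ from Lemma~\ref{lem:estimateinter} are used exactly as you describe.
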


The container method gives estimates for the number of independent sets of a graph by showing that the independent sets lie only in a small number of subsets of the node set which are `almost independent sets`. 
We will study the more detailed relationships after we have established the required saturation estimates. Our first supersaturation result gives a constraint on the size of the node set to ensure that the maximum degree of the induced graph is above a certain bound. 

\begin{proposition} (Supersaturation 1) \label{lem:supersat1}
    Let $ t\le  10 \sqrt{\theta_{m}n}$, $C_{m}$ be a sufficiently large constant and $\mC \subseteq \Zm^{n}$ be a $t$-Lee-error-correcting code. If $ |\mC| \ge  n^{4}H_{m}(n,t)$, then
    $$\Delta (G[\mC]) \gtrsim \begin{cases}
\frac{n^{3}|\mC|}{C_{m}H_{m}(n,t)} & \, \textnormal{if $m$ is odd and $t= \lceil \frac{m}{2}\rceil$},\\
\frac{n^{3}|\mC|}{t^{3}H_{m}(n,t)}  & \, \textnormal{in all other cases},\\
\end{cases}$$
asymptotically with respect to $n$.
\end{proposition}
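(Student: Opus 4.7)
The plan is a contradiction argument. Suppose $\Delta := \Delta(G[\mC])$ is strictly smaller than a constant multiple of the claimed target; I will derive a violation of the supersaturation estimate in Lemma~\ref{lem:supersat01}. Since $|\mC| \ge n^{4}H_{m}(n,t) \ge 2H_{m}(n,t)$, the first conclusion of that lemma applies and gives
$$
\sum_{r=1}^{2t}W(t,r)\,|E_{r}| \;\ge\; \frac{|\mC|^{2}\bbql{n,t}^{2}}{10\,m^{n}}.
$$

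First I would bound each $|E_{r}|$ by the minimum of two natural constraints. Under the hypothesis $\Delta \le D$, the handshake lemma yields $|E_{r}| \le \tfrac{1}{2}D\,|\mC|$, while the size of the Lee sphere of radius $r$ (the number of $u$ at distance exactly $r$ from a fixed $v$) gives $|E_{r}| \le \tfrac{1}{2}(\bbql{n,r}-\bbql{n,r-1})|\mC| \le \tfrac{1}{2}\bbql{n,r}|\mC|$. Combining, $|E_{r}| \le \tfrac{1}{2}\min(D,\bbql{n,r})|\mC|$. Substituting the rapid-decay bound $W(t,r) \le (mt/n)^{\lceil r/2 \rceil}\bbql{n,t}$ from Lemma~\ref{lem:estimateinter} (with the $C_{m}$ variant in the exceptional odd case $t=\lceil m/2\rceil$), the supersaturation inequality becomes
$$
\frac{|\mC|}{5\,H_{m}(n,t)} \;\le\; \sum_{r=1}^{2t}\left(\frac{mt}{n}\right)^{\lceil r/2 \rceil}\min(D,\bbql{n,r}).
$$

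The next step is to evaluate the right-hand sum by splitting at the crossover index $r^{\ast}$, the smallest $r$ with $\bbql{n,r} \ge D$. The tail $r \ge r^{\ast}$ contributes $D \sum_{r\ge r^{\ast}}(mt/n)^{\lceil r/2\rceil}$, which is a convergent geometric series of total size $O(D)$ because $mt/n \le 10m\sqrt{\theta_{m}/n} \to 0$ in the regime $t \le 10\sqrt{\theta_{m}n}$. For the head $r < r^{\ast}$, I would invoke the volume-ratio estimates of Lemma~\ref{lem:volumeestimate} (together with the elementary bound $\bbql{n,r}\le 3^{r}\binom{n}{r}$ from Lemma~\ref{lem:leebounds}) to compare each $\bbql{n,r}$ to $\bbql{n,t}$ and show the head contribution is of the same order. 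Inverting the resulting inequality forces $\Delta \gtrsim n^{3}|\mC|/(t^{3}H_{m}(n,t))$ in the generic case, and $\Delta \gtrsim n^{3}|\mC|/(C_{m}H_{m}(n,t))$ in the exceptional odd case, contradicting the assumed upper bound.

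The hard part will be the sum estimate. Naive averaging from the second inequality of Lemma~\ref{lem:supersat01} gives only $\Delta \ge 2|E[\mC]|/|\mC| \gtrsim n|\mC|/(t\,H_{m}(n,t))$, which falls short of the target by a factor of $(n/t)^{2}$. Amplifying by this factor requires simultaneously exploiting the sphere bound $\bbql{n,r}$ for small $r$ (where $W(t,r)$ is comparatively large) and the $\Delta$-bound for large $r$ (where the sphere is enormous but $W(t,r)$ has decayed); this is why the $\min(D,\bbql{n,r})$ bookkeeping is essential and why a pigeonhole applied to $\sum_{r}|E_{r}|$ alone does not suffice. The bifurcation in the conclusion is inherited directly from the case split in Lemma~\ref{lem:estimateinter}.
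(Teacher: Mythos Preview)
Your overall plan—feed the first inequality of Lemma~\ref{lem:supersat01} together with the decay $W(t,r)\le (mt/n)^{\lceil r/2\rceil}\bbql{n,t}$ from Lemma~\ref{lem:estimateinter}, and separately control small and large $r$—is the same engine the paper uses. But your execution has a genuine gap: the claim that the tail $\sum_{r\ge r^{\ast}}(mt/n)^{\lceil r/2\rceil}D$ is ``of total size $O(D)$'' is too weak to finish. From $O(D)$ you only recover $D\gtrsim |\mC|/H_{m}(n,t)$, which is exactly the naive average-degree bound you yourself flagged as insufficient. The amplification by $(n/t)^{3}$ comes precisely from the leading geometric factor $(mt/n)^{\lceil r^{\ast}/2\rceil}$ in the tail, so you must keep it and argue that $r^{\ast}\ge 5$ under your contradiction hypothesis on $D$; otherwise the $(n/t)^{3}$ never appears. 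Your head analysis is also only asserted, not carried out, and with a data-dependent $r^{\ast}$ this step is awkward.

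The paper sidesteps both difficulties by working \emph{directly} rather than by contradiction, and by splitting at the fixed index $r=5$. It first uses the second inequality of Lemma~\ref{lem:supersat01} together with $|\mC|\ge n^{4}H_{m}(n,t)$ to obtain a crude bound $|E[\mC]|\gtrsim n^{5}|\mC|/t$; this renders the sphere contributions at $r\le 4$ negligible against the total edge count (namely $|E_{1}|+|E_{2}|\lesssim n^{2}|\mC|\lesssim (t/n^{3})|E[\mC]|$ and $|E_{3}|+|E_{4}|\lesssim n^{4}|\mC|\lesssim (t/n)|E[\mC]|$). For $r\ge 5$ the paper uses nothing about $\Delta$ at all—only the trivial $\sum_{r\ge 5}|E_{r}|\le |E[\mC]|$—together with $W(t,r)\lesssim (t/n)^{2}W(t,1)$, yielding $\sum_{r}W(t,r)|E_{r}|\lesssim (t^{2}/n^{2})W(t,1)|E[\mC]|$. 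Plugging this back into the first inequality of Lemma~\ref{lem:supersat01} and using $W(t,1)\lesssim (t/n)\bbql{n,t}$ gives the sharper $|E[\mC]|\gtrsim n^{3}|\mC|^{2}/(t^{3}H_{m}(n,t))$, and the conclusion follows from the average degree. Your $\min(D,\bbql{n,r})$ bookkeeping can be made to work, but only after you replace ``$O(D)$'' by ``$O((mt/n)^{3}D)$'' and verify $r^{\ast}\ge 5$; at that point it is essentially the contrapositive of the paper's argument, with more moving parts.
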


        \begin{proof}
        We will show the statement for the case where $m$ is even, the proof is similar if $m$ is odd.
        Since $|\mC| \geq n^{4}H_{m}(n,t)$ we have from Lemma \ref{lem:supersat01}
$$|E[\mC]| \geq \frac{|\mC|^{2}n}{5mtH_{m}(n,t)} \geq |\mC|\frac{n^{5}}{5mt}.$$ 
Together with $|E_1|+|E_2| \lesssim n^{2}|\mC|$ and $|E_3|+|E_4| \lesssim n^{4}|\mC|$ (which follows from Lemma \ref{lem:graphdegree}) this gives $|E_1|+|E_2| \lesssim \frac{t|E[\mC]|}{n^{3}}$ and $|E_3|+|E_4| \lesssim \frac{t|E[\mC]|}{n}.$
Furthermore, by Lemma \ref{lem:estimateinter} we have
\begin{align*}
    &W(t,2) \leq W(t,1),\\
    &W(t,4) \leq W(t,3) \lesssim \frac{tW(t,1)}{n},\\
    & W(t,r) \leq W(t,5) \lesssim \frac{t^{2}W(t,1)}{n^{2}} \quad \forall r \geq 5,
\end{align*}
which gives
$$\displaystyle \sum_{r=1}^{2t}W(t,r) |E_{r}| \lesssim \left( \frac{t}{n^{3}}+\frac{t^{2}}{n^{2}} +\frac{t^{2}}{n^{2}}\right)|E[\mC]|W(t,1) \lesssim \frac{t^{2}}{n^{2}}W(t,1)|E[\mC]|.$$
By Lemma \ref{lem:supersat01} we have
 $$  \frac{|\mC|^{2}\bbql{n,t}^{2}}{m^{n}} \lesssim  \sum_{r=1}^{2t}W(t,r)|E_{r}|\lesssim \frac{t^{2}}{n^{2}} W(t,1)|E[\mC]|.$$
Using $W(t,1) \lesssim (t/n)\bbql{n,t}$ from Lemma \ref{lem:estimateinter} we obtain 
 $$|E[\mC]| \gtrsim \frac{n^{2}|\mC|^{2}\bbql{n,t}^{2}}{t^{2}m^{n}W(t,1)} \gtrsim \frac{n^{3}|\mC|^{2}}{t^{3}H_{m}(n,t)}.$$
 Thus, the average degree in $G[\mC]$ is $\gtrsim \frac{n^{3}|\mC|}{t^{3}H_{m}(n,t)}$, and hence $\Delta(G[\mC]) \gtrsim  \frac{n^{3}|\mC|}{t^{3}H_{m}(n,t)}.$
    \end{proof}
In the following we will give an estimate for the maximum degree that certifies that $i(G[\mC])$ is below $2^{(1+\varepsilon)H_{m}(n,t)}$. To get our estimate we consider the nodes of $G[\mC]$ in terms of their number of neighbours such that their Lee distance is at most $20$ from each other. By considering balls of radius $t$ and showing that their overlap is negligible we can estimate the number of $t$-Lee-error-correcting codes for $60 \le t \le  10 \sqrt{\theta_{m}n} $. Using a number less than $20$ would not give us suitable estimates in Lemma \ref{lem:C2}. Therefore, we choose this number. We define the sets 
\begin{align} \label{eq:S1}
    &\mC_{1}:=   \left\{v \in \mC : \deg_{r}(v) \leq \varepsilon n^{\lceil r/2\rceil/2} \quad \forall r \in \{1, \dots, 20\} \right\},\\ \label{eq:S2}
           &\mC_{2}:= \left\{v \in \mC : \displaystyle \sum_{r=1}^{20}\deg_{r}(v) \geq \frac{\log(n)}{\varepsilon} \right\}
\end{align}
and compute $i(G[\mC_{i}]),$ for $i=1,2$, assuming that the maximum degree of $G[\mC_{i}]$ is not too large. Note that for $v \in \mC$ we have $\sum_{r=1}^{20}\deg_{r}(v) \in o(n^{5})$ if $n \to +\infty$ and hence $\mC= \mC_{1} \cup \mC_{2}$ for sufficiently large $n$. Since every independent set of $G[\mC]$ can potentially be a union of an independent set of $G[\mC_{1}]$ and $G[\mC_{2}]$ when $n$ is sufficiently large, we get an overall estimate using $i(G[\mC]) \leq i(G[\mC_{1}])i(G[\mC_{2}])$.
\begin{lemma} \label{lem:C1}
    Let $t \leq  10 \sqrt{\theta_{m}n} $, $\mC \subseteq \Zm^{n}$ be a $t$-Lee-error-correcting code, $\Delta(G[\mC]) \leq n^{5}$ and fix $\varepsilon > 0$. Let 
 $\mC_{1}$ be as defined in \eqref{eq:S1}.

    Then $|\mC_{1}| \leq (1+O(\varepsilon))H_{m}(n,t)$, for sufficiently large $n$.
    \begin{proof}
      The proof can be found in the appendix.
    \end{proof}
\end{lemma}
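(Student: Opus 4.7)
The strategy is an inclusion--exclusion argument on the Lee balls of radius $t$ centered at elements of $\mC_{1}$. Since $\bigcup_{v \in \mC_{1}} \ballL{n,t,v} \subseteq \Zm^{n}$ and pairwise intersection sizes depend only on Lee distances (Lemma \ref{lem:assoclee}), Bonferroni's inequality gives
\[
m^n \;\geq\; |\mC_{1}|\,\bbql{n,t} \;-\; \tfrac{1}{2}\sum_{r=1}^{2t} W(t,r)\sum_{v \in \mC_{1}}\deg_{r}^{\mC_{1}}(v),
\]
where $\deg_{r}^{\mC_{1}}(v) := |\{u \in \mC_{1} \setminus \{v\} : d^{\textnormal{L}}(u,v) = r\}| \leq \deg_{r}(v)$. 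The target estimate $|\mC_{1}| \leq (1+O(\varepsilon))H_{m}(n,t)$ therefore reduces to showing that the double sum is $(O(\varepsilon) + o(1))\cdot|\mC_{1}|\bbql{n,t}$.

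I would split the range of summation at the threshold $r = 20$ used to define $\mC_{1}$. For $1 \leq r \leq 20$, the defining property yields $\deg_{r}(v) \leq \varepsilon n^{\lceil r/2 \rceil /2}$, while Lemma \ref{lem:estimateinter} together with the monotonicity $W(t,r) \leq W(t,r-1)$ of Lemma \ref{lem:mon} gives $W(t,r) \lesssim (mt/n)^{\lceil r/2 \rceil}\bbql{n,t}$. The hypothesis $t \leq 10\sqrt{\theta_{m}n}$ forces $mt/\sqrt{n}$ to be bounded, and consequently
\[
n^{\lceil r/2 \rceil /2}\,W(t,r) \;\lesssim\; K_{m}^{\lceil r/2\rceil}\,\bbql{n,t}
\]
for some $m$-dependent constant $K_{m}$. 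Summing over the finite range $r \leq 20$ yields a contribution of $O(\varepsilon)\cdot|\mC_{1}|\bbql{n,t}$.

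For $21 \leq r \leq 2t$ the defining property of $\mC_{1}$ is no longer available, so I would combine the monotonicity $W(t,r) \leq W(t,21)$ from Lemma \ref{lem:mon} with the maximum-degree hypothesis $\sum_{r} \deg_{r}(v) \leq \Delta(G[\mC]) \leq n^{5}$. Another application of Lemma \ref{lem:estimateinter} gives $W(t,21) \lesssim (mt/n)^{11}\bbql{n,t} \lesssim n^{-11/2}\bbql{n,t}$ in the $t = O(\sqrt{n})$ regime. Hence the tail contribution is at most $W(t,21)\cdot n^{5}|\mC_{1}| \lesssim n^{-1/2}\cdot|\mC_{1}|\bbql{n,t} = o(1)\cdot|\mC_{1}|\bbql{n,t}$.

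Combining both regimes gives $m^{n} \geq (1 - O(\varepsilon) - o(1))\cdot|\mC_{1}|\bbql{n,t}$, which rearranges to $|\mC_{1}| \leq (1 + O(\varepsilon))H_{m}(n,t)$ for $n$ large enough. The main obstacle is the tail $r > 20$, where we lose the sharp $\mC_{1}$-control and must rely on the fairly blunt hypothesis $\Delta(G[\mC]) \leq n^{5}$; the threshold $20$ in the definition of $\mC_{1}$ is calibrated precisely so that $W(t,21) \lesssim n^{-11/2}\bbql{n,t}$ under $t = O(\sqrt{n})$, which is just enough decay to absorb the $n^{5}$ factor. The corner case ($m$ odd and $t = \lceil m/2 \rceil$) of Lemma \ref{lem:estimateinter} is handled analogously with the constant $C_{m}$ replacing $mt/n$.
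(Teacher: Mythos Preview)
Your proposal is correct and follows essentially the same approach as the paper: both split the overlap sum at $r=20$, use the defining inequality $\deg_{r}(v)\leq \varepsilon n^{\lceil r/2\rceil/2}$ together with Lemma~\ref{lem:estimateinter} (which under $t=O(\sqrt{n})$ gives $W(t,r)\lesssim n^{-\lceil r/2\rceil/2}\bbql{n,t}$) for $r\leq 20$, and use monotonicity $W(t,r)\leq W(t,21)$ combined with $\Delta(G[\mC])\leq n^{5}$ for $r>20$. The only cosmetic difference is that the paper bounds the overlap of each individual ball and then argues that $(1-O(\varepsilon))\bbql{n,t}$ points of every ball are unique, whereas you phrase the same computation as a Bonferroni inequality summed over all of $\mC_{1}$.
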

\begin{lemma} \label{lem:C2}
    Let $60 \leq t \leq  10 \sqrt{\theta_{m}n} $, $\mC \subseteq \Zm^{n}$ be a $t$-Lee-error-correcting code and $\Delta(G[\mC]) \leq n^{5}$. Fix $0 <\varepsilon$, let $n$ be sufficiently large and $\mC_{2}$ be as defined in \eqref{eq:S2}.
    Then the number of independent subsets of $\mC_{2}$ satisfies
$$i(G[\mC_{2}]) \leq 2^{O(\varepsilon H_{m}(n,t))}.$$
    \begin{proof}
The proof can be found in the appendix.
      \end{proof}
\end{lemma}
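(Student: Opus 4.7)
The plan is to show $|\mC_{2}| \lesssim \varepsilon H_{m}(n,t)$, so that the trivial bound $i(G[\mC_{2}]) \leq 2^{|\mC_{2}|}$ yields the claim. As a setup, note that since $t \geq 60 \geq 10$ we have $20 \leq 2t$, so every pair of vertices at Lee distance $\leq 20$ in $\mC_{2}$ is already an edge of $G$, and the structure described by $\sum_{r=1}^{20}\deg_{r}(v)$ lives inside $G[\mC]$.

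A first, naive attempt is to double count: the defining inequality of $\mC_{2}$ gives
\begin{align*}
|\mC_{2}| \cdot \frac{\log(n)}{\varepsilon} \;\leq\; \sum_{v \in \mC_{2}}\sum_{r=1}^{20}\deg_{r}(v) \;\leq\; 2\sum_{r=1}^{20}|E_{r}|,
\end{align*}
and Lemma \ref{lem:graphdegree} bounds each $|E_{r}|$ in terms of $\binom{n}{r}|\mC|$, while the hypothesis $\Delta(G[\mC]) \leq n^{5}$ combined with the contrapositive of Proposition \ref{lem:supersat1} yields $|\mC| \lesssim n^{7/2}H_{m}(n,t)$. This gives only the crude bound $|\mC_{2}| \lesssim \varepsilon n^{O(1)} H_{m}(n,t)/\log(n)$, which is far too weak.

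To sharpen it, I would upgrade to a weighted double count using the intersection volumes $W(t,r)$. The key inputs are (i) the saturation bound of Lemma \ref{lem:supersat01}, which relates $\sum_{r=1}^{2t}W(t,r)|E_{r}|$ to $|\mC|^{2}\bbql{n,t}^{2}/m^{n}$; (ii) the monotonicity $W(t,r) \geq W(t,20)$ for $r \leq 20$ from Lemma \ref{lem:mon}; and (iii) the rapid decay $W(t,r) \lesssim (mt/n)^{\lceil r/2\rceil - 1}W(t,1)$ from Lemma \ref{lem:estimateinter}, which under $t \leq 10\sqrt{\theta_{m}n}$ gains an $(n/(mt))^{9}$-factor at $r = 19$. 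Splitting $\sum_{r=1}^{2t}W(t,r)|E_{r}|$ at the cut-off $r = 20$ and using (iii) to bound the tail $\sum_{r>20}$ in terms of $W(t,1)|E[\mC]|$ (with $|E[\mC]| \leq n^{5}|\mC|/2$ by the max-degree hypothesis) should isolate the low-range contribution and, together with (i) and (ii), produce the sharp bound $|\mC_{2}| \lesssim \varepsilon H_{m}(n,t)$.

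The hard part will be this calibration of polynomial factors. The decay exponent $\lceil 20/2 \rceil - 1 = 9$ must dominate the $n^{O(1)}$ pollution coming from $\binom{n}{r}$ in Lemma \ref{lem:graphdegree}, from the bound $|\mC| \lesssim n^{7/2}H_{m}(n,t)$, and from the degree-sum $|E[\mC]| \lesssim n^{5}|\mC|$, all while preserving the $\log(n)/\varepsilon$ factor from the defining inequality of $\mC_{2}$. As the authors remark preceding the statement, a cut-off smaller than $20$ would not yield enough decay in $W(t,r)$ to close the argument, while the constraint $t \geq 60$ is exactly what ensures $20 \leq 2t$ so that every close pair lies in $G$.
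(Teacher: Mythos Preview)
Your plan has a genuine gap. You aim to prove $|\mC_{2}|\lesssim \varepsilon H_{m}(n,t)$ and then apply $i(G[\mC_{2}])\le 2^{|\mC_{2}|}$, but the weighted double count you outline does not give this. Lemma~\ref{lem:supersat01} is a \emph{lower} bound on $\sum_{r}W(t,r)|E_{r}|$, so splitting at $r=20$ and subtracting an upper bound on the tail yields a lower bound on $\sum_{r\le 20}W(t,r)|E_{r}|$, not an upper bound. To pass from this to an upper bound on $\sum_{r\le 20}|E_{r}|$ (and hence on $|\mC_{2}|$) you would need a \emph{lower} bound on $W(t,20)$, which the paper never proves; Lemmata~\ref{lem:interesti} and~\ref{lem:estimateinter} only give upper bounds. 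Even granting such an estimate, the ``calibration of polynomial factors'' you flag as the hard part is left entirely open. In short, the mechanism you propose runs in the wrong direction.

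The paper does not bound $|\mC_{2}|$ at all. Instead it picks a maximal subset $X\subseteq\mC_{2}$ whose elements are pairwise at Lee distance $>t$. The defining inequality of $\mC_{2}$ guarantees, for each $x\in X$, a set $A_{x}\subseteq\mC\cap\ballL{n,20,x}$ of size $\lceil\log(n)/\varepsilon\rceil$. Since $t\ge 60$, points $u\in A_{x}$ and $u'\in A_{y}$ with $x\neq y$ satisfy $d^{\textnormal{L}}(u,u')>t-40\ge 20$, so the overlap of their $t$-balls is at most $W(t,21)$; within the same $A_{x}$ the overlap is at most $W(t,1)$. With $\Delta(G[\mC])\le n^{5}$ and Lemma~\ref{lem:estimateinter}, every $t$-ball around a point of $\bigcup_{x}A_{x}$ loses only $o(\bbql{n,t})$ to overlaps, so the union of these balls has size at least $|X|\lceil\log(n)/\varepsilon\rceil(1-o(1))\bbql{n,t}\le m^{n}$, giving $|X|\lesssim \varepsilon H_{m}(n,t)/\log(n)$. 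By maximality of $X$ the balls $\ballL{n,t,x}$, $x\in X$, cover $\mC_{2}$; each such ball has diameter $\le 2t$, so any independent set of $G[\mC_{2}]$ picks at most one element from it, and each ball meets $\mC$ in at most $n^{5}$ points by the degree hypothesis. Hence $i(G[\mC_{2}])\le (n^{5})^{|X|}=2^{5|X|\log n}\le 2^{O(\varepsilon H_{m}(n,t))}$. Note in particular that the role of $t\ge 60$ is not merely that $20\le 2t$, but that it forces the cross-cluster distances above $20$ so that $W(t,21)$, with its decay exponent $11$, controls the cross-overlaps against the $n^{5}$ degree bound.
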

\begin{proposition} (Supersaturation 2) \label{lem:supersat2}
    Let $0 < \varepsilon$, $60 \leq t \leq  10 \sqrt{\theta_{m}n} , \mC \subseteq \Zm^{n}$ be a $t$-Lee-error-correcting code and let $n$ be sufficiently large. If $\Delta (G[\mC]) \leq n^{5}$, then 
    \begin{align*}
        i(G[\mC]) \leq 2^{(1+\varepsilon)H_{m}
        (n,t)}.
    \end{align*}
    \begin{proof}
Defining $\mC_1$ and $\mC_2$ as in \eqref{eq:S1} and \eqref{eq:S2} gives
\begin{align*}
    i(G[\mC]) \leq i(G[\mC_{1}])i(G[\mC_{2}]) \leq 2^{|\mC_{1}|}i(G[\mC_{2}]) \leq 2^{(1+O(\varepsilon))H_{m}(n,t)}.
\end{align*}
    \end{proof}
\end{proposition}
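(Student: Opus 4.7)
The plan is to decompose the code $\mC$ into the two subsets $\mC_1$ and $\mC_2$ already set up in \eqref{eq:S1} and \eqref{eq:S2} and then to estimate $i(G[\mC_1])$ and $i(G[\mC_2])$ separately, using the two preceding lemmas as black boxes. The first step is to justify that this is actually a covering of $\mC$: for every $v \in \mC$ the assumption $\Delta(G[\mC]) \leq n^5$ together with Lemma \ref{lem:graphdegree} forces $\sum_{r=1}^{20} \deg_r(v) \in o(n^{5})$, so any $v$ failing the defining condition of $\mC_2$ automatically has all of its short-range degrees small enough to belong to $\mC_1$. Hence $\mC = \mC_1 \cup \mC_2$ for $n$ large.

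The key combinatorial observation is that every independent set $I \subseteq \mC$ decomposes as $I = (I \cap \mC_1) \cup (I \cap \mC_2)$, where each piece is an independent set of the corresponding induced subgraph. Since the decomposition is determined by the intersections, counting independent sets on each side independently gives the subadditive-in-log inequality
\begin{equation*}
    i(G[\mC]) \leq i(G[\mC_1]) \cdot i(G[\mC_2]).
\end{equation*}
For the first factor I would use the trivial bound $i(G[\mC_1]) \leq 2^{|\mC_1|}$, then substitute Lemma \ref{lem:C1} to obtain $i(G[\mC_1]) \leq 2^{(1+O(\varepsilon))H_m(n,t)}$. For the second factor Lemma \ref{lem:C2} delivers $i(G[\mC_2]) \leq 2^{O(\varepsilon H_m(n,t))}$ directly. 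Multiplying yields $i(G[\mC]) \leq 2^{(1+O(\varepsilon))H_m(n,t)}$, which after rescaling the initial $\varepsilon$ (the hidden constants in the two $O(\varepsilon)$ terms depend neither on $n$ nor on $\varepsilon$, so they can be absorbed into any prescribed $\varepsilon' > 0$ by shrinking $\varepsilon$) produces the stated bound $2^{(1+\varepsilon)H_m(n,t)}$.

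The main obstacle is not in this proposition itself but was resolved in the two preceding lemmas: Lemma \ref{lem:C1} is the genuinely delicate one, since it packages the packing-style argument that forces $\mC_1$ to sit close to the sphere-packing size, and Lemma \ref{lem:C2} needs the supersaturation estimate of Lemma \ref{lem:supersat01} combined with a container-style iteration. Once those are granted, the remaining work in the present statement is essentially bookkeeping: verifying that $\mC = \mC_1 \cup \mC_2$, invoking subadditivity of the logarithm of $i(\cdot)$ under vertex-set unions, and checking that the $O(\varepsilon)$ terms from the two factors are of the same order and thus can be jointly absorbed into a single $(1+\varepsilon)$ prefactor at the end.
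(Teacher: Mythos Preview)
Your proposal is correct and follows the paper's proof essentially verbatim: split $\mC = \mC_1 \cup \mC_2$, use the multiplicative bound $i(G[\mC]) \leq i(G[\mC_1])\, i(G[\mC_2]) \leq 2^{|\mC_1|}\, i(G[\mC_2])$, and plug in Lemmas~\ref{lem:C1} and~\ref{lem:C2}. One small note: the covering $\mC = \mC_1 \cup \mC_2$ does not actually require $\Delta(G[\mC]) \leq n^5$ or Lemma~\ref{lem:graphdegree}; it follows directly because $v \notin \mC_2$ gives $\deg_r(v) < \log(n)/\varepsilon \leq \varepsilon n^{1/2} \leq \varepsilon n^{\lceil r/2\rceil/2}$ for every $1 \leq r \leq 20$ once $n$ is large.
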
 
To proof the container lemma also for small $t$, i.e., $1 \leq t \leq 60$, we use a saturation estimate from \cite{balogh2016applications}.
\begin{lemma} \label{lem:supersat2b}
Let $\mC \subseteq \Zm^{n}$ and $C_{m}$ be a sufficiently large constant. If $|\mC| \geq H_{m}(n,t)+ \gamma$ then there are at least $$\begin{cases}
\frac{\gamma 2n}{mt} & \,\textnormal{if $m$ is even}, \\
\frac{\gamma n}{m C_{m}} & \, \textnormal{if $m$ is odd and $t= \lceil \frac{m}{2}\rceil$},\\
\frac{\gamma n}{mt}& \, \textnormal{in all other cases},\\
\end{cases}$$ pairs $v,w \in \mC$ that have Lee distance at most $2t$ from each other.
\begin{proof}
     Using Lemma \ref{lem:estimateinter} the proof is similar to the proof of  \cite[Lemma 5.2]{balogh2016applications} and can be found in the appendix.
\end{proof}
\end{lemma}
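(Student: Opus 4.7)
The plan is to adapt the standard double counting / pigeonhole argument for sphere-packing excess, using Lemma \ref{lem:estimateinter} to convert a count over intersections of balls into the desired lower bound on the number of close pairs. Throughout I will write $\Ball^{\textnormal{L}}(n,t,v)$ for the Lee ball of radius $t$ centered at $v$ and, for $x \in \Zm^n$, set $k_x := |\{v \in \mC : d^{\textnormal{L}}(v,x) \le t\}|$.

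\textbf{Step 1 (double count over covered points).} Summing $k_x$ over $x \in \Zm^n$ counts pairs $(v,x)$ with $v \in \mC$ and $x \in \Ball^{\textnormal{L}}(n,t,v)$, so
\[
\sum_{x \in \Zm^n} k_x \;=\; |\mC| \cdot \bbql{n,t} \;\ge\; (H_m(n,t)+\gamma)\bbql{n,t} \;=\; m^n + \gamma \bbql{n,t}.
\]
Since $\binom{k}{2} \ge \max(k-1,0)$ for every nonnegative integer $k$, this yields
\[
\sum_{x \in \Zm^n} \binom{k_x}{2} \;\ge\; \sum_{x \in \Zm^n} (k_x - 1) \;\ge\; \gamma \bbql{n,t}.
\]

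\textbf{Step 2 (expand the left-hand side over codeword pairs).} On the other hand, reversing the order of summation gives
\[
\sum_{x \in \Zm^n} \binom{k_x}{2} \;=\; \sum_{\{v,w\} \subseteq \mC,\, v \ne w} \bigl|\Ball^{\textnormal{L}}(n,t,v) \cap \Ball^{\textnormal{L}}(n,t,w)\bigr| \;=\; \sum_{\{v,w\}} W\bigl(t,d^{\textnormal{L}}(v,w)\bigr),
\]
where the equality uses Lemma \ref{lem:assoclee}. Pairs with $d^{\textnormal{L}}(v,w) > 2t$ contribute $0$, and by Lemma \ref{lem:mon} any remaining pair contributes at most $W(t,1)$. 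Writing $N$ for the number of pairs $\{v,w\} \subseteq \mC$ with $d^{\textnormal{L}}(v,w) \le 2t$, combining with Step 1 yields
\[
N \cdot W(t,1) \;\ge\; \gamma \bbql{n,t}, \qquad \text{i.e.,} \qquad N \;\ge\; \frac{\gamma \bbql{n,t}}{W(t,1)}.
\]

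\textbf{Step 3 (invoke the intersection estimate).} Lemma \ref{lem:estimateinter} with $\ell=0$ bounds $W(t,1)$ by $\tfrac{mt}{2n}\bbql{n,t}$ when $m$ is even, by $\tfrac{mC_m}{n}\bbql{n,t}$ when $m$ is odd and $t=\lceil m/2 \rceil$, and by $\tfrac{mt}{n}\bbql{n,t}$ in the remaining cases. Substituting these into $N \ge \gamma \bbql{n,t}/W(t,1)$ produces the three bounds $\tfrac{2n\gamma}{mt}$, $\tfrac{n\gamma}{mC_m}$, and $\tfrac{n\gamma}{mt}$ claimed in the statement.

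The proof is essentially a one-line application of double counting, so I don't expect a real obstacle; the only nontrivial inputs are the monotonicity result (Lemma \ref{lem:mon}) and the sharp estimate $W(t,1) \lesssim (mt/n)\bbql{n,t}$ from Lemma \ref{lem:estimateinter}. The most delicate point is bookkeeping the three separate cases of $m$ (even, odd with $t=\lceil m/2\rceil$, and the rest) so that the constants are handed off from Lemma \ref{lem:estimateinter} without loss.
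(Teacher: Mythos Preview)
Your proof is correct and follows essentially the same double-counting argument as the paper: both set $k_x=|K_x|$, use $\sum_x k_x=|\mC|\,\bbql{n,t}\ge m^n+\gamma\,\bbql{n,t}$ to bound $\sum_x\binom{k_x}{2}\ge\gamma\,\bbql{n,t}$, then upper bound this sum by $N\cdot W(t,1)$ via Lemma~\ref{lem:mon} and finish with the estimate $W(t,1)\lesssim (mt/n)\bbql{n,t}$. The only cosmetic difference is that the paper splits the last step into $W(t,1)\le m\,\bbql{n-1,t-1}$ (Lemma~\ref{lem:interesti}) followed by Lemma~\ref{lem:volumeestimate}, whereas you invoke the combined form in Lemma~\ref{lem:estimateinter} directly.
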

Assuming these saturation results, we can use a modification of the container lemma \cite[Lemma 2.3]{dong2022number}, which directly implies the main theorem.  For completeness, we give the proof. However, we will first describe the container algorithm and its properties.
\begin{lemma} \label{lem:Container}
    Let $1 \leq t \leq  10 \sqrt{\theta_{m}n} , 0< \varepsilon$ and $n$ be sufficiently large. Then there is a collection $\mathcal{F}$ of subsets of $\Zm^{n}$ with the following properties 
    \begin{enumerate}
    \item [(i)] $|\mF| \leq 2^{\varepsilon H_{m}(n,t)}.$
    \item [(ii)] Every $t$-Lee-error-correcting code $\mC \subseteq \Zm^{n}$ is contained in some $F \in \mF$.
    \item [(iii)] $i(G[F]) \leq 2^{(1+\varepsilon)H_{m}(n,t)}$ for every $F \in \mF$.
\end{enumerate} 
\end{lemma}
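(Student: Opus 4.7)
My plan is to follow the container algorithm of \cite{dong2022number} (specifically the argument of their Lemma~2.3), transported to the Lee-metric graph $G$ with Proposition~\ref{lem:supersat1}, Proposition~\ref{lem:supersat2} and Lemma~\ref{lem:supersat2b} as the driving supersaturation estimates. For each $t$-Lee-error-correcting code $\mathcal{C}\subseteq\Zm^n$ the algorithm produces a short \emph{algorithm trace} recording a fingerprint $T(\mathcal{C})\subseteq\mathcal{C}$ together with the ``not-in-$\mathcal{C}$'' decisions made along the way; the trace in turn determines a container $F(T)$ with $\mathcal{C}\subseteq F(T)$ and $\Delta(G[F(T)])\le n^5$, and $\mathcal{F}$ is the family of all such $F(T)$. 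Operationally, initialise $T_0=\emptyset$ and $U_0=\Zm^n$; at stage $i$ stop if $\Delta(G[U_i])\le n^5$, otherwise pick (by a fixed tie-break) the vertex $v_i$ of maximum $G[U_i]$-degree and record whether $v_i\in\mathcal{C}$: if yes, append $v_i$ to the fingerprint and set $U_{i+1}=U_i\setminus(\{v_i\}\cup N_G(v_i))$; if no, set $U_{i+1}=U_i\setminus\{v_i\}$. Independence of $\mathcal{C}$ preserves the invariant $\mathcal{C}\subseteq T_i\cup U_i$ throughout, so (ii) is automatic, and at termination Proposition~\ref{lem:supersat2} applies to $U_{i^*}$ to yield $i(G[F(T)])\le 2^{(1+\varepsilon)H_m(n,t)}$, which is (iii).

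For (i) the task is to bound the total number of traces. Trace length is dominated by the fingerprint length $s=|T(\mathcal{C})|$ plus the number of ``not-in-$\mathcal{C}$'' decisions; but each not-in-$\mathcal{C}$ step deletes the currently chosen max-degree vertex of $G[U_i]$, which has degree $>n^5$, so there can be at most $|\Zm^n|/n^5 = o(H_m(n,t))$ such steps. To bound $s$ itself, I use that each fingerprint-appending step removes at least $\Omega(n^3 |\mathcal{C}\cap U_i|/(t^3 H_m(n,t)))$ vertices from $U_i$ by Proposition~\ref{lem:supersat1} (generic case; the two other cases of that Proposition, and Lemma~\ref{lem:supersat2b} in the range $1\le t\le 60$, give analogous lower bounds). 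A geometric-decay estimate on $|U_i|$ while $|\mathcal{C}\cap U_i|\ge n^4 H_m(n,t)$, combined with the observation that only $O(1)$ further fingerprint steps can occur beyond that threshold (by the Hamming bound $|\mathcal{C}|\le H_m(n,t)$), gives $s=o(H_m(n,t))$. The number of traces is then at most $\binom{m^n}{\le s+o(H_m(n,t))} \le 2^{O(n\log m\cdot o(H_m(n,t)))}=2^{o(H_m(n,t))}\le 2^{\varepsilon H_m(n,t)}$, proving (i).

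The main obstacle I expect is the quantitative bookkeeping of the geometric decay across the three cases of Proposition~\ref{lem:supersat1}: since each fingerprint step removes only one element of $\mathcal{C}$ (by independence), what decays geometrically is $|U_i|$ rather than $|\mathcal{C}\cap U_i|$, so the geometric series has to be set up carefully comparing the lower bound $\Omega(n^3|\mathcal{C}\cap U_i|/(t^3 H_m(n,t)))$ on per-step $U$-shrinkage to the size of $U_i$. The three cases of Proposition~\ref{lem:supersat1} (even $m$; odd $m$ with $t=\lceil m/2\rceil$; generic) must be executed with their own constants and glued seamlessly with Lemma~\ref{lem:supersat2b} at the boundary $t=60$; it is precisely the hypothesis $t\le 10\sqrt{\theta_m n}$ that makes $n^3/(t^3 H_m(n,t))$ large enough for $s$ to be genuinely $o(H_m(n,t))$. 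A secondary subtlety is absorbing the bookkeeping of ``not-in-$\mathcal{C}$'' decisions into the trace-count without overcounting distinct fingerprints that happen to share an intermediate state.
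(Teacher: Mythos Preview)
Your overall architecture matches the paper's, but there are two genuine gaps that make the argument fail as written.

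\textbf{The ``not-in-$\mathcal{C}$'' bookkeeping.} Your bound on the number of not-in-$\mathcal{C}$ steps is wrong on both ends. First, such a step deletes \emph{one} vertex from $U_i$, not $n^5$ vertices; the fact that the deleted vertex has $G[U_i]$-degree $>n^5$ tells you nothing about how many vertices you removed. So the count $m^n/n^5$ does not follow. Second, even granting that number, the claim $m^n/n^5=o(H_m(n,t))$ is false: since $H_m(n,t)=m^n/\bbql{n,t}$, you would need $\bbql{n,t}=o(n^5)$, which fails already for constant $t\ge 6$. The correct resolution, and the one the paper uses, is that you never need to record these decisions at all: the fingerprint $P$ \emph{alone} determines the entire run of the algorithm (at every step the max-degree vertex is chosen by a fixed rule, and membership of that vertex in $P$ tells you which branch was taken), hence determines the container. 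So $|\mathcal{F}|\le\binom{m^n}{\le |P|_{\max}}$ with no extra trace data.

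\textbf{The quantitative fingerprint bound.} Claiming only $s=o(H_m(n,t))$ is not enough to conclude $\binom{m^n}{\le s}\le 2^{\varepsilon H_m(n,t)}$. Indeed $\log\binom{m^n}{k}\sim k\log(m^n/k)\ge k\log\bbql{n,t}\sim k\cdot t\log n$, so for instance $k=H_m(n,t)/\log n$ already gives $\log\binom{m^n}{k}\gtrsim t\,H_m(n,t)$, far too large. Your chain $2^{O(n\log m\cdot o(H_m(n,t)))}=2^{o(H_m(n,t))}$ is simply an algebra error. What is actually needed (and what the paper proves via the two-phase analysis in Lemmata~\ref{lem:func1} and~\ref{lem:func2}) is the much sharper $|P|\lesssim \frac{\log n}{n}H_m(n,t)$: Phase~1 ($|U_i|\ge n^4 H_m(n,t)$, not $|\mathcal{C}\cap U_i|$) gives geometric decay of $|U_i|$ with ratio $1-\beta$, $\beta\gtrsim n^{3/2}/H_m(n,t)$, whence $|P_1|\lesssim \log(\bbql{n,t})/\beta\lesssim t\log(n)H_m(n,t)/n^{3/2}\lesssim \log(n)H_m(n,t)/n$; Phase~2 uses that non-termination forces $\Delta(G[U_i])>n^5$ so each fingerprint step removes $\ge n^5$ vertices, giving $|P_2|\le n^4 H_m(n,t)/n^5$. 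With this sharper bound, $\log|\mathcal{F}|\lesssim \frac{t(\log n)^2}{n}H_m(n,t)=o(H_m(n,t))$ because $t\le 10\sqrt{\theta_m n}$.
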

Recall that $G=(V,E)$ describes a graph with $V= \Zm^{n}$ and $E= \{(x,y) \in V^{2}: d^{\textnormal{L}}(x,y) \leq 2t\}$.
For the proof of Lemma \ref{lem:Container} we will apply the following version of the algorithm of Kleitman and Winston \cite{kleitman1980asymptotic, kleitman1982number}.

\begin{algorithm} \label{alg:1}  Fix an arbitrary order $v_{1}, \dots , v_{m^{n}}$ of the elements of $V$. Let $I$ be an independent set in $G$. Initialize $G_{0}:= G$ and $P:= \emptyset$. If $m$ is even set $\Delta := \varepsilon n /(mt)$, if $m$ is odd and $t=\lceil m/2 \rceil$ chose $C_{m}$ sufficiently large such that Lemma \ref{lem:supersat2b} is satisfied and set $\Delta := \varepsilon n /(m C_{m})$. In all other cases set $\Delta := \varepsilon n /(2mt)$. In each step $i$ of the algorithm add nodes to $P$ through the following process.
\begin{enumerate}
    \item[i)] Let $u$ be the node of maximum degree in $V(G_{i-1}).$ In case there are two nodes having maximum degree, choose the first node in the ordering. 
    \item[ii)] If $u \notin I$, define $V(G_{i}):= V(G_{i-1})\setminus \{u\}$ and proceed to step $i+1$.
    \item[iii)] If $u \in I$ and $\deg_{G_{i-1}}(u) \geq \Delta$ then add $u$ to $P$, define $V(G_{i}):= V(G_{i-1})\setminus(\{u\} \cup N_{G}(u))$ and proceed to step $i+1$.
    \item[iv)] If $u \in I$ and $\deg_{G_{i-1}}(u) < \Delta$ define $f(P):= V(G_{i})$, 
    terminate and output $P$ and $f(P)$. 
\end{enumerate}
\end{algorithm}
\begin{lemma} \label{lem:func1}
    Let $0< \varepsilon, 1 \leq t < 60$ and let $n$ be sufficiently large. Then there exists a function 
    $$f: \binom{V(G)}{\leq \frac{\varepsilon H_{m}(n,t)\log(n)}{n}} \rightarrow \binom{V(G)}{\leq \left(1+\varepsilon \right)H_{m}(n,t)},$$
    such that for every independent set $I$ in $G$ there is some subset $P \subseteq I$ that satisfies
\begin{enumerate}
    \item[(i)] $|P| \leq \frac{\varepsilon H_{m}(n,t)\log(n)}{n}$,
    \item[(ii)] $ I \subseteq P \cup f(P)$.
     \item[iii)] $i(G[P \cup f(P)]) \leq 2^{(1+\varepsilon)H_{m}(n,t)}$.
\end{enumerate}
\begin{proof}
    The proof can be found in the appendix.
\end{proof}

    \end{lemma}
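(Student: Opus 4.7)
The plan is to apply Algorithm~\ref{alg:1} to an arbitrary independent set $I$ in $G$, obtain a fingerprint $P\subseteq I$ and a residue set $A:=V(G_{\mathrm{final}})$, and set $f(P):=A$. Property (ii) is immediate from the algorithm: the only vertices of $I$ ever removed from the active node set are those added to $P$ in step (iii), so every vertex of $I\setminus P$ survives until termination and therefore lies in $A$, giving $I\subseteq P\cup A$.

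For the bound $|A|\leq(1+\varepsilon)H_{m}(n,t)$, note that upon termination the maximum degree in $G[A]$ is strictly less than $\Delta$, which forces $2|E(G[A])|<\Delta|A|$. Lemma~\ref{lem:supersat2b} applied to $A$ yields $|E(G[A])|\geq(|A|-H_{m}(n,t))\cdot 2n/(mt)$ in the even-$m$ case (with analogous inequalities in the other two cases). Combining these with $\Delta=\varepsilon n/(mt)$ gives $|A|-H_{m}(n,t)\leq \varepsilon|A|/4$, and hence $|A|\leq H_{m}(n,t)/(1-\varepsilon/4)\leq(1+\varepsilon)H_{m}(n,t)$ for small~$\varepsilon$.

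The more delicate point is bounding $|P|$. Each invocation of step (iii) removes at least $\Delta+1$ nodes from $V(G_{i-1})$, so a direct count gives $|P|\leq m^n/\Delta$; this is too weak. To sharpen it to $|P|\leq \varepsilon H_{m}(n,t)\log(n)/n$ one exploits Lemma~\ref{lem:supersat2b} throughout the execution: whenever $|V(G_{i-1})|$ is substantially above $H_{m}(n,t)$, the supersaturation estimate forces the maximum degree to exceed $\Delta$ by a factor that grows with $|V(G_{i-1})|/H_{m}(n,t)$, so the $P$-steps in that regime each remove many more than $\Delta$ nodes. A dyadic decomposition of the trajectory $(|V(G_i)|)_i$ into $O(\log n)$ scales, combined with the supersaturation-driven lower bound on the current degree at each scale, produces the desired factor $\log(n)/n$.

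The function $f$ is well-defined as a function of $P$ alone: given $P$, one replays Algorithm~\ref{alg:1} while treating a vertex as ``in $I$'' precisely when it lies in $P$, so the action at every step is determined purely by $P$, and the sequence $V(G_0),V(G_1),\ldots$ and hence $A$ can be reconstructed. Property (iii) then follows from the trivial bound $i(G[P\cup f(P)])\leq 2^{|P|+|A|}\leq 2^{(1+\varepsilon)H_{m}(n,t)+\varepsilon H_{m}(n,t)\log(n)/n}\leq 2^{(1+2\varepsilon)H_{m}(n,t)}$ for $n$ large enough, which can be rewritten as $2^{(1+\varepsilon')H_{m}(n,t)}$ by rescaling $\varepsilon$. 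The main obstacle is precisely the sharpening of $|P|\leq m^n/\Delta$ to $|P|\leq \varepsilon H_{m}(n,t)\log(n)/n$, which requires the careful supersaturation-driven accounting of the degree profile encountered throughout the algorithm's execution.
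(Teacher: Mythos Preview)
Your overall architecture is correct and matches the paper: run Algorithm~\ref{alg:1}, use the termination condition together with Lemma~\ref{lem:supersat2b} to bound $|f(P)|$, and then bound $|P|$ by tracking how many vertices are deleted at each $P$-step. Your treatment of (ii), of the codomain bound on $f$, and of (iii) is fine, and your remark on why $f$ is determined by $P$ alone is a useful addition.

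The gap is in the crucial step: you invoke Lemma~\ref{lem:supersat2b} to claim that ``the maximum degree exceeds $\Delta$ by a factor that grows with $|V(G_{i-1})|/H_m(n,t)$''. That claim does not follow from Lemma~\ref{lem:supersat2b}. That lemma gives $|E(G[\mC])|\gtrsim (|\mC|-H_m(n,t))\,n/(mt)$, which is only \emph{linear} in the excess; dividing by $|\mC|$ yields an average degree of order $(1-H_m(n,t)/|\mC|)\cdot n/(mt)$, i.e.\ bounded by a constant times $n/(mt)$ regardless of how large $|\mC|/H_m(n,t)$ is. With only this bound, each $P$-step removes $\sim n$ vertices, and your dyadic sum telescopes to $\sim m^n/n$, not to $H_m(n,t)\log(n)/n$.

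What is actually needed is the \emph{quadratic} supersaturation of Lemma~\ref{lem:supersat01}: for $|\mC|\geq 2H_m(n,t)$ one has $|E[\mC]|\gtrsim n|\mC|^2/(t\,H_m(n,t))$, hence $\Delta(G[\mC])\gtrsim n|\mC|/(t\,H_m(n,t))$. This is the estimate that makes the degree scale with $|\mC|/H_m(n,t)$, and it is exactly what the paper uses. With it, each $P$-step while $|V(G_{i-1})|\geq 2H_m(n,t)$ removes a \emph{constant fraction} $\beta\gtrsim n/(t\,H_m(n,t))$ of the current vertex set, so the number of such steps is at most $\log(\bbql{n,t})/\beta\lesssim t\log(n)\cdot H_m(n,t)/n\lesssim H_m(n,t)\log(n)/n$ (using $t<60$). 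Once $|V(G_{i-1})|<2H_m(n,t)$, the baseline bound $\Delta$ already gives $\lesssim H_m(n,t)/n$ further $P$-steps. Your dyadic decomposition, if fed Lemma~\ref{lem:supersat01} instead of Lemma~\ref{lem:supersat2b}, collapses to this same two-phase count; but as written it rests on the wrong lemma and does not go through.
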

       To show the statement for $60 \leq t \leq  10 \sqrt{\theta_{m}n} $, we use the following variant of Algorithm \ref{alg:1}, proposed by Dong et al. \cite{dong2022number}.
    \begin{algorithm}\label{alg:2} Fix an arbitrary order $v_{1}, \dots , v_{m^{n}}$ of the elements of $V$. Let $I$ be an independent set in $G$. Initialize $G_{0}:= G$ and $P:= \emptyset$. In each step $i$ of the algorithm add nodes to $P$ through the following process.
\begin{enumerate}
\item[i)] If $i(G_{i-1}) \leq 2^{(1+\varepsilon)H_{m}(n,t)},$ set $f(P)=V(G_{i-1})$, terminate and output $P$ and $f(P)$.
    \item[ii)] Let $u$ be the node of maximum degree in $V(G_{i-1}).$ In case there are two nodes having maximum degree, choose the first node in the ordering. 
    \item[iii)] If $u \notin I$, define $V(G_{i}):= V(G_{i-1})\setminus \{u\}$ and proceed to step $i+1$.
    \item[iv)] If $u \in I$, then add $u$ to $P$, define $V(G_{i}):= V(G_{i-1})\setminus(\{u\} \cup N_{G}(u))$ and proceed to step $i+1$.
\end{enumerate}
\end{algorithm}
 Algorithm 2 certifies that the following function is well defined.
\begin{lemma} \label{lem:func2}
     Let $0<\varepsilon, 0<C$ be a sufficiently large constant, $60 \leq t \leq  10 \sqrt{\theta_{m}n} $ and let $n$ be sufficiently large. Then there exists a function 
    $$f: \binom{V}{\leq \frac{CH_{m}(n,t)\log(n)}{n}} \rightarrow 2^{V},$$
    such that for any independent set $I$ in $G$ there is some subset $P \subseteq I$ that satisfies the following three conditions:
    \begin{enumerate}
        \item[i)] $ I \subseteq P \cup f(P)$,
         \item[ii)] $|P| \leq \frac{CH_{m}(n,t)\log(n)}{n}$,
          \item[iii)] $i(G[P \cup f(P)]) \leq 2^{(1+\varepsilon)H_{m}(n,t)}$.
    \end{enumerate}
   \begin{proof}
       The first observation follows by structural induction. To prove the remaining two conditions, we distinguish two stages of the algorithm according to the size of $V(G_{i})$.
    \begin{itemize}
        \item Let $P_{1}$ denote the set of nodes $u \in P$ added to $P$ when $|V(G_{i-1})| \geq n^{4}H_{m}(n,t) $.
           \item Let $P_{2}$ denote the set of nodes $u \in P$ added to $P$ when $|V(G_{i-1})| < n^{4}H_{m}(n,t).$
    \end{itemize}
That is, there exists a $k >0$ such that every node added to $P$ up to and including step $k$ is in $P_{1}$, and every node added to $P$ after step $k$ is in $P_{2}$. Let $\beta := \frac{n^{3/2}}{H_{m}(n,t)}$.
     As we add nodes to $P_1$, by Lemma \ref{lem:supersat1} we remove at least a $\beta$ fraction of nodes from $G_{i-1}$ to get $G_{i}$. 
 So $(1-\beta)^{k}m^{n} \leq n^{4}H_{m}(n,t)$, respectively $(1-\beta)^{k}\bbql{n,t} \leq n^{4}$. 
    By Lemma \ref{lem:volumeLandau} we have $\beta \rightarrow 0$ as $n \to +\infty$, so that $\beta \leq \log \left( \frac{1}{1-\beta}\right)$ for sufficiently large $n$ and therefore 
    \begin{align*}
        |P_{1}| &\leq \frac{\log\left( \frac{n^{4}H_{m}(n,t)}{m^{n}}\right)}{\log \left( 1-\beta\right)} \leq \frac{\log\left( \frac{m^{n}}{n^{4}H_{m}(n,t)}\right)}{\log \left( \frac{1}{1-\beta}\right)} \lesssim \frac{\log(\bbql{n,t})-4\log(n)}{\beta}\stackrel{(*)}{\lesssim}  \frac{t\log(n)H_{m}(n,t)}{n^{3/2}}\\
        &\lesssim \frac{\log(n)H_{m}(n,t)}{n},
    \end{align*}
where (*) follows from Lemmata \ref{lem:binomprelim} and \ref{lem:leebounds}.
    As we add nodes to $P_2$ we have $i(G_{i-1})\geq 2^{(1+\varepsilon)H_{m}(n,t)}$ during the associated iteration. By Lemma \ref{lem:supersat2} we remove at least $n^{5}$ nodes in each iteration and hence
    $$|P_{2}| \lesssim \frac{n^4H_{m}(n,t)}{n^5}\leq \frac{H_{m}(n,t)}{n}.$$

This gives $$|P|= |P_1|+|P_2| \leq \frac{C \log(n)\cdot H_{m}(n,t)}{n},$$
 and thus 
 $$i(G[P \cup f(P)]) \leq 2^{|P|} \cdot i(f(P)) \leq 2^{(1+2\varepsilon)H_{m}(n,t)}.$$
   \end{proof}
\end{lemma}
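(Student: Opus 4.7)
The plan is to analyze Algorithm \ref{alg:2} on input $(G, I)$ and show that the produced pair $(P, f(P))$ witnesses all three claims. Because every choice made by the algorithm is deterministic (ties being broken by the fixed enumeration of $V$) and driven only by which vertices get added to $P$, the set $f(P) := V(G_{i-1})$ at the terminating step depends solely on $P$, so $f$ is a well-defined function on its stated domain. Condition (i) then follows by a straightforward induction on the step index: at each step a vertex $u \in I$ is either picked as the maximum-degree vertex and put into $P$, or it survives in $V(G_j)$; it cannot be discarded via step (iv) since that would require it to lie in the $G$-neighborhood of some earlier $P$-vertex $v$, contradicting independence of $I$ (as $(u,v)$ would be an edge of $G$). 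Granting (i), condition (iii) is immediate from the termination criterion $i(G[f(P)]) \leq 2^{(1+\varepsilon)H_m(n,t)}$ combined with the elementary bound $i(G[P \cup f(P)]) \leq 2^{|P|} \cdot i(G[f(P)])$.

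The substantive step is condition (ii), the bound on $|P|$. I would split $P = P_1 \sqcup P_2$ according to whether the vertex was added while $|V(G_{i-1})| \geq n^4 H_m(n,t)$ or afterwards; since $|V(G_{i-1})|$ is monotonically decreasing along the algorithm, the two phases are consecutive. For $|P_1|$, Proposition \ref{lem:supersat1} applies to the induced subgraph on $V(G_{i-1})$ during the entire first phase (the threshold $n^4 H_m(n,t)$ is chosen precisely to match the hypothesis of that supersaturation result), yielding a maximum degree of order $n^3 |V(G_{i-1})|/(t^3 H_m(n,t))$. Removing the picked vertex together with its neighborhood therefore shrinks the vertex count by a multiplicative factor of at least $\beta := n^{3/2}/H_m(n,t)$, where I have used $t \leq 10\sqrt{\theta_m n}$ to get $t^3 \lesssim n^{3/2}$. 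Iterating over the $|P_1|$ steps of the phase gives $(1-\beta)^{|P_1|} m^n \leq n^4 H_m(n,t)$, and taking logarithms with the asymptotic $\log(1/(1-\beta)) \geq \beta$ (valid since $\beta \to 0$ by Lemma \ref{lem:volumeLandau}) together with the bound $\log(\bbql{n,t}) \lesssim t \log(n)$ coming from Lemmata \ref{lem:binomprelim} and \ref{lem:leebounds} collapses this to $|P_1| \lesssim \log(n) H_m(n,t)/n$.

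For $|P_2|$, during every iteration in the second phase the algorithm has not yet terminated, so $i(G_{i-1}) > 2^{(1+\varepsilon)H_m(n,t)}$; by the contrapositive of Proposition \ref{lem:supersat2}, this forces $\Delta(G_{i-1}) > n^5$, hence each such iteration deletes strictly more than $n^5$ vertices. Since the phase is entered with at most $n^4 H_m(n,t)$ vertices remaining, we obtain $|P_2| \leq H_m(n,t)/n$. Summing the two estimates gives condition (ii) for a sufficiently large constant $C$, completing the proof.

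The main obstacle I anticipate is the bookkeeping for $|P_1|$: one must make sure Proposition \ref{lem:supersat1} is applied cleanly at every step of the first phase (the hypothesis $|\mathcal{C}| \geq n^4 H_m(n,t)$ in that proposition is exactly matched by our phase threshold, but one should check that re-applying it at each iteration is legitimate since the code changes), and one must weave the three asymptotic ingredients $t \lesssim \sqrt{n}$, $\log \bbql{n,t} \lesssim t\log n$, and $\beta \to 0$ together so that the intermediate expression $t \log(n) H_m(n,t)/n^{3/2}$ cleanly absorbs into the target $\log(n) H_m(n,t)/n$. Once that arithmetic is executed carefully, everything else is either inductive or a direct appeal to the two supersaturation lemmas proved earlier in this section.
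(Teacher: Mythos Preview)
Your proposal is correct and follows essentially the same approach as the paper: the same two-phase split of $P$ at the threshold $|V(G_{i-1})| = n^4 H_m(n,t)$, the same use of Proposition~\ref{lem:supersat1} to obtain the multiplicative shrinkage factor $\beta = n^{3/2}/H_m(n,t)$ in phase one, and the same contrapositive of Proposition~\ref{lem:supersat2} to force $\Delta(G_{i-1}) > n^5$ in phase two. Your write-up even fills in a few details the paper leaves implicit (well-definedness of $f$, the passage $t^3 \lesssim n^{3/2}$); the only cosmetic slip is that condition~(iii) is not quite ``immediate'' from termination alone---you also need the bound on $|P|$ from (ii) to absorb the $2^{|P|}$ factor, which you do establish later.
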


 We can now proceed with the proof of Lemma \ref{lem:Container}.
 \begin{proof}[Proof of Lemma \ref{lem:Container}]
 Let $G=(V,E)$ be the graph with $V= \Zm^{n}$ and $E= \{(x,y) \in V^{2}: d^{\textnormal{L}}(x,y) \leq 2t\}$.
Suppose that $I$ ranges over all independent sets of $G$. If $1 \leq t \leq 60$, let $\mF$ be the collection of all $P \cup f(P)$ obtained by Algorithm \ref{alg:1}, otherwise let $\mF$ be the collection of all $P \cup f(P)$ obtained by Algorithm \ref{alg:2}. Since $t$-Lee-error-correcting codes are exactly the independent sets in $G$, we get from Lemmata \ref{lem:func1} and \ref{lem:func2} that every $t$-Lee-error-correcting code is contained in some $F \in \mF$. The third condition in Lemmata \ref{lem:func1} and \ref{lem:func2} gives $i(G[F]) \leq 2^{(1+2\varepsilon)H_{m}(n,t)}$, for each $F \in \mathcal{F}$. By Lemma \ref{lem:binomprelim} $\log \binom{n}{m} \leq m \log (O(n)/m)$. This gives for $1 \leq t <60$
  \begin{align*}
     \log |\mF| &\leq \log \binom{m^{n}}{\leq \frac{\varepsilon H_{m}(n,t)\log(n)}{n}} \leq \frac{\varepsilon H_{m}(n,t)\log(n)}{n}\log\left( O\left( \frac{\bbql{n,t}n}{\varepsilon \log(n)}\right)\right)\\ &\lesssim \frac{\varepsilon H_{m}(n,t) \log(n)}{n}\log\left( O\left( \frac{(2en)^{t}n}{t^{t}\varepsilon \log(n)}\right)\right) 
 \end{align*}
 and similarly, for $60 \leq t \leq  10 \sqrt{\theta_{m}n} $ and sufficiently large $C$ that

 \begin{align*}
     \log |\mF| \leq \log \binom{m^{n}}{\leq \frac{C \log(n)}{n}H_{m}(n,t)} &\lesssim \frac{C\log(n)}{n}H_{m}(n,t)\log\left( O\left( \frac{n^{t+1}(2e/t)^{t}}{C\log(n)}\right)\right)\\
     &\lesssim \frac{C t (\log(n))^{2}}{n}H_{m}(n,t).
 \end{align*}
Hence, $|\mF| \leq 2^{\varepsilon H_{m}(n,t)},$ for sufficiently large $n$. This finishes the proof of Lemma \ref{lem:Container}.
 \end{proof}
We can now state our main result on the number of $t$-Lee-error-correcting codes.
\begin{theorem}
    Let $1 \le n, 2 \le m, 1 \leq t \leq  10 \sqrt{\theta_{m}n} ,$ then the number of $t$-Lee-error-correcting codes in $\Zm^{n}$ is asymptotically at most $2^{(1+o(1))H_{m}(n,t)},$ for sufficiently large $n$.
    \begin{proof}
        Let $0 <\varepsilon$ and $\mF$ be as in Lemma \ref{lem:Container}, then the number of $t$-Lee-error-correcting codes in $\Zm^{n}$ is at most
        \begin{align*}
            \sum_{F \in \mF}i(G[F])\leq |\mF|2^{(1+\varepsilon)H_{m}(n,t)}=2^{(1+2\varepsilon)H_{m}(n,t)},
        \end{align*}
        since $\varepsilon$ can be chosen arbitrarily small, the result follows.
    \end{proof}
\end{theorem}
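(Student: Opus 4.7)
The plan is to reduce the count immediately to Lemma~\ref{lem:Container}, which has already compressed every supersaturation estimate and the Kleitman--Winston container algorithm (Algorithms~\ref{alg:1} and~\ref{alg:2}) into a single, ready-to-use structural statement. Essentially all the heavy lifting of Section~\ref{sec:numberof} was done in that lemma; here we only need a clean union-bound argument.

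First I would fix an arbitrary $\varepsilon > 0$ and invoke Lemma~\ref{lem:Container} to obtain a family $\mathcal{F}$ of subsets of $\Zm^n$ satisfying (i) $|\mathcal{F}| \le 2^{\varepsilon H_m(n,t)}$, (ii) every $t$-Lee-error-correcting code $\mathcal{C} \subseteq \Zm^n$ is contained in some $F \in \mathcal{F}$, and (iii) $i(G[F]) \le 2^{(1+\varepsilon)H_m(n,t)}$ for every $F \in \mathcal{F}$. Here $G$ is the graph on $\Zm^n$ whose edges join pairs at Lee distance at most $2t$, so that $t$-Lee-error-correcting codes are exactly the independent sets of $G$. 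Any such code $\mathcal{C}$ lying inside some $F \in \mathcal{F}$ is then an independent set of the induced subgraph $G[F]$.

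Second, I would bound the total number of $t$-Lee-error-correcting codes by summing $i(G[F])$ over the containers: since every code appears as an independent set in at least one $G[F]$, the count is at most
\begin{equation*}
\sum_{F \in \mathcal{F}} i(G[F]) \;\le\; |\mathcal{F}| \cdot \max_{F \in \mathcal{F}} i(G[F]) \;\le\; 2^{\varepsilon H_m(n,t)} \cdot 2^{(1+\varepsilon)H_m(n,t)} \;=\; 2^{(1+2\varepsilon)H_m(n,t)}.
\end{equation*}
Third, because the argument holds for every fixed $\varepsilon > 0$ provided $n$ is sufficiently large, we may let $\varepsilon \to 0$ (slowly enough that Lemma~\ref{lem:Container} still applies) to conclude that the exponent is $(1+o(1))H_m(n,t)$ as $n \to +\infty$.

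There is essentially no obstacle in this last step: the entire difficulty of the theorem has already been absorbed into Lemma~\ref{lem:Container}, whose proof relied on the two supersaturation estimates (Propositions~\ref{lem:supersat1} and~\ref{lem:supersat2}, together with Lemma~\ref{lem:supersat2b}), on the structural splitting $\mathcal{C} = \mathcal{C}_1 \cup \mathcal{C}_2$ from Lemmata~\ref{lem:C1} and~\ref{lem:C2}, and ultimately on the monotonicity and intersection estimates for Lee balls from Lemmata~\ref{lem:mon}--\ref{lem:estimateinter}. The only mild subtlety in the final write-up is to observe that $\varepsilon$ plays the role of a free parameter that can be sent to zero after the bound is in hand, yielding the $o(1)$ factor rather than an explicit constant in the exponent.
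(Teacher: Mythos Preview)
Your proposal is correct and follows exactly the same route as the paper: fix $\varepsilon>0$, invoke Lemma~\ref{lem:Container}, union-bound the independent sets over the containers to get $2^{(1+2\varepsilon)H_m(n,t)}$, and then let $\varepsilon\to 0$. The paper's own proof is precisely this two-line argument, with the extra commentary you added about where the heavy lifting occurs being accurate but not part of the formal proof.
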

In the following we give an upper bound on the number of $t$-Lee-error-correcting codes of given size. We will use this result for experimental results in Python and compare it with the upper bound obtained from bipartite graphs in Section \ref{sec:bipartite}.

\begin{theorem}
      Let $ 2 \le m, 1 \leq t \leq 60,$ $0 < \varepsilon $ and let $1 \le n$ be sufficiently large, then the number of $t$-Lee-error-correcting codes of length $n$ and size $S$ in $\Zm^{n}$ is asymptotically at most
      $$2^{\varepsilon H_{m}(n,t)}\binom{(1+\varepsilon)H_{m}(n,t)}{S}.$$
      \begin{proof}
      For $1 \le t \le 60$ we let $\mF$ be the collection of all $P \cup f(P)$ obtained by Algorithm \ref{alg:1}. From Lemma \ref{lem:func1} we have 
$$    |P \cup f(P)| \leq \frac{\varepsilon H_{m}(n,t)\log(n)}{n}+\left(1+\varepsilon \right)H_{m}(n,t) \lesssim \left(1+\varepsilon \right)H_{m}(n,t),$$
i.e., an upper bound on the size of each container. Since any $t$-error-correcting code is contained in some $F \in \mF$ and $|\mF| \leq 2^{\varepsilon H_{m}(n,t)}$ the statement follows.
      \end{proof}
\end{theorem}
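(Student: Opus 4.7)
The plan is to apply the container construction already established in Lemma \ref{lem:func1} (valid for $1 \le t \le 60$) and then, in each container, simply count subsets of the prescribed cardinality $S$. The hard work has been done: Lemma \ref{lem:Container} says that every $t$-Lee-error-correcting code fits inside one of at most $2^{\varepsilon H_m(n,t)}$ ``containers,'' and Lemma \ref{lem:func1} controls the size of each such container. What remains is essentially bookkeeping.

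First, I would let $\mathcal{F}$ denote the collection of all sets $P \cup f(P)$ produced by Algorithm \ref{alg:1} as $I$ ranges over the independent sets of $G$. By Lemma \ref{lem:Container}(i), $|\mathcal{F}| \le 2^{\varepsilon H_m(n,t)}$. By Lemma \ref{lem:func1}(ii), every $t$-Lee-error-correcting code $\mathcal{C} \subseteq \mathbb{Z}_m^n$ is contained in some $F \in \mathcal{F}$. From the size bounds in Lemma \ref{lem:func1}(i) together with the codomain of $f$, each container satisfies
\[
|F| \;\le\; |P| + |f(P)| \;\le\; \frac{\varepsilon H_m(n,t)\log(n)}{n} + (1+\varepsilon) H_m(n,t).
\]
For $n$ sufficiently large the first summand is a lower-order term; by absorbing it into a slightly inflated $\varepsilon$ (equivalently, by running Lemma \ref{lem:func1} with $\varepsilon/2$ in place of $\varepsilon$ at the outset), we may assume $|F| \le (1+\varepsilon) H_m(n,t)$.

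Next I would do the counting. Every $t$-Lee-error-correcting code of size $S$ arises as an $S$-subset of some container $F \in \mathcal{F}$, so the total number of such codes is bounded by
\[
\sum_{F \in \mathcal{F}} \binom{|F|}{S} \;\le\; |\mathcal{F}|\,\binom{(1+\varepsilon) H_m(n,t)}{S} \;\le\; 2^{\varepsilon H_m(n,t)} \binom{(1+\varepsilon) H_m(n,t)}{S},
\]
using the monotonicity of $\binom{\cdot}{S}$ in its upper argument. This matches the claimed bound.

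There is no real obstacle in this argument; the only point requiring a moment of care is the absorption of the additive $\varepsilon H_m(n,t)\log(n)/n$ term coming from $|P|$ into the multiplicative $(1+\varepsilon)$ factor on $H_m(n,t)$, which is legitimate because $\log(n)/n \to 0$ and $\varepsilon$ is arbitrary. All estimates on ball volumes, intersection sizes, and supersaturation needed to justify the existence of the container collection $\mathcal{F}$ were already assembled in Sections \ref{sec:pre} and the preceding part of Section \ref{sec:numberof}, so the proof itself is short.
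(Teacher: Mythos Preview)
Your argument is correct and follows essentially the same route as the paper: take the container family $\mathcal{F}$ from Algorithm~\ref{alg:1}, use Lemma~\ref{lem:func1} to bound each $|P\cup f(P)|$ by $(1+\varepsilon)H_m(n,t)$ after absorbing the $\log(n)/n$ term, and then count $S$-subsets inside each of the at most $2^{\varepsilon H_m(n,t)}$ containers. The only difference is that you spell out the final summation $\sum_{F\in\mathcal{F}}\binom{|F|}{S}$ explicitly, which the paper leaves implicit.
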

\begin{remark}
    Note that due to Lemma \ref{lem:volumeLandau} using supersaturation estimates depending on $H_{m}(n,t)$ when considering $m$ towards infinity, we would only consider the trivial code $\mC = \mathbb{Z}_{m}^{n}$ and end up with trivial counting results. The container method is therefore not suitable for considering the case of increasing $m$. In Section \ref{sec:bipartite} we will use a different counting technique, based on bipartite graphs, to also estimate the number of codes with a prescribed distance as $m$ tends to infinity. 
\end{remark}
Recall that $A_{m}(n, d)$ denotes the maximum size of a code $\mC \subseteq \Zm^{n}$ of minimum distance $d$.
We start by showing that $A_{m}(n, 2t+1) = o(H_{m}(n,t)/n)$ for $ 10 \sqrt{\theta_{m}n}  <t < n^{4/5}$. The following version of the Elias bound will be crucial in this section. 
\begin{theorem} \label{thm:elias}
    Let $1 \le d,n$ be integers and $\theta_{m}$ be the average Lee weight over $\mathbb{Z}_m$
For every $r \leq \theta_{m}n$ such that $$0<r^{2}-2\theta_{m}nr+\theta_{m}nd,$$ the following upper bound holds
$$A_{m}(n,d) \leq \frac{\theta_{m} nd}{r^{2}-2\theta_{m}nr + \theta_{m}nd} \cdot \frac{m^{n}}{\bbql{n,r}}.$$
\begin{proof}
     Similarly as in the Hamming metric, see \cite[Theorem 5.2.11]{van1998introduction}.
\end{proof}
\end{theorem}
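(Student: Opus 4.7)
The plan is to adapt the standard Elias-bound proof from the Hamming metric (van Lint, Theorem 5.2.11) to the Lee metric, with two modifications: the pigeonhole step uses Lee balls in $\Zm^n$, and the per-coordinate Cauchy--Schwarz step is replaced by a Lee-weighted analogue in which the Hamming constant $1-1/q$ is swapped for $\theta_m$. The argument splits cleanly into an averaging stage followed by a ball-restricted Plotkin estimate.

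For the averaging stage, let $\mC \subseteq \Zm^n$ attain $|\mC| = A_m(n,d)$. Summing indicators over all centers yields
\[
\sum_{c \in \Zm^n} |\mC \cap \ballL{n,r,c}| \;=\; \sum_{x \in \mC} |\ballL{n,r,x}| \;=\; |\mC|\cdot \bbql{n,r},
\]
so by pigeonhole there is a center $c^{\ast}$ with $|\mC \cap \ballL{n,r,c^{\ast}}| \geq |\mC|\,\bbql{n,r}/m^n$. Translation invariance of $d^{\textnormal{L}}$ turns the shifted intersection into a code $\mC' \subseteq \ballL{n,r,0}$ of the same size with minimum Lee distance still at least $d$. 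It therefore suffices to prove the ball-restricted Plotkin bound
\[
|\mC'| \;\leq\; \frac{\theta_m n d}{r^2 - 2\theta_m n r + \theta_m n d},
\]
since multiplying by $m^n/\bbql{n,r}$ then gives the theorem.

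For the Plotkin stage, set $N := |\mC'|$ and estimate $\Sigma := \sum_{x,y \in \mC'} d^{\textnormal{L}}(x,y)$ two ways. The minimum distance hypothesis gives $\Sigma \geq d N(N-1)$. For the upper bound, introduce $n_a^{(i)} := |\{x \in \mC' : x_i = a\}|$ and $S_i := \sum_a n_a^{(i)} \wL(a)$; then $\sum_a n_a^{(i)} = N$ and $\sum_{i=1}^n S_i = \sum_{x \in \mC'} \wL(x) \leq rN$ because $\mC' \subseteq \ballL{n,r,0}$. The coordinatewise expansion reads
\[
\Sigma \;=\; \sum_{i=1}^n \sum_{a,b \in \Zm} n_a^{(i)} n_b^{(i)} \wL(a-b),
\]
and the crucial per-coordinate inequality I would establish is
\[
\sum_{a,b \in \Zm} n_a^{(i)} n_b^{(i)} \wL(a-b) \;\leq\; 2 N S_i - \frac{S_i^2}{\theta_m}.
\]
The right-hand side is concave in $S_i$, so Jensen's inequality together with $\sum_i S_i \leq rN$ and the hypothesis $r \leq \theta_m n$ give $\Sigma \leq N^2(2\theta_m n r - r^2)/(\theta_m n)$. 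Combining with $\Sigma \geq d N(N-1)$ and rearranging under the positivity assumption $r^2 - 2\theta_m n r + \theta_m n d > 0$ produces the required bound on $N$.

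The main obstacle is the per-coordinate inequality. In the Hamming setting the analogous statement reduces to one line of Cauchy--Schwarz applied to the $q-1$ nonzero values, because the Hamming weight takes only the values $0$ and $1$; for the Lee metric $\wL$ takes $\lfloor m/2 \rfloor + 1$ distinct values and the sharp constant $1/\theta_m$ has to come from the spectral structure of the circulant matrix $L_{a,b} := \wL(a-b)$. I would attack it by discrete Fourier analysis on $\Zm$: the eigenvalues of $L$ are the Fourier coefficients of $\wL$, the largest being $m\theta_m$ attained at the constant eigenvector, while the two linear constraints $\sum_a n_a^{(i)} = N$ and $\sum_a n_a^{(i)} \wL(a) = S_i$ pin down the zeroth and one weighted Fourier component of $(n_a^{(i)})_a$. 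Bounding the residual spectral mass should yield the inequality with the claimed constant, and the saturation cases (the uniform distribution on $\Zm$ and any Dirac mass, both of which make the bound tight) serve as a sanity check on the coefficient $1/\theta_m$.
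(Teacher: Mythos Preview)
Your proposal is correct and follows exactly the standard route the paper points to (the Hamming-metric argument of van Lint, adapted by replacing the per-coordinate Cauchy--Schwarz step with the Lee-weighted inequality and $(q-1)/q$ by $\theta_m$); the paper's own proof is nothing more than the citation. The per-coordinate bound you isolate, $\sum_{a,b} n_a^{(i)} n_b^{(i)} \wL(a-b) \le 2NS_i - S_i^2/\theta_m$, is indeed the known key lemma (it appears in Berlekamp's treatment of the Lee-metric Elias bound) and your spectral/Fourier attack on the circulant $L_{a,b}=\wL(a-b)$ is the right way to prove it.
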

\begin{proposition} \label{prop:Elias}
Let $\theta_{m}$ be the average Lee weight over $\Zm$. If $10 \sqrt{\theta_{m}n}  < t < n^{4/5},$ we have
$$A_{m}(n,2t+1) = o\left(\frac{H_{m}(n,t)}{n}\right).$$
\begin{proof}
    We apply the Elias bound with $r=t+\alpha$, where $\alpha = 7$. When $10 \sqrt{\theta_{m}n}  <t$  we satisfy the condition
    \begin{equation}\label{eq:EliasCond}
        r^2 \lesssim r^2 - \theta_{m}(2 \alpha -1)n = r^2-2 \theta_{m}nr + \theta_{m}n(2t+1). 
    \end{equation}
    In fact, when $10 \sqrt{\theta_{m}n} < t < n^{4/5}$ we have $\theta_{m}(2\alpha -1)n= 13 \cdot \theta_{m}n$ whereas $ 100 \cdot \theta_{m}n< r^2.$   
  Applying Theorem \ref{thm:elias}, together with the second part of Lemma \ref{lem:volumeestimate} gives
    \begin{align*} 
    A_{m}(n,2t+1) & \le \frac{\theta_{m}n(2t+1)}{r^2-2\theta_{m}nr+\theta_{m}n(2t+1)} \cdot \frac{m^n}{\bbql{n,r}}\\
    &\stackrel{\eqref{eq:EliasCond}}{\lesssim} \frac{nt}{(t+\alpha)^2} \cdot \frac{m^n}{\bbql{n,t+\alpha}}\\
    &  \lesssim \begin{cases}
\frac{n}{t}\left( \frac{(t+\alpha)(n-\alpha+1)}{2n(n-\alpha+1-t)}\right)^{\alpha}H_{m}(n,t) &  \, \textnormal{if $m$ is even}\\
\frac{n}{C_{m}}\left( \frac{(C_{m}+\alpha)(n-\alpha+1)}{n(n-\alpha+1-t)}\right)^{\alpha}H_{m}(n,t)  & \textnormal{if $m$ is odd and $t=\lceil \frac{m}{2}\rceil$,} \\
\frac{n}{t}\left( \frac{(t+\alpha)(n-\alpha+1)}{n(n-\alpha+1-t)}\right)^{\alpha}H_{m}(n,t)  & \textnormal{in all other cases,} 
\end{cases}
\end{align*}
where $0< C_{m}$ is a sufficiently large constant.
Similar calculations to those used for \cite[ Proposition 5.2]{dong2022number} lead in all three cases to $A_{m}(n,2t+1)=o(H_{m}(n,t)/n)$. 
\end{proof}
\end{proposition}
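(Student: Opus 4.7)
The plan is to apply the Elias-type bound of Theorem~\ref{thm:elias} with a shifted radius $r = t+\alpha$ for a suitable constant $\alpha$, and then to translate the resulting volume $\bbql{n,t+\alpha}$ back into $\bbql{n,t}$ (that is, into $H_{m}(n,t)$) using the second part of Lemma~\ref{lem:volumeestimate}.

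First I would check the Elias hypothesis: substituting $d=2t+1$ and $r=t+\alpha$, the denominator $r^{2}-2\theta_{m}nr+\theta_{m}n(2t+1)$ collapses to $(t+\alpha)^{2}-\theta_{m}n(2\alpha-1)$. The lower-bound assumption $t>10\sqrt{\theta_{m}n}$ forces $r^{2}\ge t^{2}>100\,\theta_{m}n$, which dwarfs the constant multiple $\theta_{m}n(2\alpha-1)$ being subtracted; hence this denominator is positive and asymptotically equivalent to $r^{2}$. Plugging back in, Theorem~\ref{thm:elias} immediately yields an estimate of the shape
\[A_{m}(n,2t+1)\;\lesssim\;\frac{n}{t}\cdot\frac{m^{n}}{\bbql{n,t+\alpha}}.\]

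Next I would invoke the second part of Lemma~\ref{lem:volumeestimate} with $i=\alpha$ to upper-bound $\bbql{n,t}$ in terms of $\bbql{n,t+\alpha}$; rearranging and dividing produces a multiplicative correction of order $\bigl((t+\alpha)/n\bigr)^{\alpha}\cdot\bigl((n-\alpha+1)/(n-\alpha+1-t)\bigr)^{\alpha}$ (with the obvious modifications in the other two cases of the lemma, including $C_{m}$ in place of $t$ in the degenerate odd-$m$ subcase). Since $t<n^{4/5}=o(n)$, the second bracket tends to $1$, so the desired conclusion $A_{m}(n,2t+1)=o(H_{m}(n,t)/n)$ reduces to verifying that $(n^{2}/t)\cdot(t/n)^{\alpha}=o(1)$.

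The genuine substance of the argument is then to tune $\alpha$. The expression in question equals $t^{\alpha-1}/n^{\alpha-2}$, which under $t<n^{4/5}$ is at most $n^{(6-\alpha)/5}$, and hence is $o(1)$ for any constant $\alpha>6$; taking $\alpha=7$ works in all three cases of Lemma~\ref{lem:volumeestimate}. The degenerate odd-$m$ case $t=\lceil m/2\rceil$ introduces no new difficulty, since in that regime $t$ is a constant and the lower-bound assumption $t>10\sqrt{\theta_{m}n}$ forces $n$ to be bounded. The main obstacle is therefore just the careful matching of the two growth regimes, namely $t>10\sqrt{\theta_{m}n}$ (needed to keep the Elias denominator comparable to $r^{2}$) and $t<n^{4/5}$ (needed for the final $o(1)$ estimate), and the bookkeeping across the three parity cases of Lemma~\ref{lem:volumeestimate}.
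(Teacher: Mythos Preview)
Your proposal is correct and follows essentially the same route as the paper: apply the Elias bound (Theorem~\ref{thm:elias}) with the shifted radius $r=t+\alpha$, $\alpha=7$, verify that the denominator is asymptotically $r^{2}$ via $t>10\sqrt{\theta_{m}n}$, and then pass from $\bbql{n,t+\alpha}$ back to $H_{m}(n,t)$ using the second part of Lemma~\ref{lem:volumeestimate}. Your explicit endgame computation $t^{\alpha-1}/n^{\alpha-2}\le n^{(6-\alpha)/5}$ is exactly the ``similar calculation'' the paper defers to \cite{dong2022number}, and your remark that the odd-$m$ subcase $t=\lceil m/2\rceil$ is vacuous under the hypothesis $t>10\sqrt{\theta_{m}n}$ is a valid sharpening the paper leaves implicit.
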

This now implies a result about the number of $t$-Lee-error-correcting codes when $10 \sqrt{\theta_{m}n}  < t < n^{4/5}$.
\begin{theorem} \label{thm:largerdist}
    Let $2 \le m,  10 \sqrt{\theta_{m}n}  < t < n^{4/5},$ then the number of $t$-Lee-error-correcting codes of length $n$ in $\Zm^{n}$ is asymptotically at most $2^{o(H_{m}(n,t))},$ for sufficiently large $n$.
    \begin{proof}
        By Proposition \ref{prop:Elias} the number of $t$-Lee-error-correcting codes is at most
        \begin{align*}
          \binom{m^{n}}{\leq A_{m}(n,2t+1)} \leq   2^{(\log_{2}(m)) \cdot n\cdot o(H_{m}(n,t)/n)} = 2^{o(H_{m}(n,t))}.
        \end{align*}
      
    \end{proof}
\end{theorem}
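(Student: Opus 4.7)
The plan is to combine the Elias-type upper bound on $A_m(n, 2t+1)$ from Proposition \ref{prop:Elias} with the most naive possible counting: every $t$-Lee-error-correcting code is in particular a subset of $\Zm^n$ of size at most $A_m(n, 2t+1)$, so the number of such codes is trivially bounded by $\binom{m^n}{\le A_m(n, 2t+1)}$. The whole content of the proof is then to verify that this crude bound is still $2^{o(H_m(n,t))}$ in the assumed regime.

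First I would invoke Proposition \ref{prop:Elias}, which in the range $10\sqrt{\theta_m n} < t < n^{4/5}$ gives $A_m(n, 2t+1) = o(H_m(n,t)/n)$. Set $k := A_m(n, 2t+1)$. Since a code is just a set of codewords, the number of $t$-Lee-error-correcting codes of length $n$ is at most $\binom{m^n}{\le k}$.

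Next I would apply the standard estimate $\binom{N}{\le k} \le (k+1) N^k$ with $N = m^n$, so that in logarithms
\[
\log_2 \binom{m^n}{\le k} \le \log_2(k+1) + k \cdot n \log_2 m.
\]
The dominant term is $k n \log_2 m$. Because $m$ is treated as a fixed constant here, $\log_2 m = O(1)$, so this is $O(k n) = n \cdot o(H_m(n,t)/n) = o(H_m(n,t))$, and the lower-order term $\log_2(k+1)$ is easily absorbed since $k \le m^n$ gives $\log_2(k+1) = O(n)$, which is $o(H_m(n,t))$ by Lemma \ref{lem:volumeLandau} (the Lee volume $\bbql{n,t}$ is bounded for fixed $m$, so $H_m(n,t)$ grows like $m^n$ up to a polynomial factor, hence faster than $n$).

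The only delicate point, and thus the main thing to check carefully, is that the $n^{-1}$ factor produced by Proposition \ref{prop:Elias} exactly cancels the $n$ factor introduced by the codeword-by-codeword counting. This is precisely why the Elias bound had to be proved in the sharper form $A_m(n,2t+1) = o(H_m(n,t)/n)$ rather than simply $o(H_m(n,t))$; with the extra $1/n$ in hand, the product $(\log_2 m)\cdot n \cdot o(H_m(n,t)/n) = o(H_m(n,t))$ gives the claimed bound $2^{o(H_m(n,t))}$, finishing the argument.
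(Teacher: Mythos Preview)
Your proposal is correct and follows exactly the paper's approach: bound the number of codes by $\binom{m^n}{\le A_m(n,2t+1)}$ and use Proposition~\ref{prop:Elias} to convert the exponent $n\log_2(m)\cdot A_m(n,2t+1)$ into $o(H_m(n,t))$. One minor slip in your exposition: the Lee volume $\bbql{n,t}$ is \emph{not} bounded for fixed $m$ as $n\to\infty$ (it grows), but part~3 of Lemma~\ref{lem:volumeLandau} applied with $t<n^{4/5}$ still yields $H_m(n,t)\in\omega(n)$, so your absorption of the $\log_2(k+1)$ term goes through.
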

In Theorem \ref{thm:largerdist} we obtained different restrictions on $t$ than Dong et al. in  \cite[Theorem 1.1 (b)]{dong2022number} for the Hamming metric. In the Hamming metric the region of $t$ is $10\sqrt{n} < t \leq (1-q^{-1})n - C_{q}\sqrt{n \log(n)},$ for some constant $C_{q} >0$, such that the number of $t$-Lee-error-correcting codes in $\F_{q}^{n}$ is at most $2^{o(H_{q}(n,t))}$. 
This region for $t$ is slightly larger than our region. The less restricted region on $t$ for Hamming metric codes stems from the fact that the estimates on the ratios of the volumes given in \cite[Lemma 3.1]{dong2022number} also depend on the size of the ambient space.

\section{Counting Codes on Bipartite Graphs} \label{sec:bipartite}
In this section, we determine upper and lower bounds on the number of $t$-Lee-error-correcting codes by estimating isolated nodes on bipartite graphs. In particular, we define the density function of these codes of given cardinality within the set of codes of the same cardinality. Using the bipartite graph model, we can even consider linear codes over $\Zp$ (for $p$ prime). We determine the asymptotic behaviour of the density function and obtain density results for Lee metric codes that are optimal with respect to various bounds.
Let $1 \leq t \leq \lfloor \frac{nm-2}{4} \rfloor$, and consider the density function 
    \begin{equation} \label{eq:density}
    \delta_m^L(\Zm^{n},S,t) := \frac{|\{\mC \subseteq \Zm^{n}  :  |\mC|=S, \, d^{\textnormal{L}}(\mC) \ge 2t+1\}|}{|\{\mC \subseteq  \Zm^{n}  :  |\mC|=S\}|}
\end{equation}
of (possibly) nonlinear t-Lee-error-correcting codes in $\Zm^{n}$ and the density function of linear t-Lee-error-correcting codes in $\mathbb Z_p^n$ with minimum Lee distance at least $2t+1$, which is denoted by
    \begin{equation} \label{eq:densitylinear}
    \delta_p^L[\Zp^{n},S,t] := \frac{|\{\mC \leq \Zp^{n}  :  |\mC|=S, \, d^{\textnormal{L}}(\mC) \ge 2t+1\}|}{|\{\mC \leq  \Zp^{n}  :  |\mC|=S\}|}.
\end{equation}
Note that nonlinear Lee metric codes can have any cardinality $2 \leq |\mC| \leq m^n$, whereas the cardinality of linear codes is a power of $p$.

To give upper and lower bounds on the densities \eqref{eq:density} and \eqref{eq:densitylinear} we use counting arguments on the number of isolated nodes in \emph{bipartite graphs} that are regular with respect to certain maps defined on their left nodes. This approach was first proposed in \cite{gruica2022common} and applied to different metrics and different types of linearity in \cite{gruica2022densities}. 

We recall the concept of an \emph{association} on a finite non-empty set $\mV$ of magnitude $r \geq 0$ (see \cite{gruica2022common}), which is a function 
$\alpha: \mV \times \mV \to \{0,...,r\}$ satisfying the following:
\begin{itemize}
\item[(i)] $\alpha(V,V)=r$ for all $V\in \mV$;
\item[(ii)] $\alpha(V,V')=\alpha(V',V)$ for all $V,V' \in \mV$.
\end{itemize}

Given a finite bipartite graph $\mB=(\mV,\mW,\mE)$  and an association $\alpha$ on~$\mV$ of magnitude $r$, we call $\mB$ \emph{$\alpha$-regular} if for all  $(V,V') \in \mV \times \mV$ the number of nodes $W \in \mW$ with $(V,W) \in \mE$ and 
$(V',W) \in \mE$ depends only on $\alpha(V,V')$. We then denote this number by~$\mW_\ell(\alpha)$, where $\ell=\alpha(V,V') \in \{0, \dots , r \}$.
 \begin{proposition}\cite[Lemma 3.2 \& Lemma 3.5]{gruica2022common} \label{lem:upperbound}
 \\
Let $\mB=(\mV,\mW,\mE)$ be a bipartite $\alpha$-regular graph, where $\alpha$ is an association on $\mV$ of magnitude $r$. 
Let $\mF \subseteq \mW$ be the collection of non-isolated nodes of $\mW$. If $\mW_{r}(\alpha) > 0$, then
\begin{enumerate}
    \item [(i)]$|\mF| \le |\mV| \, \mW_{r}(\alpha).$
\item[(ii)]$|\mF| \ge  \frac{\mW_r(\alpha)^2 \, |\mV|^2}{\sum_{\ell=0}^r  \mW_\ell(\alpha) \, |\alpha^{-1}(\ell)|}.$
\end{enumerate}
\end{proposition}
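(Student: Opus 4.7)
The plan is to apply a double-counting argument for ordered triples of nodes in combination with the Cauchy--Schwarz inequality, relying entirely on $\alpha$-regularity to replace local counts by the global quantities $\mW_\ell(\alpha)$.

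The first reduction is to note that for every $V \in \mV$ one has $\alpha(V,V)=r$, so by $\alpha$-regularity the number of $W \in \mW$ with $(V,W)\in\mE$ equals $\mW_r(\alpha)$. Hence every node in $\mV$ has degree exactly $\mW_r(\alpha)$ in $\mB$, and the total number of edges is $|\mE|=|\mV|\,\mW_r(\alpha)$. For part~(i), each non-isolated node of $\mW$ is incident to at least one edge, so $|\mF| \le |\mE| = |\mV|\,\mW_r(\alpha)$, which is exactly the claimed upper bound.

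For part~(ii), I would count the ordered triples $(V,V',W)\in\mV\times\mV\times\mW$ with $(V,W),(V',W)\in\mE$ in two ways. Counting through $W$ first yields $\sum_{W\in\mW}\deg(W)^2=\sum_{W\in\mF}\deg(W)^2$, where the reduction to $\mF$ is trivial since isolated nodes contribute zero. Counting through the pair $(V,V')$ first, and using that by $\alpha$-regularity the number of common neighbors of $V$ and $V'$ depends only on $\alpha(V,V')$, gives $\sum_{(V,V')\in\mV\times\mV}\mW_{\alpha(V,V')}(\alpha)=\sum_{\ell=0}^{r}\mW_\ell(\alpha)\,|\alpha^{-1}(\ell)|$. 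Applying Cauchy--Schwarz to the sequence $(\deg(W))_{W\in\mF}$ against the constant sequence~$1$ gives $|\mE|^2=\bigl(\sum_{W\in\mF}\deg(W)\bigr)^2\le |\mF|\sum_{W\in\mF}\deg(W)^2$, and substituting both identities and rearranging produces the stated bound. The hypothesis $\mW_r(\alpha)>0$ is precisely what ensures $|\mE|>0$, so that the denominator on the right-hand side is nonzero and the Cauchy--Schwarz step is meaningful.

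There is no serious obstacle here: the argument is a standard Cauchy--Schwarz-meets-double-counting schema, and the $\alpha$-regularity is tailored so that the inner count of common neighbors depends only on the value of $\alpha$, which is exactly what lets the sum over $\mV\times\mV$ collapse into the denominator $\sum_{\ell=0}^{r}\mW_\ell(\alpha)\,|\alpha^{-1}(\ell)|$. The only delicate bookkeeping is making sure that the left-hand side of the Cauchy--Schwarz estimate is recognized as $|\mE|^2$, which follows immediately from the degree computation performed in the first step.
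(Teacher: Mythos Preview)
Your proof is correct. The paper does not actually give its own proof of this proposition; it simply cites \cite{gruica2022common} for the result. Your argument---left-regularity from $\alpha(V,V)=r$ gives $|\mE|=|\mV|\,\mW_r(\alpha)$, yielding~(i) immediately, and double-counting triples $(V,V',W)$ combined with Cauchy--Schwarz on $(\deg(W))_{W\in\mF}$ yields~(ii)---is exactly the standard proof and is the one given in the cited reference.
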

\subsection{Nonlinear Lee Metric Codes}
We begin by giving bounds on the number of (possibly) nonlinear Lee metric codes in $\Zm^{n}$ whose minimum distance is bounded from below by $2t+1$.

\begin{theorem} \label{thm:nonlinearbounds}
    Let $S \geq 1$ and $1 \leq t <+\infty$ be integers. Define the quantities
    \begin{align*}
        \beta^{0} &:= \frac{1}{2}m^{n}(\bbql{n,2t}-1)-2\bbql{n,2t}+3,\\
        \beta^{1} &:=2\bbql{n,2t}-4,\\
        \Theta &:= 1+\beta^{1}\frac{S-2}{m^{n}-2}+\beta^{0}\frac{(S-2)(S-3)}{(m^{n}-2)(m^{n}-3)},
    \end{align*}
    and let $\mF := \{\mC \subseteq \Zm^{n} : |\mC| = S, d^{\textnormal{L}}(\mC) \leq 2t\}$. We have
    $$\frac{1}{2\Theta}m^{n}(\bbql{n,2t}-1)\binom{m^{n}-2}{S-2} \leq |\mF| \leq \frac{1}{2}m^{n}(\bbql{n,2t}-1)\binom{m^{n}-2}{S-2}.$$
    \begin{proof}
        The proof is analogous to  \cite[Theorem 2.1]{gruica2022densities}, using the two bounds from Proposition \ref{lem:upperbound}.
    \end{proof}
\end{theorem}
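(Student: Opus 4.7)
The plan is to apply Proposition \ref{lem:upperbound} in the spirit of \cite[Theorem 2.1]{gruica2022densities}, with a bipartite graph that encodes the pairs of codewords violating the minimum distance condition. Take $\mV$ to be the set of unordered pairs $\{x,y\} \subseteq \Zm^n$ with $x \ne y$ and $d^{\textnormal{L}}(x,y) \le 2t$, take $\mW$ to be the collection of subsets of $\Zm^n$ of cardinality $S$, and draw an edge $(V,\mC) \in \mV \times \mW$ whenever $V \subseteq \mC$. By construction, $\mC \in \mW$ is non-isolated if and only if $d^{\textnormal{L}}(\mC) \le 2t$, so the non-isolated nodes of $\mW$ are exactly $\mF$. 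By translation invariance of the Lee ball, the number of pairs is $|\mV| = \tfrac{1}{2}m^n(\bbql{n,2t}-1)$.

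Next, I would equip $\mV$ with the association $\alpha(V,V') := |V\cap V'|$ of magnitude $r=2$. Because the number of codes of size $S$ containing a prescribed set of size $4-\ell$ depends only on $\ell$, the graph is $\alpha$-regular with
\[
\mW_\ell(\alpha) \;=\; \binom{m^n-(4-\ell)}{S-(4-\ell)}, \qquad \ell \in\{0,1,2\}.
\]
One then needs the fibre sizes $|\alpha^{-1}(\ell)|$. Clearly $|\alpha^{-1}(2)|=|\mV|$. For $\ell=1$, fixing $V=\{x,y\}\in \mV$, a pair $V'\in \mV$ meeting $V$ in exactly one element must be of the form $\{x,z\}$ or $\{y,z\}$ with $z$ at Lee distance at most $2t$ from the common element and $z\notin V$; this gives $2(\bbql{n,2t}-2)$ choices, hence $|\alpha^{-1}(1)|=|\mV|\cdot\beta^1$. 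Subtracting from $|\mV|^2$ then yields $|\alpha^{-1}(0)|=|\mV|\cdot\beta^0$.

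With these quantities in hand, Proposition \ref{lem:upperbound}(i) gives the stated upper bound directly as $|\mF|\le |\mV|\,\mW_2(\alpha)$. For the lower bound, Proposition \ref{lem:upperbound}(ii) gives
\[
|\mF| \;\ge\; \frac{\mW_2(\alpha)^2\,|\mV|^2}{\sum_{\ell=0}^{2}\mW_\ell(\alpha)\,|\alpha^{-1}(\ell)|}
\;=\; \frac{|\mV|\,\mW_2(\alpha)}{\Theta},
\]
after dividing numerator and denominator by $|\mV|\,\mW_2(\alpha)$ and using the elementary ratios
\[
\frac{\mW_1(\alpha)}{\mW_2(\alpha)} = \frac{S-2}{m^n-2}, \qquad
\frac{\mW_0(\alpha)}{\mW_2(\alpha)} = \frac{(S-2)(S-3)}{(m^n-2)(m^n-3)},
\]
together with the expressions $|\alpha^{-1}(\ell)|=|\mV|\beta^{\ell}$ computed above (with the convention $\beta^2=1$). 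Substituting the value of $|\mV|$ then produces the desired $\tfrac{1}{2\Theta}m^n(\bbql{n,2t}-1)\binom{m^n-2}{S-2}$.

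The only real obstacle is careful bookkeeping to confirm $\alpha$-regularity (which hinges on the bipartite structure seeing only set-theoretic containment, not the Lee metric itself) and to match the combinatorial constants $\beta^{0}, \beta^{1}$ to the fibre sizes $|\alpha^{-1}(0)|, |\alpha^{-1}(1)|$. No further estimates on Lee balls are required: everything reduces to ratios of binomial coefficients, so the argument parallels the Hamming/rank-metric case of \cite{gruica2022densities} almost verbatim.
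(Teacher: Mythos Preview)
Your proposal is correct and follows exactly the approach the paper intends: you build the same bipartite graph on pairs-at-distance-at-most-$2t$ versus size-$S$ subsets, verify $\alpha$-regularity via $\alpha(V,V')=|V\cap V'|$, compute $|\mV|$, $\mW_\ell(\alpha)=\binom{m^n-(4-\ell)}{S-(4-\ell)}$, and the fibre sizes $|\alpha^{-1}(\ell)|=|\mV|\beta^\ell$, and then feed everything into Proposition~\ref{lem:upperbound}(i)--(ii). The bookkeeping (in particular your derivation of $\beta^0=|\mV|-1-\beta^1$ and the binomial ratios yielding $\Theta$) checks out and matches the statement verbatim; this is precisely the argument of \cite[Theorem~2.1]{gruica2022densities} transported to the Lee metric, which is all the paper claims.
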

From Theorem \ref{thm:nonlinearbounds} we can derive the following asymptotic behaviour of the density function of Lee metric codes in $\Zm^{n}$.
\begin{theorem} \label{thm:nonlinear}
    Let $1 \leq t < +\infty$.
    \begin{enumerate}
        \item[i)] Let $n \geq 3$ be an integer. Consider the sequence $(H_{m}(n,2t+1))_{m \geq 2}$, where each $H_{m}(n,2t+1)$ is defined as in \eqref{eq:Hamming} and a sequence of integers $(S_m)_{m \geq 2}$,  with $S_m \geq 1$ for all $m \geq 2$. We have
    \begin{align*}
  \lim_{m \to +\infty}\delta_m^L(\Zm^{n},S_m,2t+1) =   \begin{cases}
 1 \quad &\textnormal{ if $S_{m} \in o(\sqrt{H_{m}(n,2t+1)})$ as $m \to +\infty$,}  \\
    0 \quad &\textnormal{ if $S_{m} \in \omega(\sqrt{H_{m}(n,2t+1)})$ as $m \to +\infty$.} 
    \end{cases}
    \end{align*}
            \item[ii)] Let $m \geq 2$. Consider the sequence $(H_{m}(n,2t+1))_{n \geq 1}$, where each $H_{m}(n,2t+1)$ is defined as in \eqref{eq:Hamming} and a sequence of integers $(S_n)_{n \geq 1}$,  with $S_n \geq 1$ for all $n \geq 1$. We have
    \begin{align*}
  \lim_{n \to +\infty}\delta_m^L(\Zm^{n},S_n,2t+1) =   \begin{cases}
 1 \quad &\textnormal{ if $S_{n} \in o(\sqrt{H_{m}(n,2t+1)})$ as $n \to +\infty$,}  \\
    0 \quad &\textnormal{ if $S_{n} \in \omega(\sqrt{H_{m}(n,2t+1)})$ as $n \to +\infty$.} 
    \end{cases}
\end{align*}
    \end{enumerate}

\end{theorem}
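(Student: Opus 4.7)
The plan is to deduce both limits directly from Theorem~\ref{thm:nonlinearbounds} by setting $\mF := \{\mC \subseteq \Zm^{n} : |\mC|=S,\ d^{\textnormal{L}}(\mC) \le 2t\}$, so that $\delta_m^L(\Zm^{n}, S, 2t+1) = 1 - |\mF|/\binom{m^{n}}{S}$. Using the elementary identity $\binom{m^{n}-2}{S-2}/\binom{m^{n}}{S} = S(S-1)/[m^{n}(m^{n}-1)]$, Theorem~\ref{thm:nonlinearbounds} rewrites as
\[
\frac{(\bbql{n,2t}-1)\,S(S-1)}{2\,\Theta\,(m^{n}-1)} \;\le\; \frac{|\mF|}{\binom{m^{n}}{S}} \;\le\; \frac{(\bbql{n,2t}-1)\,S(S-1)}{2(m^{n}-1)},
\]
so the whole argument reduces to tracking the quantity $\lambda := \bbql{n,2t}\,S^{2}/m^{n}$ together with the size of $\Theta$.

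For the $o$-case I would use only the upper inequality. Since $\bbql{n,2t} \le \bbql{n,2t+1}$, its right-hand side is at most a constant multiple of $S^{2}/H_m(n,2t+1)$, which vanishes under the hypothesis $S \in o(\sqrt{H_m(n,2t+1)})$. Hence $|\mF|/\binom{m^{n}}{S} \to 0$ and $\delta \to 1$ in both parts (i) and (ii).

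For the $\omega$-case I would use the lower inequality and prove that it tends to $1$. A direct expansion of $\Theta$ gives $\beta^{1}(S-2)/(m^{n}-2) \sim 2\bbql{n,2t}\,S/m^{n} = 2\lambda/S$ and $\beta^{0}(S-2)(S-3)/[(m^{n}-2)(m^{n}-3)] \sim \bbql{n,2t}\,S^{2}/(2m^{n}) = \lambda/2$; when $S\to\infty$ and $\lambda\to\infty$ the latter dominates, yielding $\Theta \sim 1+\lambda/2$ and hence a lower bound asymptotic to $\lambda/(2+\lambda) \to 1$, which forces $\delta \to 0$. Both $S\to\infty$ and $H_m(n,2t+1)\to\infty$ follow from the hypothesis (the latter via Lemma~\ref{lem:volumeLandau} in part (ii), and trivially from $m^{n}\to\infty$ in part (i)). The main obstacle is converting $S\in\omega(\sqrt{H_m(n,2t+1)})$ into the required $\lambda\to\infty$, for which one needs control on the ratio $\bbql{n,2t+1}/\bbql{n,2t}$. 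In part (i), once $m$ exceeds a threshold depending on $t$, both volumes stabilise to constants depending only on $n$ and $t$, so the two thresholds coincide up to a constant and $\lambda$ diverges. In part (ii) one applies the ball-volume estimates of Lemma~\ref{lem:volumeestimate} in tandem with Lemma~\ref{lem:binomprelim} to bound this ratio and derive the divergence of $\lambda$, completing the proof.
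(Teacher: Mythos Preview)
Your overall strategy matches the paper's: the paper simply asserts that the theorem follows from Theorem~\ref{thm:nonlinearbounds} and gives no further details, and your computation of $|\mF|/\binom{m^{n}}{S}$ via the binomial identity is exactly the way to unpack that. The $o$-cases are handled correctly (the inequality $\bbql{n,2t}\le\bbql{n,2t+1}$ suffices), and the $\omega$-case of part~(i) is also fine because for fixed $n,t$ and $m\to\infty$ both $\bbql{n,2t}$ and $\bbql{n,2t+1}$ eventually coincide with their $\ell_{1}$-ball counterparts and hence stabilise to constants depending only on $n,t$, so the two thresholds agree up to a constant.

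The genuine gap is the $\omega$-case of part~(ii). You need $\lambda=\bbql{n,2t}\,S^{2}/m^{n}\to\infty$, whereas the hypothesis only gives $\bbql{n,2t+1}\,S^{2}/m^{n}\to\infty$. You invoke Lemma~\ref{lem:volumeestimate}, but that lemma bounds $\bbql{n,2t}$ \emph{from above} by roughly $(t/n)\,\bbql{n,2t+1}$; it shows $\bbql{n,2t+1}/\bbql{n,2t}\gtrsim n$, which is the wrong direction for your purpose. Indeed, for fixed $m,t$ one has $\bbql{n,r}=\Theta(n^{r})$ by Lemma~\ref{lem:leebounds}, so the ratio diverges and the two thresholds $\sqrt{H_{m}(n,2t)}$ and $\sqrt{H_{m}(n,2t+1)}$ differ by a factor of order $\sqrt{n}$. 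Choosing $S_{n}\in\omega\bigl(\sqrt{H_{m}(n,2t+1)}\bigr)\cap o\bigl(\sqrt{H_{m}(n,2t)}\bigr)$ then makes your own upper bound on $|\mF|/\binom{m^{n}}{S}$ tend to $0$, forcing $\delta\to 1$ rather than $0$. This strongly suggests the stated threshold should read $H_{m}(n,2t)$ (the quantity that actually appears in Theorem~\ref{thm:nonlinearbounds}); with that correction your argument goes through verbatim, but as written the step you correctly flag as ``the main obstacle'' cannot be closed.
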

From Theorem \ref{thm:nonlinear} we obtain immediately the following result for the sphere covering bound (respectively Gilbert-Varshamov bound).
\begin{corollary}
    The probability that a uniformly chosen random Lee metric code in $\Zm^{n}$ satisfies the sphere covering bound tends to $0$ both as $m \to +\infty$ and $n \to +\infty$.
\end{corollary}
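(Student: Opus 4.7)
The plan is to deduce the corollary directly from Theorem~\ref{thm:nonlinear} by identifying the sphere--covering (Gilbert--Varshamov) bound size and verifying that it falls in the sparse regime of that theorem. Recall that the sphere--covering bound produces a Lee metric code of size $S := \lceil m^{n}/\bbql{n,2t}\rceil$ with minimum Lee distance at least $2t+1$; the phrase ``a uniformly random Lee metric code satisfies the sphere--covering bound'' is naturally read as conditioning on this cardinality and asking for the minimum-distance promise, which is exactly the density $\delta_m^L(\Zm^{n},S,2t+1)$ appearing in Theorem~\ref{thm:nonlinear}.

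The key step is then to show $S \in \omega\bigl(\sqrt{H_m(n,2t+1)}\bigr)$ in both asymptotic regimes. A direct computation yields
\[
\frac{S^{2}}{H_m(n,2t+1)} \;=\; \Bigl(\frac{m^{n}}{\bbql{n,2t}}\Bigr)^{\!2}\cdot \frac{\bbql{n,2t+1}}{m^{n}} \;=\; \frac{m^{n}\,\bbql{n,2t+1}}{\bbql{n,2t}^{2}}.
\]
For $m \to +\infty$ with $n,t$ fixed, item~2 of Lemma~\ref{lem:volumeLandau} gives $\bbql{n,2t},\bbql{n,2t+1}\in O(1)$, so the ratio is bounded below by a positive multiple of $m^{n}$ and diverges. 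For $n \to +\infty$ with $m,t$ fixed, the upper bound of Lemma~\ref{lem:leebounds} shows that $\bbql{n,r}$ grows at most polynomially in $n$ (of degree $r$), whereas $m^{n}$ grows exponentially; hence the ratio exceeds $m^{n}/\mathrm{poly}(n)$ and again diverges.

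In either regime $S \in \omega\bigl(\sqrt{H_m(n,2t+1)}\bigr)$, so Theorem~\ref{thm:nonlinear} immediately gives $\delta_m^L(\Zm^{n},S,2t+1) \to 0$, proving the corollary. I do not anticipate a serious obstacle: this is essentially a quantitative reading of the main density dichotomy at the Gilbert--Varshamov size. Once that size is rewritten in terms of $H_m(n,\cdot)=m^{n}/\bbql{n,\cdot}$, the loose volume estimates already recorded in Section~\ref{sec:pre} finish the argument, and the same line of reasoning handles both limits uniformly.
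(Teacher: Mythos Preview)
Your proof is correct and matches the paper's approach: the paper states only that the corollary follows immediately from Theorem~\ref{thm:nonlinear}, and you have supplied precisely the missing verification that the Gilbert--Varshamov size $S\approx m^{n}/\bbql{n,2t}$ lies in the sparse regime $\omega(\sqrt{H_m(n,2t+1)})$ of that dichotomy in both limits.
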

\subsection{Linear Lee Metric Codes}
In this section we consider linear codes $\mC \subseteq \Zp^{n}$ of a certain dimension $k \leq n$. Again we begin by giving bounds on the number of linear Lee metric codes in $\Zp^{n}$ whose minimum distance is bounded from below by $2t+1$.
We consider the same bipartite graph as done in the proof of \cite[Theorem 2.3]{gruica2022densities}, except that -- due to the fact that the Lee weight for an element $x \in \Zp^{n}$ may increase under scalar multiplication --
we consider the whole set of elements in $\Zp^{n}$ with Lee distance at most $2t$ (except zero) instead of considering $1$-dimensional vector spaces as done in \cite{gruica2022densities}. Hence, we let the left node set of the graph be $\mathbf{B}_{p}^{\textnormal{L}}(n,2t,0)\setminus \{0\}.$ Furthermore, we will consider linear Lee metric codes only over $\Zp$, i.e., fields of prime order. The reason for this is that the regularity conditions on the graph can no longer be fulfilled if the elements in the left node set of $\mB$ have different orders which is the case if we consider codes over $\Zps$ where $s>1.$

\begin{theorem} \label{cor:boundsLee}
Let $1 \le k \le n$ and $1 \le t < +\infty$ be integers and 
$\mF := \{ \mC\subseteq \Zp^{n} : \dim (\mC)=k, d^{\textnormal{L}}(\mC) \leq 2t\}.$ We have
$$\frac{(\bbqlp{n,2t}-1)\dstirling{n-1}{k-1}_{p}^{2}}{\dstirling{n-2}{k-2}_{p}(\bbqlp{n,2t}-p)+\dstirling{n-1}{k-1}_{p}(p-1)} \leq |\mF| \leq (\bbqlp{n,2t}-1)\dstirling{n-1}{k-1}_{p}.$$
\begin{proof}
    We consider the bipartite graph $\mB=(\mV,\mW,\mE)$, where $\mV = \mathbf{B}_{p}^{\textnormal{L}}(n,2t,0)\setminus \{0\}$, $\mW = \{\mC \subseteq \Zp^{n} : \dim (\mC) =k\}$  and 
  $\mE = \{(c,\mC) \in \mV \times \mW : c \in \mC \}$. 
 We have \begin{align*}
     |\mV| = \textbf{v}_p^{\textnormal{L}}(n,2t)-1, \quad  |\mW| = \dstirling{n}{k}_{p}.
 \end{align*}
It is easy to see that $\mB$ is left-regular of degree $$\dstirling{n-1}{k-1}_{p}.$$ Applying Proposition~\ref{lem:upperbound} we obtain an upper bound on $|\mF|$.

To prove the lower bound, we consider the association $$\alpha  :  \mV \times \mV \longrightarrow \{0,1\}, \quad (V,V') \mapsto 2-\dim\langle V,V' \rangle.$$ Here we obtain
\begin{align*}
       |\alpha^{-1}(0)|= |\mV| (|\mV|-(p-1)),\quad |\alpha^{-1}(1)|= |\mV|(p-1)
\end{align*}
and $\mB$ is $\alpha$-regular. Furthermore, we have
\begin{align*}
    \mW_0(\alpha)= \qbin{n-2}{k-2}{p}, \; \mW_1(\alpha)= \qbin{n-1}{k-1}{p},
\end{align*}
which -- combined with Proposition~\ref{lem:upperbound} -- implies the lower bound.
\end{proof}
\end{theorem}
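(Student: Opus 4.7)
The plan is to invoke the abstract bipartite-graph framework of Proposition \ref{lem:upperbound} with a left side encoding the short Lee vectors that witness small minimum distance, and a right side consisting of the $k$-dimensional codes we wish to count. A right node then lies in $\mF$ exactly when it contains at least one nonzero vector of Lee weight at most $2t$, so $|\mF|$ equals the number of non-isolated right nodes, and both parts of Proposition \ref{lem:upperbound} apply directly.

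Concretely, I would set $\mB = (\mV, \mW, \mE)$ with
\[
\mV := \ballL{n, 2t, 0} \setminus \{0\}, \qquad \mW := \{\mC \subseteq \Zp^n : \dim \mC = k\}, \qquad \mE := \{(v, \mC) \in \mV \times \mW : v \in \mC\},
\]
so that $|\mV| = \bbqlp{n, 2t} - 1$ and $|\mW| = \dstirling{n}{k}_p$. A standard linear-algebra count shows that every nonzero vector of $\Zp^n$ lies in exactly $\dstirling{n-1}{k-1}_p$ subspaces of dimension $k$, so $\mB$ is left-regular of that degree, and Proposition \ref{lem:upperbound}(i) then immediately yields the upper bound.

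For the lower bound I would introduce the association $\alpha : \mV \times \mV \to \{0,1\}$ of magnitude $1$ defined by $\alpha(V,V') := 2 - \dim\langle V, V' \rangle$, so $\alpha = 1$ on collinear pairs and $\alpha = 0$ on linearly independent ones; reflexivity and symmetry are clear. The $\alpha$-regularity of $\mB$ reduces to the purely linear-algebraic fact that the number of $k$-dimensional subspaces containing two given vectors depends only on the dimension of their span, giving $\mW_1(\alpha) = \dstirling{n-1}{k-1}_p$ (subspaces through a fixed line) and $\mW_0(\alpha) = \dstirling{n-2}{k-2}_p$ (subspaces through a fixed plane). Counting collinear pairs within $\mV$ yields $|\alpha^{-1}(1)| = |\mV|(p-1)$ and $|\alpha^{-1}(0)| = |\mV|(|\mV| - (p-1))$. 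Substituting these four quantities into Proposition \ref{lem:upperbound}(ii) and cancelling a common factor of $|\mV|$ between numerator and denominator then produces the stated lower bound.

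The step I expect to require the most care is the evaluation of $|\alpha^{-1}(1)| = |\mV|(p-1)$, and this is also the step that forces the restriction to $\Zp$ with $p$ prime: only then does every nonzero element have the same additive order $p$, so every one-dimensional subspace has the uniform structure needed to count collinear pairs by a single product formula. Over $\Zps$ with $s > 1$ the different additive orders of the elements destroy this uniformity and the association ceases to be $\alpha$-regular in the required sense, exactly as the authors flag just before the theorem statement. The remaining steps — the degree count on $\mW$, the explicit forms of $\mW_0(\alpha),\mW_1(\alpha)$, and the algebraic simplification producing the quoted quotient — are routine bookkeeping with the bipartite framework.
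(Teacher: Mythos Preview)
Your proposal is correct and follows essentially the same route as the paper: the same bipartite graph $\mB=(\mV,\mW,\mE)$ with $\mV=\ballL{n,2t,0}\setminus\{0\}$, the same left-regularity count for the upper bound, the same association $\alpha(V,V')=2-\dim\langle V,V'\rangle$ with the same values of $\mW_0(\alpha),\mW_1(\alpha),|\alpha^{-1}(0)|,|\alpha^{-1}(1)|$, and the same appeal to Proposition~\ref{lem:upperbound}(ii) for the lower bound. Your remark about why the argument needs $p$ prime also matches the paper's motivation.
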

As an immediate consequence we obtain the following bounds on the density function.
\begin{corollary} \label{cor:boundsdensitylin}
Let $1 \le k \le n$ and $1 \le t < +\infty$ be integers. We have
       \begin{align} \label{eq:upperBoundsublim}
       1-  \frac{(\bbqlp{n,2t}-1)\dstirling{n-1}{k-1}_{p} }{\dstirling{n}{k}_{p}} \leq \delta_p^L[\Zp^{n},p^{k},2t+1] &\leq 1-  \frac{(\bbqlp{n,2t}-1)\dstirling{n-1}{k-1}_{p}}{\bar\Theta\dstirling{n}{k}_{p}},
\end{align}
where $$\bar\Theta =  (p-1)+  \dstirling{n-1}{k -1}_{p}^{-1} (\bbqlp{n,2t}-p)  \dstirling{n-2}{k-2}_{p} .$$
\end{corollary}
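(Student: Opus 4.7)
The plan is to derive this corollary as a direct consequence of Theorem~\ref{cor:boundsLee}. The key observation is that the density function in \eqref{eq:densitylinear} can be written in complementary form as
\begin{equation*}
\delta_p^L[\Zp^{n}, p^{k}, 2t+1] \;=\; 1 - \frac{|\mF|}{\dstirling{n}{k}_{p}},
\end{equation*}
where $\mF = \{\mC \leq \Zp^{n} : \dim(\mC) = k, \; d^{\textnormal{L}}(\mC) \leq 2t\}$ is precisely the family counted in Theorem~\ref{cor:boundsLee} and $\dstirling{n}{k}_{p}$ is the total number of $k$-dimensional subspaces of $\Zp^{n}$. Thus an upper bound on $|\mF|$ yields a lower bound on $\delta_p^L$, and vice versa.

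First I would substitute the upper bound $|\mF| \leq (\bbqlp{n,2t}-1)\dstirling{n-1}{k-1}_{p}$ from Theorem~\ref{cor:boundsLee} into the complementary expression. This gives the lower bound in \eqref{eq:upperBoundsublim} immediately, with no further manipulation required.

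Next I would handle the upper bound for $\delta_p^L$, which comes from the lower bound on $|\mF|$. The lower bound in Theorem~\ref{cor:boundsLee} reads
\begin{equation*}
|\mF| \;\geq\; \frac{(\bbqlp{n,2t}-1)\dstirling{n-1}{k-1}_{p}^{2}}{\dstirling{n-2}{k-2}_{p}(\bbqlp{n,2t}-p)+\dstirling{n-1}{k-1}_{p}(p-1)}.
\end{equation*}
The main (purely algebraic) step is to factor one copy of $\dstirling{n-1}{k-1}_{p}$ out of the denominator, writing the denominator as $\dstirling{n-1}{k-1}_{p} \cdot \bar\Theta$, with
\begin{equation*}
\bar\Theta \;=\; (p-1) + \dstirling{n-1}{k-1}_{p}^{-1}(\bbqlp{n,2t}-p)\dstirling{n-2}{k-2}_{p},
\end{equation*}
which matches the definition in the statement. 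After cancellation this yields
\begin{equation*}
|\mF| \;\geq\; \frac{(\bbqlp{n,2t}-1)\dstirling{n-1}{k-1}_{p}}{\bar\Theta},
\end{equation*}
and substituting this into $\delta_p^L = 1 - |\mF|/\dstirling{n}{k}_{p}$ produces the upper bound in \eqref{eq:upperBoundsublim}.

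There is no conceptual obstacle here beyond the bookkeeping of Gaussian binomial coefficients, since all the hard work (constructing the bipartite graph $\mB$, verifying $\alpha$-regularity with the chosen association, and computing $|\alpha^{-1}(\ell)|$ and $\mW_\ell(\alpha)$) has already been carried out in Theorem~\ref{cor:boundsLee} via Proposition~\ref{lem:upperbound}. The only thing worth double-checking is the rearrangement identifying $\bar\Theta$, which is the unique nontrivial step.
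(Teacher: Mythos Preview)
Your proposal is correct and matches the paper's approach exactly: the paper states the corollary as ``an immediate consequence'' of Theorem~\ref{cor:boundsLee} without giving a separate proof, and your derivation---writing $\delta_p^L = 1 - |\mF|/\dstirling{n}{k}_{p}$, plugging in the two bounds, and rewriting the denominator of the lower bound as $\dstirling{n-1}{k-1}_{p}\cdot\bar\Theta$---is precisely the intended bookkeeping.
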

From Corollary \ref{cor:boundsdensitylin} we derive the following asymptotic behaviour of the density of linear Lee metric codes in $\Zp^{n}$.
\begin{theorem} \label{thm:asymplinear}
    Let $1 \leq t < +\infty$ be an integer.
    \begin{enumerate}
        \item Let $1 \leq k < n$ be integers, then $\lim_{p \to +\infty} \delta_p^L[\Zp^{n},p^{k},2t+1] =1.$
        \item Let $p \in P$, fix a rate $R \in [0,1)$ such that $k(n) = Rn.$ Then $\lim_{n \to +\infty} \delta_p^L[\Zp^{n},p^{k(n)},2t+1] =1.$
    \end{enumerate}
    \begin{proof}
        Considering the lower bound given in \ref{eq:upperBoundsublim} and using the asymptotic estimates of the $q$-binomial coefficient we have
        $$\frac{(\bbqlp{n,2t}-1)\dstirling{n-1}{k-1}_{p} }{\dstirling{n}{k}_{p}} \sim \frac{\bbqlp{n,2t}p^{k}}{p^{n}},$$
        both as $p \to + \infty$ and as $n \to +\infty$. Now the two statements follow from Lemma \ref{lem:volumeLandau}.
    \end{proof}
\end{theorem}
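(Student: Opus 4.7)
The plan is to apply the lower bound from Corollary~\ref{cor:boundsdensitylin}. Since $\delta_p^L[\Zp^n, p^k, 2t+1] \leq 1$ trivially, it suffices to show that the quantity
$$\frac{(\bbqlp{n,2t}-1)\dstirling{n-1}{k-1}_{p}}{\dstirling{n}{k}_{p}} \longrightarrow 0$$
in each of the two limits.

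First I would apply the standard Gaussian-binomial identity
$\dstirling{n}{k}_{p} = \frac{p^n - 1}{p^k - 1}\dstirling{n-1}{k-1}_{p}$,
which turns the ratio above into $(\bbqlp{n,2t}-1) \cdot \frac{p^k - 1}{p^n - 1}$. Using $\frac{p^k - 1}{p^n - 1} \leq 2\, p^{k-n}$ (valid whenever $p \geq 2$ and $k < n$), the problem reduces to showing $\bbqlp{n,2t} \cdot p^{k-n} \to 0$ in each regime.

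For part~(1), with $n,k,t$ fixed and $p \to +\infty$, statement~(2) of Lemma~\ref{lem:volumeLandau} (applied with $m = p$) gives $\bbqlp{n,2t} \in O(1)$, whereas $p^{k-n} \to 0$ since $k - n < 0$. For part~(2), $p$ and $t$ are fixed, so $2t$ is a constant radius and $2t \lesssim n^{1/2}$ holds trivially; statement~(3) of Lemma~\ref{lem:volumeLandau} then yields $\bbqlp{n,2t} \in o(C^n)$ for every constant $C > 1$. Choosing any $C \in (1, p^{1-R})$---nonempty because $R < 1$ and $p \geq 2$---gives
$$\bbqlp{n,2t} \cdot p^{(R-1)n} \;=\; o\!\bigl((Cp^{R-1})^n\bigr) \longrightarrow 0,$$
since $Cp^{R-1} < 1$.

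The main subtlety is selecting the correct item of Lemma~\ref{lem:volumeLandau} in each regime and verifying its hypotheses: part~(1) freezes the ambient dimension while $m = p$ grows, matching statement~(2); part~(2) holds the radius $2t$ constant so that statement~(3) applies with any choice of exponents $a < b$. Once the Gaussian-binomial identity is in place, the rest is a routine asymptotic comparison.
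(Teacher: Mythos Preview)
Your proposal is correct and follows essentially the same route as the paper: both start from the lower bound in Corollary~\ref{cor:boundsdensitylin}, reduce the Gaussian-binomial ratio to (a constant times) $\bbqlp{n,2t}\,p^{k-n}$, and then invoke Lemma~\ref{lem:volumeLandau} to conclude. Your version is simply more explicit about the identity $\dstirling{n}{k}_{p}=\frac{p^{n}-1}{p^{k}-1}\dstirling{n-1}{k-1}_{p}$ and about which item of Lemma~\ref{lem:volumeLandau} is used in each regime, but there is no substantive difference in strategy.
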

Hence, when fixing $n,k$ and $t$ the proportion of linear t-Lee-error-correcting codes approaches $1$ as $p$ tends to infinity.
From the corollary above we can conclude, that most of the codes will have a large distance. 

However, codes that attain the bound in Theorem \ref{thm:Plotkin} are sparse as $n \to +\infty$ and as $p \to +\infty$. This had already been observed in \cite{byrne2023bounds}. However, since this is the sharpest known bound on the minimum Lee distance of linear codes, we will focus on this bound and confirm the results of \cite{byrne2023bounds} using our more general technique. 
The Plotkin bound is in the classical case a much looser bound than the Singleton bound. Its optimal linear codes are constant weight codes, while their nonlinear counterparts are called equidistant codes. They find several applications (e.g. \cite{cald,optical,digital,massey}, they are closely related to Steiner systems (e.g. \cite{steiner,braun}), and due to their connections to minimal codes \cite{min} they also find applications in secret sharing schemes (see e.g. \cite{secret}). The situation is somewhat different for the Lee metric: the Lee metric Plotkin bound, first introduced by Wyner and Graham \cite{wyner}, was later improved by Chiang and Wolf \cite{chiang1971channels} and finally by Byrne and Weger in \cite{byrne2023bounds}. The resulting Lee metric Plotkin bound has thus been studied more extensively than the Lee metric Singleton bound and is a much tighter bound. The constant Lee weight codes were first studied by Wood \cite{wood} and their characterization and density has been completed in \cite{byrne2023bounds}. Applications of such codes may be found in areas similar to those of classical codes.

\begin{theorem} 
    Let $1 \leq t \leq \lfloor \frac{nm-2}{4} \rfloor$ be an integer. 
    \begin{enumerate}
        \item[(i)] Let $1 \le n$ be an integer, then 
        $\lim_{p \to +\infty} \delta_p^L[\Zp^n,p^{n-  \frac{8t+4}{p+1}+1},t]=0.$
        \item[(ii)] Let $p \in P$, then $\lim_{n \to +\infty} \delta_p^L[\Zp^n,p^{n-\frac{8t+4}{p+1}+1},t]=0.$
    \end{enumerate}
     \begin{proof}
         Recalling the asymptotic estimates of the $q$-binomial coefficient from  \cite[Lemma 3.4]{gruica2022densities}
      it is easy to see that for the upper bound given in \eqref{cor:boundsLee} we have
      \begin{equation*}
         \frac{(\bbqlp{n,2t}-1)\dstirling{n-1}{k-1}_{p}}{\bar\Theta\dstirling{n}{k}_{p}} \sim \frac{(\bbqlp{n,2t}-1)}{p^{n-k}(p-1)+\bbqlp{n,2t}-p},
      \end{equation*}
      both as $p \to + \infty$ and $n \to +\infty$.
      Letting $k=n-\frac{8t+4}{p+1}+1$ gives
         \begin{equation*} %\label{eq:bigupperBound}
          \lim_{p \to +\infty} \frac{(\bbqlp{n,2t}-1)}{p^{\frac{8t+4}{p+1}-1}(p-1)+\bbqlp{n,2t}-p} =1,
         \end{equation*}
         and 
               \begin{equation*} %\label{eq:bigupperBound}
          \lim_{n \to +\infty} \frac{(\bbqlp{n,2t}-1)}{p^{\frac{8t+4}{p+1}-1}(p-1)+\bbqlp{n,2t}-p} =1.
         \end{equation*}
         Using Corollary \ref{cor:boundsdensitylin} the statement follows.
      %can be derived from \eqref{eq:Landau}, \eqref{eq:bigupperBound} and the fact that $\delta_p^L[\Zm^n,S,d]$ is nonnegative.
     \end{proof}
\end{theorem}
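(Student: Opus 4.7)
The plan is to apply the upper bound on the density given by Corollary \ref{cor:boundsdensitylin}, namely
\[
\delta_p^L[\Zp^n, p^k, 2t+1] \;\le\; 1 - \frac{(\bbqlp{n,2t}-1)\,\dstirling{n-1}{k-1}_{p}}{\bar\Theta\,\dstirling{n}{k}_{p}},
\]
and to show that the subtracted ratio tends to $1$ in both asymptotic regimes. First I would use the standard asymptotic estimates for the Gaussian binomial coefficient from \cite[Lemma 3.4]{gruica2022densities}, together with the explicit simplifications $\dstirling{n}{k}_p/\dstirling{n-1}{k-1}_p = (p^n-1)/(p^k-1) \sim p^{n-k}$ and $\dstirling{n-2}{k-2}_p/\dstirling{n-1}{k-1}_p = (p^{k-1}-1)/(p^{n-1}-1)$, in order to reduce the fraction to
\[
\frac{(\bbqlp{n,2t}-1)\,\dstirling{n-1}{k-1}_{p}}{\bar\Theta\,\dstirling{n}{k}_{p}} \;\sim\; \frac{\bbqlp{n,2t}-1}{p^{n-k}(p-1) + \bbqlp{n,2t}-p},
\]
an expression valid both as $p \to +\infty$ with $n$ fixed and as $n \to +\infty$ with $p$ fixed.

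Next I would substitute the Plotkin-equality exponent $k = n - (8t+4)/(p+1)+1$, so that $p^{n-k} = p^{(8t+4)/(p+1)-1}$, and reduce to checking that the resulting ratio tends to $1$ in each regime. For part (ii), where $n \to +\infty$ with $p$ fixed, Lemma \ref{lem:volumeLandau}(i) gives $\bbqlp{n,2t}\to+\infty$, while the other two summands of the denominator, $p^{(8t+4)/(p+1)-1}(p-1)$ and $-p$, are bounded in $n$; hence $\bbqlp{n,2t}$ dominates and the ratio tends to $1$. For part (i), where $p \to +\infty$ with $n$ fixed, Lemma \ref{lem:volumeLandau}(ii) controls $\bbqlp{n,2t}$, and a careful expansion of $p^{(8t+4)/(p+1)-1}(p-1)$ based on the elementary fact that $p^{(8t+4)/(p+1)}\to 1$ is needed, so that the two dominant pieces of the denominator combine with the numerator in the appropriate way.

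The main obstacle is the calculation in part (i): the interaction of the terms $p^{(8t+4)/(p+1)-1}(p-1)$ and $\bbqlp{n,2t}-p$ in the denominator requires a precise asymptotic expansion of $p^{(8t+4)/(p+1)} = \exp((8t+4)\log(p)/(p+1))$ rather than just substituting limit values, and one must track all leading-order contributions carefully to conclude that the full ratio tends to $1$. Once this calculation is completed in both regimes, the subtracted fraction in the density upper bound converges to $1$, and therefore $\delta_p^L[\Zp^n,p^{n-(8t+4)/(p+1)+1},t] \to 0$, completing the proof of both (i) and (ii).
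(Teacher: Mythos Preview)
Your proposal is correct and follows essentially the same approach as the paper: apply the upper bound from Corollary~\ref{cor:boundsdensitylin}, simplify the ratio via the asymptotic estimates of the Gaussian binomials from \cite[Lemma 3.4]{gruica2022densities} to the form $(\bbqlp{n,2t}-1)/(p^{n-k}(p-1)+\bbqlp{n,2t}-p)$, substitute the Plotkin exponent $k=n-(8t+4)/(p+1)+1$, and show the resulting limits equal~$1$. You give more detail than the paper (the explicit $q$-binomial ratios, the appeal to Lemma~\ref{lem:volumeLandau} for the volume asymptotics, and the flagged subtlety in part~(i)), but the structure and key steps are identical.
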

     For linear codes of rate $R$ we can reformulate the \emph{sphere covering} or \emph{Gilbert-Varshamov (GV) bound} to
$$ R \geq 1-\frac{1}{n}\log_{p}(\bbqlp{n,2t}), $$ which is equivalent to
\begin{equation} \label{eq:GV}
   \log_{p} (\bbqlp{n,2t}) \geq n(1-R).
\end{equation}
In the following we study the asymptotic behaviour of codes achieving the bound \eqref{eq:GV} with respect to $p$ or $n$.
\begin{theorem}
    Let $1 \leq t \leq \lfloor \frac{nm-2}{4} \rfloor$ be integers. Let $\mC \subseteq \Zp^{n}$ be a linear Lee metric code of rate $R=1-\frac{1}{n}\log_{p}(\bbqlp{n,2t}) $. 
    %whose cardinality attains th
    Then the probability that $\mC$ has minimum distance at least $2t+1$, i.e., the code satisfies the GV bound, approaches $1$, as $p\rightarrow +\infty$.
        \begin{proof}
       The statement follows from $k(n) =Rn <n$ and Theorem \ref{thm:asymplinear}.
    \end{proof}
\end{theorem}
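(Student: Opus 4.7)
The plan is to read off the claim directly from Theorem~\ref{thm:asymplinear}(1), after verifying that the dimension $k=Rn$ prescribed by the GV rate satisfies the strict inequality $k<n$ required by that theorem. For $t \geq 1$, the radius-$2t$ Lee ball in $\Zp^{n}$ contains at least the two unit vectors $\pm e_1$ together with the origin, so $\bbqlp{n,2t}\geq 3>1$, and hence $\log_p(\bbqlp{n,2t})>0$. Consequently
\[
R \;=\; 1-\frac{1}{n}\log_p(\bbqlp{n,2t})\;<\;1,
\]
so the dimension $k(n)=Rn$ of a rate-$R$ linear code lies in $\{1,\dots,n-1\}$.

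With $k<n$ fixed, Theorem~\ref{thm:asymplinear}(1) gives $\delta_p^L[\Zp^{n},p^{k},2t+1]\to 1$ as $p\to+\infty$. By the definition \eqref{eq:densitylinear} of $\delta_p^L[\Zp^{n},p^{k},2t+1]$, this density is precisely the probability that a uniformly random linear code of dimension $k$ in $\Zp^n$ has minimum Lee distance at least $2t+1$, which is the GV condition. The theorem follows.

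The only wrinkle I see is the fact that $R$ (and hence $k$) depends on $p$, whereas Theorem~\ref{thm:asymplinear}(1) is stated for a \emph{fixed} $k$. This is easily resolved: Lemma~\ref{lem:volumeLandau}(2) ensures that $\bbqlp{n,2t}$ stays bounded by a constant $M_{n,t}$ as $p\to+\infty$, so for every $p>M_{n,t}$ one has $\log_p(\bbqlp{n,2t})<1$ and therefore $k=Rn$ (taking the floor if one insists on integrality) lies in the \emph{fixed} range $\{1,\dots,n-1\}$. Applying Theorem~\ref{thm:asymplinear}(1) separately for each of the finitely many values that $k$ can take in that range yields a uniform conclusion, so the limit is genuinely~$1$.
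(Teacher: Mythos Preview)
Your proof is correct and follows essentially the same approach as the paper, which simply invokes $k(n)=Rn<n$ and Theorem~\ref{thm:asymplinear}. You have in fact been more careful than the paper: the paper's one-line proof glosses over both the verification that $R<1$ and the dependence of $k$ on $p$, whereas you supply a justification for the former and a clean resolution of the latter via Lemma~\ref{lem:volumeLandau}(2).
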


If we consider $n$ going to infinity, then the probability that a linear Lee metric code, chosen uniformly at random attains the GV bound is upper bounded by $\frac{p-2}{p-1}$ and lower bounded by $0.$
However, if we add an $\varepsilon$-environment, we obtain:
\begin{theorem}
    Let $1 \leq t \leq \lfloor \frac{nm-2}{4} \rfloor$ be integers, $0 <\varepsilon$ and let $\mC \subseteq \Zp^{n}$ be a linear Lee metric code of rate $R=1-\frac{1}{n}\log_{p}(\bbqlp{n,2t})-\varepsilon $, %cardinality $q^{s\ell k(n)}$,
    chosen uniformly at random. Then, the probability that $\mC$ has minimum distance at least $2t+1$ approaches $1$, for $n\rightarrow +\infty$.
    \begin{proof}
  Setting $k:= k(n) =Rn$ and considering the lower bound from \ref{eq:upperBoundsublim} we obtain
  $$\frac{(\bbqlp{n,2t}-1)\dstirling{n-1}{k-1}_{p}}{\dstirling{n}{k}_{p}} \sim \frac{1}{p^{\varepsilon n}}.$$
  For $n \to +\infty$ the statement follows.
    \end{proof}
\end{theorem}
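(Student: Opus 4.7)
The plan is to mimic the argument used for the GV-attaining case and to exploit the extra factor $p^{\varepsilon n}$ afforded by lowering the rate by $\varepsilon$. Concretely, I would set $k := k(n) = Rn$ and use the lower bound from Corollary \ref{cor:boundsdensitylin}, namely
$$\delta_p^L[\Zp^{n},p^{k},2t+1] \geq 1 - \frac{(\bbqlp{n,2t}-1)\dstirling{n-1}{k-1}_{p}}{\dstirling{n}{k}_{p}}.$$
It then suffices to show that the subtracted ratio tends to $0$ as $n \to +\infty$.

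For that step, I would invoke the standard asymptotic estimates of the Gaussian binomial coefficients (as recalled in \cite[Lemma 3.4]{gruica2022densities}), which give
$$\frac{\dstirling{n-1}{k-1}_{p}}{\dstirling{n}{k}_{p}} \sim p^{-(n-k)},$$
and hence
$$\frac{(\bbqlp{n,2t}-1)\dstirling{n-1}{k-1}_{p}}{\dstirling{n}{k}_{p}} \sim \bbqlp{n,2t}\, p^{-(n-k)}.$$
Now the chosen rate gives $n - k = \log_{p}(\bbqlp{n,2t}) + \varepsilon n$, so $p^{-(n-k)} = \bbqlp{n,2t}^{-1}\, p^{-\varepsilon n}$, and the ratio above is asymptotically $p^{-\varepsilon n}$, which tends to $0$ as $n \to +\infty$. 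Plugging this back into the lower bound on the density yields the claim.

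I expect the only genuine subtlety to be dealing with the fact that $Rn$ need not be an integer, so one must check that replacing $k(n)$ by $\lfloor Rn \rfloor$ (or $\lceil Rn \rceil$) does not affect the conclusion. This only shifts the exponent of $p^{n-k}$ by a bounded constant and is therefore absorbed by the geometric decay $p^{-\varepsilon n}$. Once the lower bound from Corollary \ref{cor:boundsdensitylin} and the asymptotic estimate of the $p$-binomial coefficients are in hand, the argument is essentially a one-line substitution, which is why the proof can be kept very short; the work has already been done in establishing the container-free bipartite-graph bounds earlier in the section.
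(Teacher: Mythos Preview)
Your proposal is correct and follows essentially the same route as the paper's own proof: set $k=Rn$, apply the lower bound from Corollary~\ref{cor:boundsdensitylin}, use the asymptotics of the Gaussian binomials to reduce the subtracted term to $\bbqlp{n,2t}\,p^{-(n-k)}\sim p^{-\varepsilon n}$, and conclude. The only difference is that you spell out the intermediate step $\dstirling{n-1}{k-1}_p/\dstirling{n}{k}_p\sim p^{-(n-k)}$ and flag the integrality issue for $Rn$, both of which the paper leaves implicit.
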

The previous theorem matches the results on the asymptotic behaviour with respect to the Gilbert-Varshamov bound from \cite{byrne2022density}, where the analog was shown for general metrics arising from additive weights, including the Hamming, homogeneous and Lee metric.

\section{Concluding Remarks}\label{sec:comp}
From computational results in Python for small parameters, i.e., $m=4,5, t \le 3$ and $n \le 20$, we observe in all cases that the upper bound on the number of $t$-Lee-error-correcting codes obtained from the graph container is sharper than the upper bound obtained from the bipartite graph.
However, proving the stronger bound $2^{O(A_{m}(n, 2t+1))}$ whenever $2t+1 < (1-m^{-1}-c)n$, for any fixed $0 <c$ is much harder because one has to show that sets of size much larger than $A_{m}(n, 2t+1)$ must have many pairs of elements with distance at most $2t+1$. In the context of $H$-free graphs this stronger bound is a well-known conjecture of Kleitman and Winston in \cite{kleitman1982number}, where it served as a motivation for the development of the container algorithm. 
We think that this algorithm is still very interesting for studying the number of codes that attain a certain bound. Indeed, one could extend and transform these results to other bounds, i.e., the Singleton- or Plotkin bound and use then these bounds as a proxy for the size of the code when looking for saturation estimates that can be used in the container algorithm. However, this would go beyond the scope of this article and we were initially interested in the characterization of the intersection of two balls with equal Lee radii and distance $\ell$ between their respective centers. Since our upper bounds on the size of the intersection of two Lee metric balls only depend on the Lee volume for which explicit formulae are known, and hence can be computed fast, we suggest these results for considering covering properties of Lee metric codes.

\section{Acknowledgements}
Violetta Weger  is  supported by the European Union's Horizon 2020 research and innovation programme under the Marie Sk\l{}odowska-Curie grant agreement no. 899987.

\bibliographystyle{plain}
\bibliography{ourbib}

\appendix
\section{Appendix}
   \begin{lemma} \label{lem:appendix}
       Let $1 \le n,t$, $2 \le m, 0 \le i \le t$ and $0\leq r\leq m-1$ be integers such that $r=2t-m\lfloor \frac{2t}{m}\rfloor$. Then
       \begin{enumerate}
           \item $\left\lfloor \frac{2(t-1)}{m} \right\rfloor = \left\lfloor \frac{2t}{m} \right\rfloor  \textnormal{ if }r \geq 2.$
             \item If $m$ is an even number, $r$ is also even.
           \item If $m$ is an odd number and $r \le 1$ we have
            $$\binom{n-\lfloor \frac{2t}{m}\rfloor}{t-\lfloor \frac{2t}{m}\rfloor\left( \lfloor \frac{m}{2}\rfloor +1\right)-i} = 
  \binom{n-\lfloor \frac{2t}{m}\rfloor}{- \frac{2t-r(m+1)}{2m}-i}=\begin{cases}
      1 & \textnormal{if } \,  2t=m+1 \text{ and } i=0 \\
0 & \, \text{else}
  \end{cases}.$$
       \end{enumerate}
  \begin{proof}
      It is easy to see that 
      $$\left\lfloor \frac{2(t-1)}{m} \right\rfloor = \left\lfloor \frac{2t}{m} \right\rfloor  \textnormal{ if }r \geq 2.$$
       Furthermore, if $m$ is even (and $m':=m/2\in \mathbb N$) we have $r=2t-2m'\lfloor \frac{2t}{2m'}\rfloor = 2(t-m'\lfloor \frac{t}{m'}\rfloor)$, i.e., $r$ is also even.
       For $r=0$ we get $\frac{m}{2} \lfloor \frac{2t}{m}\rfloor = t$ and for $r=1$ we get $\frac{m}{2} \lfloor \frac{2t}{m}\rfloor = t-\frac{1}{2}$. For $m$ odd and $r <2$ we have 
  $$\left\lfloor\frac{m+2}{2}\right\rfloor \left\lfloor \frac{2t}{m}\right\rfloor = \frac{m+1}{2} \left\lfloor \frac{2t}{m}\right\rfloor = \frac{(2t-r)(m+1)}{2m} %= \frac{2tm-rm+2t-r}{2m} 
  = t + \frac{t}{m} - r\frac{m+1}{2m} \in \left\{t+\frac{t}{m}, t+\frac{2t-m-1}{2m}\right\},$$
  yielding the statement.
  \end{proof}
   \end{lemma}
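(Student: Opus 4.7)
The plan is to prove each of the three parts directly from the division identity $2t = m\lfloor 2t/m \rfloor + r$ with $0 \leq r \leq m-1$, which is simply the division-with-remainder definition of $r$. The three statements are arithmetic facts about floors, parities, and binomial coefficients with small parameters, so the proof will be essentially computational, with the bulk of the work concentrated in part~(3).

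For part~(1), I would observe that if $r \geq 2$, then $2(t-1) = m\lfloor 2t/m\rfloor + (r-2)$ with $r-2 \in \{0,\ldots,m-3\}$ still a valid remainder, so uniqueness of division with remainder yields $\lfloor 2(t-1)/m\rfloor = \lfloor 2t/m\rfloor$. For part~(2), writing $m = 2m'$, the identity $\lfloor 2t/(2m')\rfloor = \lfloor t/m'\rfloor$ gives $r = 2t - 2m'\lfloor t/m'\rfloor = 2(t - m'\lfloor t/m'\rfloor)$, so $r$ is manifestly even.

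The bulk of the proof is part~(3). First I would establish the algebraic identity by unfolding $\lfloor m/2\rfloor + 1 = (m+1)/2$ for odd $m$ and substituting $\lfloor 2t/m\rfloor = (2t-r)/m$ into $t - \lfloor 2t/m\rfloor(\lfloor m/2\rfloor+1)$. Standard manipulation produces $-(2t - r(m+1))/(2m)$, which establishes the first equality. Then to evaluate the binomial coefficient I would split on $r$. For $r = 0$: since $m$ is odd and $m \mid 2t$, we get $m \mid t$, so $\lfloor 2t/m\rfloor = 2t/m$ is even and the lower index reduces to $-t/m - i$, a negative integer (using $t \geq 1$), giving $0$. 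For $r = 1$: the identity $2t = m\lfloor 2t/m\rfloor + 1$ combined with $m$ odd forces $\lfloor 2t/m\rfloor$ to be odd, so writing $\lfloor 2t/m\rfloor = 2k+1$ the lower index collapses to $-k - i$. This is nonnegative only when $k = 0$ and $i = 0$, i.e.\ exactly when $2t = m+1$ and $i = 0$; in that case $\binom{n-1}{0} = 1$, matching the stated value.

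The main obstacle I expect is the bookkeeping in part~(3): making sure the lower index is genuinely an integer (so that the binomial coefficient is well-defined) and pinpointing the unique triple $(r,t,i) = (1,(m+1)/2,0)$ where it is nonnegative. The essential subtlety is the parity observation that $\lfloor 2t/m\rfloor$ must be odd when $r = 1$ and $m$ is odd; once this is in place, the remaining case analysis is routine.
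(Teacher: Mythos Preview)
Your proof is correct and follows essentially the same route as the paper: all three parts are handled via the division identity $2t = m\lfloor 2t/m\rfloor + r$, with parts (1) and (2) matching the paper almost verbatim and part (3) being the same computation of the lower index of the binomial coefficient followed by a case split on $r\in\{0,1\}$. Your treatment of part (3) is in fact slightly more explicit than the paper's, since you isolate the parity observation that $\lfloor 2t/m\rfloor$ must be odd when $r=1$ and $m$ is odd (so that the lower index $-k-i$ is an integer and the case $k=0$, $i=0$ is the unique nonnegative one), whereas the paper compresses this into the single line ending ``yielding the statement.''
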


 \begin{proof}[Proof of Lemma \ref{lem:volumeestimate}]
  We first determine the ratios of the volumes where both parameters $n$ and $t$ vary and begin with the case that $m$ is even. 
    We have
            \begingroup
\allowdisplaybreaks
      \begin{align*}
         &\frac{\bbql{n-1,t-1}}{\bbql{n,t}}\\
         &\leq  \frac{\sum_{i=0}^{\lfloor \frac{2(t-1)}{m}\rfloor}(-1)^{i}\binom{n-1}{i}\left( \sum_{j=0}^{t-1-\frac{mi}{2}}2^{j}\binom{n-1}{j}\binom{t-1-\frac{mi}{2}}{j}\right)}{\sum_{i=0}^{\lfloor \frac{2t}{m}\rfloor}(-1)^{i}\binom{n}{i}\left( \sum_{j=0}^{t-1-\frac{mi}{2}}2^{j+1}\binom{n}{j+1}\binom{t-\frac{mi}{2}}{j+1}\right)}\\
         & \leq \frac{\sum_{i=0}^{\lfloor \frac{2(t-1)}{m}\rfloor}(-1)^{i}\binom{n}{i}\left( \sum_{j=0}^{t-1-\frac{mi}{2}}2^{j}\binom{n-1}{j}\binom{t-1-\frac{mi}{2}}{j}\right)}{\sum_{i=0}^{\lfloor \frac{2t}{m}\rfloor}(-1)^{i}\binom{n}{i}\left( \sum_{j=0}^{t-1-\frac{mi}{2}}2^{j+1}\binom{n}{j+1}\binom{t-\frac{mi}{2}}{j+1}\right)}\\
         & \stackrel{(*)}{\leq} \frac{\sum_{i=0}^{\lfloor\frac{2(t-1)}{m}\rfloor}(-1)^{i}\binom{n}{i}\left( \sum_{j=0}^{t-1-\frac{mi}{2}}2^{j}\binom{n-1}{j}\binom{t-1-\frac{mi}{2}}{j}\right)}{\sum_{i=0}^{\lfloor \frac{2(t-1)}{m}\rfloor}(-1)^{i}\binom{n}{i}\cdot 2\left( \sum_{j=0}^{t-1-\frac{mi}{2}}2^{j}\binom{n-1}{j}\left(\frac{n}{j+1} \right)\binom{t-1-\frac{mi}{2}}{j} \left( \frac{t-\frac{mi}{2}}{j+1}\right)\right)}\\
        & \leq \frac{\sum_{i=0}^{\lfloor \frac{2(t-1)}{m}\rfloor}(-1)^{i}\binom{n}{i}\left( \sum_{j=0}^{t-1-\frac{mi}{2}}2^{j}\binom{n-1}{j}\binom{t-1-\frac{mi}{2}}{j}\right)}{\sum_{i=0}^{\lfloor\frac{2(t-1)}{m}\rfloor}(-1)^{i}\binom{n}{i}\left(\frac{2n}{t-\frac{mi}{2}} \right)\left( \sum_{j=0}^{t-1-\frac{mi}{2}}2^{j}\binom{n-1}{j}\binom{t-1-\frac{mi}{2}}{j} \right)}\\
         &\leq \frac{\sum_{i=0}^{\lfloor\frac{2(t-1)}{m}\rfloor}(-1)^{i}\binom{n}{i}\left( \sum_{j=0}^{t-1-\frac{mi}{2}}2^{j}\binom{n-1}{j}\binom{t-1-\frac{mi}{2}}{j}\right)}{\left(\frac{2n}{t}\right)\sum_{i=0}^{\lfloor \frac{2(t-1)}{m}\rfloor}(-1)^{i}\binom{n}{i}\left( \sum_{j=0}^{t-1-\frac{mi}{2}}2^{j}\binom{n-1}{j}\binom{t-1-\frac{mi}{2}}{j} \right)}\\
        &\leq \left( \frac{t}{2n}\right),
      \end{align*}
      \endgroup
where (*) can be derived from Lemma \ref{lem:binomprelim} and since either $\lfloor \frac{2t}{m}\rfloor= \lfloor \frac{2(t-1)}{m}\rfloor$ or $m\lfloor \frac{2t}{m}\rfloor = 2t$ (see  Lemma \ref{lem:appendix}). 
In the latter case the summand for  $i=\lfloor 2t/m \rfloor$ in the denominator is
equal to zero.

Iteratively this gives      
\begin{align*}
           \frac{\bbql{n,t}}{\bbql{n+i,t+i}} &\leq \left( \frac{t+i}{2(n+i)}\right) \cdot  \left( \frac{t+i-1}{2(n+i-1)}\right) \cdot \cdots \cdot \left( \frac{t+1}{2(n+1)}\right) \leq   \left( \frac{t+i}{2(n+i)}\right)^{i}.
      \end{align*}

For $m$ odd and all cases except $2t=m+1$
%$k=r=1$
this gives together with Lemma \ref{lem:appendix}

      \begingroup
\allowdisplaybreaks
      \begin{align*}
          &\frac{\bbql{n-1,t-1}}{\bbql{n,t}}\\
          &\leq \frac{\sum_{i=0}^{t-1}\binom{n}{i}\sum_{j=0}^{\lfloor 2(t-1)/m\rfloor}(-2)^{j}\binom{n-1}{j}\binom{n-j-1}{t-j(\lfloor m/2 \rfloor +1)-i-1}}{\sum_{i=0}^{t}\binom{n+1}{i}\sum_{j=0}^{\lfloor 2(t-1)/m \rfloor}(-2)^{j}\binom{n}{j}\binom{n-j}{t-j(\lfloor m/2 \rfloor +1)-i}}\\
          &\stackrel{(*)}{=} \frac{\sum_{i=0}^{t-1}\binom{n+1}{i+1} \left( \frac{i+1}{n+1}\right)\sum_{j=0}^{\lfloor 2(t-1)/m\rfloor}(-2)^{j}\binom{n}{j}\left( \frac{n-j}{n}\right)\binom{n-j}{t-j(\lfloor m/2 \rfloor+1)-i}\left( \frac{t-j(\lfloor m/2 \rfloor+1)-i}{n-j}\right)}{\sum_{i=0}^{t}\binom{n+1}{i+1} \left( \frac{i+1}{n+1-i}\right)\sum_{j=0}^{\lfloor 2(t-1)/m \rfloor}(-2)^{j}\binom{n}{j}\binom{n-j}{t-j(\lfloor m/2 \rfloor+1)-i}}\\
          &\leq \frac{\sum_{i=0}^{t-1}\binom{n+1}{i+1}\left(\frac{i+1}{n+1}\right)\left(\frac{t-i}{n}\right)\sum_{j=0}^{\lfloor 2(t-1)/m \rfloor}(-2)^{j}\binom{n}{j}\binom{n-j}{t-j(\lfloor m/2 \rfloor+1)-i}}{\sum_{i=0}^{t-1}\binom{n+1}{i+1}\left( \frac{i+1}{n+1-i}\right)\sum_{j=0}^{\lfloor 2(t-1)/m \rfloor}(-2)^{j}\binom{n}{j}\binom{n-j}{t-j(\lfloor m/2 \rfloor+1)-i}}\\
          &\leq \frac{\sum_{i=0}^{t-1}\binom{n+1}{i+1}\left(\frac{i+1}{n+1}\right)\left(\frac{t}{n}\right)\sum_{j=0}^{\lfloor 2(t-1)/m \rfloor}(-2)^{j}\binom{n}{j}\binom{n-j}{t-j(\lfloor m/2 \rfloor+1)-i}}{\sum_{i=0}^{t-1}\binom{n+1}{i+1}\left( \frac{i+1}{n+1}\right)\sum_{j=0}^{\lfloor 2(t-1)/m \rfloor}(-2)^{j}\binom{n}{j}\binom{n-j}{t-j(\lfloor m/2 \rfloor+1)-i}}\\
          &=\frac{t}{n},
      \end{align*}
          \endgroup 
          where (*) follows from Lemma \ref{lem:binomprelim}.
 Iteratively this leads to     
\begin{align*}
           \frac{\bbql{n,t}}{\bbql{n+i,t+i}} &\leq \left( \frac{t+i}{n+i}\right) \cdot  \left( \frac{t+i-1}{n+i-1}\right) \cdot \cdots \cdot \left( \frac{t+1}{n+1}\right) \leq   \left( \frac{t+i}{n+i}\right)^{i}.
      \end{align*}
      For the case $2t=m+1$
      we have
            \begingroup
\allowdisplaybreaks
      \begin{align*}
          &\frac{\bbql{n-1,t-1}}{\bbql{n,t}}\\
          &\leq \frac{\sum_{i=0}^{t-1}\binom{n+1}{i}\binom{n-1}{t-i-1}}{\sum_{i=0}^{t-1}\binom{n+1}{i}\left(-2n + \left( \frac{n}{t-i}\right)\binom{n-1}{t-i-1} \right)+\binom{n+1}{t}(1-2n)}\\
          &=  \frac{\sum_{i=0}^{t-1}\binom{n+1}{i}\binom{n-1}{t-i-1}-2t\left(\sum_{i=0}^{t-1}\binom{n+1}{i} + \binom{n+1}{t}\right)+\binom{n+1}{t}\left(\frac{t}{n}\right)+2t\left(\sum_{i=0}^{t-1}\binom{n+1}{i} + \binom{n+1}{t}\right)-\binom{n+1}{t}\left(\frac{t}{n}\right)}{\left( \frac{n}{t}\right)\left(\sum_{i=0}^{t-1}\binom{n+1}{i}\binom{n-1}{t-i-1}-2t\left(\sum_{i=0}^{t-1}\binom{n+1}{i} + \binom{n+1}{t}\right)+\binom{n+1}{t} \left(\frac{t}{n} \right) \right)}\\
          &= \frac{t}{n}+\frac{\frac{2t^2}{n}\left(\sum_{i=0}^{t-1}\binom{n+1}{i} + \binom{n+1}{t}\right)-\binom{n+1}{t}\left(\frac{t^2}{n^2}\right)}{\sum_{i=0}^{t-1}\binom{n+1}{i}\binom{n-1}{t-i-1}-2t\left(\sum_{i=0}^{t-1}\binom{n+1}{i} + \binom{n+1}{t}\right)+\binom{n+1}{t}\left(\frac{t}{n} \right)}.
      \end{align*}
          \endgroup
          Moreover
          \begin{align*}
              &\frac{\frac{2t^2}{n}\left(\sum_{i=0}^{t-1}\binom{n+1}{i} + \binom{n+1}{t}\right)-\binom{n+1}{t}\left(\frac{t^2}{n^2}\right)}{\sum_{i=0}^{t-1}\binom{n+1}{i}\binom{n-1}{t-i-1}-2t\left(\sum_{i=0}^{t-1}\binom{n+1}{i} + \binom{n+1}{t}\right)+\binom{n+1}{t}\left(\frac{t}{n} \right)}\\
              &\stackrel{(*)}{\leq} \frac{\frac{2t^2}{n}\left(\sum_{i=0}^{t-1}\left(\frac{e(n+1)}{i}\right)^{i} +\left(\frac{e(n+1)}{t}\right)^{t}\right)-\left(\frac{n+1}{t}\right)^{t}\left(\frac{t^2}{n^2}\right)}{\sum_{i=0}^{t-1}\left(\frac{n+1}{i}\right)^{i}\left(\frac{n-1}{t-i-1}\right)^{t-i-1}-2t\left(\sum_{i=0}^{t-1}\left(\frac{e(n+1)}{i}\right)^{i} + \left(\frac{e(n+1)}{t}\right)^{t}\right)+ \left(\frac{n+1}{t}\right)^{t}\left(\frac{t}{n} \right)}
              \\& \in O(n^{-1}),
          \end{align*}
          where (*) uses Lemma \ref{lem:binomprelim}. Hence
          $$\frac{\bbql{n-1,t-1}}{\bbql{n,t}} \leq \frac{C_{m}}{n},$$
          for a sufficiently large constant $C_{m}.$     
      Consequently,
      \begin{align*}
           \frac{\bbql{n,t}}{\bbql{n+i,t+i}} &= \frac{\bbql{n,t}}{\bbql{n+1,t+1}} \cdot \frac{\bbql{n+1,t+1}}{\bbql{n+2,t+2}} \cdots \frac{\bbql{n+i-1,t+i-1}}{\bbql{n+i,t+i}}\\
           &\leq \left( \frac{C_{m}}{n+i}\right)^{i}.
      \end{align*}
      This proves the first statement.

        We will now determine the ratios where only parameter $t$ varies.
        
           If $m$ is an even number we have
    \begin{align*}
        \frac{\bbql{n, t}}{\bbql{n-i, t}} &= \frac{\sum_{k=0}^{\lfloor 2t/m\rfloor}(-1)^{k}\binom{n}{k}\mathlarger{\mathlarger{\sum}}_{j=0}^{t-\lfloor m/2 \rfloor k}2^{j}\binom{n}{j}\binom{t-\lfloor m/2 \rfloor k}{j}}{\sum_{k=0}^{\lfloor 2t/m\rfloor}(-1)^{k}\binom{n-i}{k}\mathlarger{\mathlarger{\sum}}_{j=0}^{t-\lfloor m/2 \rfloor k}2^{j}\binom{n-i}{j}\binom{t-\lfloor m/2 \rfloor k}{j}}\\
        &\stackrel{(*)}{\le} \frac{\sum_{k=0}^{\lfloor 2t/m\rfloor}(-1)^{k}\binom{n}{k}}{\sum_{k=0}^{\lfloor 2t/m\rfloor}(-1)^{k}\left( \frac{n-i+1-k}{n-i+1}\right)\cdot \cdots \cdot \left( \frac{n-k}{n}\right) \binom{n}{k}\cdot \left(\frac{n-i+1-t+\lfloor m/2 \rfloor k}{n-i+1} \right)^{i}}\\
        &\leq \left( \frac{n-i +1}{n-i +1-t}\right)^{i},
    \end{align*}
    where (*) is an iterative application of Lemma \ref{lem:binomprelim}.
    Combining this with 
    $$\bbql{n-i,t}\leq \left(\frac{t+i}{2n}\right)^{i}\bbql{n,t+i}$$
    from Lemma \ref{lem:volumeestimate} gives the final statement.

    For the case that $m$ is an odd number we maximize 
    $$\left(\frac{n-i+1}{n-i+1-j} \right)^{i}\left(\frac{n-i-j+1}{n-i-j+1-t+j(\lfloor m/2 \rfloor +1)+k} \right)^{i}\left(\frac{n-i+2}{n-i+2-k} \right)^{i}$$
    under the constraints $0 \leq j \leq \lfloor 2t/m \rfloor$ and $0 \leq k \leq t$ and obtain
     \begin{align*}
        \frac{\bbql{n, t}}{\bbql{n-i, t}} = \frac{\sum_{k=0}^{t}\binom{n+1}{k}\mathlarger{\mathlarger{\sum}}_{j=0}^{\lfloor 2t/m\rfloor}(-2)^{j}\binom{n}{j}\binom{n-j}{t-j(\lfloor m/2 \rfloor +1)-k}}{\sum_{k=0}^{t}\binom{n-i+1}{k}\mathlarger{\mathlarger{\sum}}_{j=0}^{\lfloor 2t/m\rfloor}(-2)^{j}\binom{n-i}{j}\binom{n-i-j}{t-j(\lfloor m/2 \rfloor+1)-k}}
        \leq \left( \frac{n-i +1}{n-i+1-t}\right)^{i}.
    \end{align*}
    Using the first part of Lemma \ref{lem:volumeestimate} we derive the statement.
 \end{proof}
 \begin{proof}[Proof of Lemma \ref{lem:supersat01}] \label{lem:supersat01_proof}
 For $x \in \Zm^n$, let $K_{x}:=\{y \in \mC : d^{\textnormal{L}}(x,y) \le t\}$. Then
 $$\sum_{r=1}^{2t}W(t,r)|E_{r}|= \sum_{x \in \Zm^{n}}\binom{|K_{x}|}{2},$$
 since both terms count pairs $(x,\{y,z\}),$ where $x \in \Zm^{n}$ and $y,z \in \mC$ are distinct such that $d^{\textnormal{L}}(x,y),d^{\textnormal{L}}(x,z) \le t$.
Lemmata \ref{lem:mon} and \ref{lem:estimateinter} give
\begin{align} \label{eq:0}
    W(t,r) \le W(t,1) \le m\bbql{n-1,t-1} \le \begin{cases}
\frac{mt \bbql{n,t}}{2n} & \,\textnormal{if $m$ is even}, \\
\frac{mC_{m} \bbql{n,t}}{n} & \, \textnormal{if $m$ is odd and $t= \lceil \frac{m}{2}\rceil$},\\
\frac{mt \bbql{n,t}}{n} & \, \textnormal{in all other cases},\\
\end{cases}
\end{align}
for $1 \le r \le 2t$. The average value of $K_{x}$ over all $x \in \Zm^{n}$ is $\frac{|\mC|\bbql{n,t}}{m^n}.$ Thus, by the lower bound given in Lemma \ref{lem:binomprelim} we have
\begin{equation} \label{eq:1}
    \frac{|\mC|^{2}\bbql{n,t}^{2}}{10m^n} \le m^n \binom{|\mC|\bbql{n,t}/m^{n}}{2} \le \sum_{x \in \Zm^{n}}\binom{|K_{x}|}{2} = \sum_{r=1}^{2t}W(t,r)|E_{r}|.
\end{equation}
This proves the first inequality of the Lemma. Observing that
$ \sum_{r=1}^{2t}W(t,r)|E_{r}| \le W(t,1)|E|$, rearranging and combing the two inequalities \eqref{eq:0} and \eqref{eq:1}, the second inequality can be derived.

 \end{proof}
 \begin{proof}[Proof of Lemma \ref{lem:C1}]
  Let $v \in \mC_1$, the overlap of $\ballL{n,t,v}$ with other balls $\ballL{n,t,u}, u \in \mC_{1}$, has size
        \begin{align} \label{eq:sumoverlap}
            \left| \ballL{n,t,v} \cap \bigcup_{u \in \mC_{1}\setminus \{v\}}\ballL{n,t,u} \right| & \leq \sum_{r=1}^{2t}\textnormal{deg}_{r}(v) W(t,r) \leq \left( \sum_{r=1}^{20}\textnormal{deg}_{r}(v)W(t,r)\right)+ n^{5}W(t,21),
          %  \\
          %  & \leq \left\( \displaystyle\sum_{k=1}^{10}\textnormal{deg}_{k}(v)I(t,k)\right\)+ n^{5}I(t,11),
        \end{align}
        by Lemma \ref{lem:mon}. From Lemma \ref{lem:estimateinter} we have $W(t,r) \lesssim n^{-\lceil r/2 \rceil /2} \bbql{n,t}$ in all cases.
        This gives $n^{5}W(t,21) \lesssim \epsilon \bbql{n,t}$, for sufficiently large $n$. Moreover, using the definition of $\mC_{1}$, i.e., $\textnormal{deg}_{r}(v) \le \varepsilon n^{\lceil r/2 \rceil /2}$ for each $v \in \mC$, $1 \le r\le 20$ the sum \eqref{eq:sumoverlap} is $\lesssim \varepsilon \bbql{n,t}$. Thus, for every $v \in \mC_{1}$, the ball $\ballL{n,t,v}$ contains $(1-O(\varepsilon))\bbql{v,t}$ unique points not shared by other such balls, and thus the union of these balls has size $\geq (1- O(\varepsilon))\bbql{n,t}|\mC_{1}|.$ Since this union is contained in $\Zm^{n}$, we derive $|\mC_{1}| \leq (1+O(\varepsilon))H_{m}(n,t)$ by definition of the Hamming bound.
        \end{proof}
 \begin{proof}[Proof of Lemma \ref{lem:C2}]
            Pick a subset of maximum size $X \subseteq \mC_{2}$, where every pair of distinct elements of $X$ has a Lee distance greater than $t$. For each $x \in X$, one can find an $\lceil \log(n)/\varepsilon \rceil$ element subset $A_{x} \subseteq \mC \cap \ballL{n,20,x}$. For distinct $x,y \in X$ one has $d(x,y) > t \geq 60$, and so $A_{x} \cap A_{y} = \emptyset$.

       The intersection of one of the balls $\ballL{n,t,u}, u \in \bigcup_{x \in X}A_{x}$  with the union of all other such balls has size at most 
        \begin{align*}
        \frac{\log(n)}{\varepsilon}W(t,1) + n^{5}W(t,21) &\le \begin{cases*}
\left( \frac{\log(n)mt}{\varepsilon 2n} + \frac{n^5 t^{11}}{2n^{11}}\right)\bbql{n,t} &  \textnormal{if $m$ is even,}\\
 \left( \frac{\log(n)m C_{m}}{\varepsilon n} + \frac{n^5 C_{m}^{11}}{n^{11}}\right)\bbql{n,t} &  \textnormal{if $m$ is odd and $t=\lceil \frac{m}{2}\rceil$,} \\
 \left( \frac{\log(n)mt}{\varepsilon n} + \frac{n^5 t^{11}}{n^{11}}\right)\bbql{n,t} &  \textnormal{else,}
\end{cases*}\\
       & = o(\bbql{n,t}),
        \end{align*}
        as $n \to +\infty$ and $C_{m}$ is chosen sufficiently large.
      In fact, for each of the $\lceil \log(n)/\varepsilon\rceil -1$ points $\Tilde{u}$ that are in the same $A_{x}$ as $u$, the overlap is $\leq W(t,1)$ and each ball $\ballL{n,t,\Tilde{u}}$ with $\Tilde{u}$
in some $A_y$ other than $A_x$ the intersection with $\ballL{n,t,u}$ contains at most $\leq W(t,21)$ elements. Since $W(t,r)=0$ if $r >2t$ there are at most $\Delta(G[S]) \leq n^{5}$ other balls intersecting a given ball.
      So the union of the balls $\ballL{n,t,u}, u \in \bigcup_{x \in X}A_{x}$ has size $|X|\lceil \log(n)/\varepsilon \rceil \cdot (1-o(1))\bbql{n,t}.$ Since this quantity is at most $m^{n}$ this gives
      $$|X| \leq \frac{m^{n}}{\lceil \log(n)/\varepsilon \rceil \cdot (1-o(1))\bbql{n,t}} \lesssim \frac{\varepsilon H_{m}(n,t)}{\log(n)}.$$
      Note that $\bigcup_{x\in X}\ballL{n,t,x}= \Zm^{n}$ and each pair of elements of $\ballL{n,t,x}$ has Lee distance at most $2t$. Since independent sets in $G[S_{2}]$ are formed by elements which have Lee distance greater than $2t$, one can obtain an independent set by choosing at most one element from $\ballL{n,t,x} \cap S_{2}$ for each $x \in X$. Since $|\ballL{n,t,x} \cap S_{2}| \leq n^{5}$, this gives
      \begin{align*}
          i(G[S_{2}]) \leq n^{5|X|} \leq 2^{O(\varepsilon H_{m}(n,t))}. 
      \end{align*}
      \end{proof}
       \begin{proof}[Proof of Lemma \ref{lem:supersat2b}]
For $x \in \Zm^n$, let $K_{x}:=\{y \in \mC : d^{\textnormal{L}}(x,y) \le t\}$ and for $r \in \mathbb{N}$ set $S_{r} := \{ x \in \Zm^n : |K_{x}|=r\}$. Then $m^n + \gamma \bbql{n,t} \le |\mC|\bbql{n,t} = \sum_{r \ge 1} r |S_{r}|.$ So the number of pairs in $\mC$ of distance at most $2t$ is at least
$$ \frac{\gamma \bbql{n,t}}{m\bbql{n-1,t-1}} \le \frac{\gamma \bbql{n,t}}{W(t,1)} \le \frac{1}{W(t,1)}\sum_{r \ge 1}|S_{r}|(r-1) \le \frac{1}{W(t,1)}\sum_{r \ge 1}|S_{r}|\binom{r}{2}.$$
    Now the statement can be derived from Lemma \ref{lem:volumeestimate}.
      \end{proof}
        \begin{proof}[Proof of Lemma \ref{lem:func1}]
         Using structural induction we observe that $I \subseteq P \cup f(P)$. When the algorithm terminates we know that $\Delta (G_{i}) < \Delta$ and hence $|E(G_{i})| < (|V(G_{i})|\Delta/2)$. Thus Lemma \ref{lem:supersat2b} gives us $|f(P)|= |V(G_{i})| \leq  (1+\varepsilon)H_{m}(n,t)$. To prove that $|P| \leq \varepsilon H_{m}(n,t)\log(n)/n$  we distinguish two stages of the algorithm, according to the size of $V(G_{i})$.
    \begin{itemize}
        \item Let $P_{1}$ denote the set of nodes $u \in P$ added to $P$ when $|V(G_{i-1})| \geq 2H_{m}(n,t) $.
           \item Let $P_{2}$ denote the set of nodes $u \in P$ added to $P$ when $|V(G_{i-1})| < 2H_{m}(n,t).$
    \end{itemize}
     As we add nodes to $P_1$, by Lemma \ref{lem:supersat01} we remove at least a $\beta \gtrsim n/H_{m}(n,t)$ fraction of nodes from $G_{i-1}$ to get $G_{i}$.
 Thus $(1-\beta)^{k}m^{n} \leq 2H_{m}(n,t)$. By Lemma \ref{lem:volumeLandau} we have $n/H_{m}(n,t) \rightarrow 0$ as $n \to +\infty$, such that $\beta \leq \log \left( \frac{1}{1-\beta}\right)$ for sufficiently large $n$ and therefore 
    $$|P_{1}| \leq \frac{\log\left( \frac{m^{n}}{2H_{m}(n,t)}\right)}{\log \left( \frac{1}{1-\beta}\right)}\le \frac{\log(\bbql{n,t)}}{\beta}\stackrel{(*)}{\lesssim} \frac{\log(n)H_{m}(n,t)}{n},$$
where (*) follows from Lemmata \ref{lem:binomprelim} and \ref{lem:leebounds}.
   After step $k$ we remove at least $\varepsilon n$ vertices at each step, $0<\varepsilon$ is chosen sufficiently small. So we have 
    $$|P_{2}| \leq \frac{2H_{m}(n,t)}{\varepsilon n}.$$

This gives $$|P|= |P_1|+|P_2| \lesssim \frac{ H_{m}(n,t)\log(n)}{n}.$$
Moreover
$$|P \cup f(P)| \lesssim \frac{ H_{m}(n,t)\log(n)}{n} + (1+ \varepsilon)H_{m}(n,t),$$
and thus
 $$i(G[P \cup f(P)]) \leq 2^{|P|} \cdot i(f(P)) \leq 2^{(1+2\varepsilon)H_{m}(n,t)}.$$
      \end{proof}
\end{document}